\documentclass[letterpaper, 11pt]{amsart}
\usepackage{macros}
\usepackage{fullpage}
\usepackage[most]{tcolorbox}
\usepackage{setspace}

\title{Spatial Mixing and Deterministic Approximate Counting of Multi-spin Systems beyond Bounded Degree Graphs}
\usepackage[foot]{amsaddr}
\newboolean{doubleblind}
\setboolean{doubleblind}{False}
\ifdoubleblind
    \author{Author(s)}
\else
    \author{Zhidan Li, Kuan Yang}
    \address[Zhidan Li]{School of Computer Science, Shanghai Jiao Tong University, Shanghai, China. \textnormal{Email: \url{yueEnTeRnAL@sjtu.edu.cn}}.}
    \address[Kuan Yang]{John Hopcroft Center for Computer Science, Shanghai Jiao Tong University, Shanghai, China. \textnormal{Email: \url{kuan.yang@sjtu.edu.cn}}.}
\fi

\usepackage{notations}
\numberwithin{equation}{section}

\begin{document}

\begin{abstract}
    We develop a framework for deterministic approximate counting of multi-spin systems beyond bounded-degree graphs. 
    The algorithm recursively constructs rational polytopes containing the true marginal vectors and uses linear-fractional programming to obtain certified bounds on marginal ratios. 
    For positive interactions on graphs of polynomial connective constant $D$, we establish strong spatial mixing and a fully polynomial-time approximation scheme (\textbf{FPTAS}) whenever $Dc<1$, where $c$ bounds the Birkhoff contraction coefficients of the interactions.
    
    We further extend the framework to proper colorings of sparse Erd\H{o}s-R\'{e}nyi random graphs using recursion on permissive blocks. For every fixed $\eta\in(0,1)$, sufficiently large fixed $d$, and fixed integer $q\ge(2+\eta)d$, we obtain an \textbf{FPTAS} for counting proper $q$-colorings of $G\sim\mathcal G(n,d/n)$ with high probability over $G$.
    This improves the leading constant $3$ in the earlier counting guarantee of Yin and Zhang (APPROX/RANDOM, 2016) to $2$, and asymptotically matches the spatial mixing regime established by Yin (ICALP, 2014).
\end{abstract}

\maketitle

\section{Introduction} \label{sec:introduction}
The multi-spin system is a statistical physics model reflecting the global behavior of local interactions.
Their partition functions encode many counting problems in combinatorics, as well as key thermodynamic quantities in statistical mechanics. Exact evaluation of partition functions is $\#\mathbf P$-hard for broad classes of interactions \cite{CC19,CCL13}, making efficient approximation algorithms a central object of study in theoretical computer science.

An instance of $q$-spin systems is a tuple $\Phi = (G = (V, E), [q], \boldsymbol A = (A_e)_{e \in E}, \boldsymbol\lambda = (\lambda_v)_{v \in V})$ where $G = (V, E)$ is the underlying graph, the edge interactions $A_e \in \mathbb R_{\ge 0}^{q \times q}$ is a non-zero symmetric matrix for every $e \in E$, and $\lambda_v \in \mathbb R_{> 0}^q$ is the external field associated with a vertex $v \in V$.
The partition function of $\Phi$ is defined as
\begin{align*}
    \Z(\Phi) = \Z_G(\boldsymbol\lambda, \boldsymbol A) \defeq \sum_{\sigma : V \to [q]} w(\sigma), \quad \text{where } \  w(\sigma) = \prod_{v \in V} \lambda_v(\sigma(v)) \prod_{e = (u, v) \in E} A_e(\sigma(u), \sigma(v))
\end{align*}
gives the weight of configuration $\sigma$.
The corresponding Gibbs distribution $\mu_\Phi$ is then defined as
\begin{align} \label{eq:spin-Gibbs-distribution}
    \forall \sigma \in [q]^V, \quad \mu_\Phi(\sigma) = \frac{w(\sigma)}{\Z_G(\boldsymbol\lambda, \boldsymbol A)}.
\end{align}

Correlation decay provides a powerful route to deterministically approximating partition functions.
For a self-reducible problem, the partition function can be recovered from a sequence of marginal probabilities. A correlation-decay algorithm estimates these marginals by truncating a suitable recursion and controlling the effect of the resulting boundary error.
This approach has led to fully polynomial-time deterministic approximation schemes (\textbf{FPTAS}es) for the hard-core model \cite{Weitz06}, matchings \cite{BGKNT07}, colorings and multi-spin systems \cite{GK12,LY13}, and counting problems in the Holant framework \cite{YZ13,LWZ14}.
An underlying probabilistic property is \emph{strong spatial mixing} (SSM), which indicates that the influence of a boundary condition decays with the distance to the vertices where that condition changes.



Unfortunately, when the number of spins $q$ is greater than $2$, turning spatial mixing into an efficient algorithm often presents a challenge.
To obtain an \textbf{FPTAS}, a recursive marginal estimator must control errors for the approximate messages that it actually produces.
However, prior analyses over independent coordinate ranges may lose constraints shared by the coordinates of a probability vector, including normalization and model-specific marginal bounds.
The resulting contraction condition can be stronger than the condition needed to compare true marginals.
In addition, if the underlying graphs are not bounded-degree, vertices of large degree can cause both local error amplification and a large number of recursive calls.
The connective-constant approach controls the growth of self-avoiding walks and has allowed correlation-decay methods to handle certain models beyond bounded-degree graphs \cite{SSY13,SSSY17}.
Extending this approach to multi-spin systems requires simultaneous control of the admissible messages and the cost of the recursion.
For example, counting proper colorings of sparse random graphs illustrates both difficulties.
For $G\sim\mathcal G(n,d/n)$, Yin \cite{Yin14} established, with high probability, strong spatial mixing at an arbitrarily fixed vertex when $q\ge\alpha d+\beta$ for every fixed $\alpha>2$ and a sufficiently large constant $\beta$.
The deterministic counting method of Yin and Zhang \cite{YZ16} applies in a regime with leading constant $3$.
Closing this gap requires a recursion that preserves more of the structure of marginal distributions while accommodating vertices of unbounded degree.

In this work, we study strong spatial mixing and deterministic approximate counting for multi-spin systems on graphs with potentially unbounded degree, but with bounded \emph{connective constant}.
The connective constant is a quantity associated with the number of self-avoiding walks.
Bounded connective constant typically allows for locally dense parts in the underlying graphs, while from the global perspective, such graphs remain similar to those with bounded degree.

We first generalize the result in \cite{SSY13,SSSY17} to establish strong spatial mixing of multi-spin systems with positive interactions on graphs with bounded connective constant.
Then our main result is developing a new framework to design the $\mathbf{FPTAS}$ based on correlation decay by recursively constructing \emph{rational polytopes containing the true marginal vector}. The polytopes retain constraints among marginal coordinates, and linear-fractional optimization produces certified bounds for the factors in the marginal-ratio recursion.
In addition, we apply our algorithm to approximate the number of proper $q$-colorings on sparse random graphs and improves the best known algorithmic result.

\subsection{Main results}
Our first result treats multi-spin systems with positive interactions on graphs of bounded connective constant.
Writing $D$ for a bound on the growth of self-avoiding walks and $c$ for a uniform bound on the Birkhoff contraction coefficients of the interactions, we establish SSM whenever $Dc<1$.
We then give a fully polynomial-time approximation scheme (\textbf{FPTAS}) for the partition function under the same condition.
The external fields may vary arbitrarily across vertices, and the graph need not have bounded maximum degree.

Our second result concerns proper colorings of sparse Erd\H{o}s--R\'enyi random graphs.
For every fixed $\eta\in(0,1)$, sufficiently large fixed $d$, and fixed integer $q\ge(2+\eta)d$, we give a deterministic FPTAS that succeeds with high probability over $G\sim\mathcal G(n,d/n)$.
This improves the leading constant in the counting guarantee of \cite{YZ16} and asymptotically matches the parameter range of the fixed-vertex spatial-mixing result of \cite{Yin14}\footnote{
    We mention here that spatial mixing in \cite{Yin14} is slightly different: it considers strong spatial mixing only on a \emph{fixed} vertex.
}.
Here the high-probability guarantee is over the input graph; the counting algorithm itself is deterministic.

\subsubsection*{Computation model}
Before formally stating our main result, we clarify the computation model here.
In this work, we employ the classical \emph{deterministic Turing machine}, which can be described as follows.    Rational numbers are encoded by binary bits, and we allow all common bit operations, \EG, arithmetic operations and shifting operations.
For an input $\Phi$, the length of $\Phi$ is the total number of bits to encode $\Phi$ in the model.
We use $\abs\Phi$ to denote its length.
For a single element $x$ in the model, we use $\bit(x)$ to denote the number of bits to encode $x$.
The time complexity of an algorithm is the bit complexity of it, \IE, the number of bit operations used by it.

\vspace{0.5em}

Now we formally introduce our results.
Our first result is on the \emph{spatial mixing} of multi-spin system with all positive interactions.
For two probability distributions $p, p'$ on $\Omega$, the total variation distance between $p$ and $p'$ is defined as
\begin{align*}
    \TV{p}{p'} = \frac12 \sum_{c \in \Omega} \abs{p(c) - p'(c)}.
\end{align*}
\begin{definition}[Strong spatial mixing] \label{def:spatial-mixing}
    Fix a graph $G = (V, E)$ and a Gibbs distribution on $G$.
    We say $\mu$ exhibits \emph{strong spatial mixing} of rate $\rho < 1$ if for two different partial assignments $\sigma, \tau$ on a vertex subset $\Lambda \subseteq V$ and $v \in V \setminus \Lambda$,
    \begin{align*}
        \TV{\mu(v \gets \cdot \;|\; \Lambda \gets \sigma)}{\mu(v \gets \cdot \;|\; \Lambda \gets \tau)} \le \poly{\abs V} \rho^{\dist_G(v, \Delta(\sigma, \tau))}
    \end{align*}
    where $\Delta(\sigma, \tau) \defeq \set{v \in \Lambda \mid \sigma(v) \neq \tau(v)}$ and $\dist_G(\cdot, \cdot)$ is the distance metric on $G$.
\end{definition}

For a symmetric matrix $A$ with positive entries, define its \emph{Birkhoff projective diameter} as
\begin{align} \label{eq:Birknoff-projective-diameter}
    \Delta_B(A) \defeq \ln\max_{i, j, k, \ell \in [q]} \frac{A_{ik} A_{j\ell}}{A_{i \ell} A_{j k}}
\end{align}
and the \emph{Birkhoff contraction coefficient} as
\begin{align} \label{eq:Birknoff-contraction-coefficient}
    c_B(A) \defeq \tanh\ab(\frac{\Delta_B(A)}{4}).
\end{align}
The coefficient $c_B(A)$ measures the contraction of positive vectors under multiplication by $A$ in the Hilbert metric.

We will show that the Gibbs distribution of a multi-spin system exhibits the following spatial mixing property related to the \emph{connective constant}\footnote{
For simplicity to design the algorithm, we use an alternative definition of connective constant in this work.
The difference between it and the standard one does no harm on \Cref{thm:multi-spin-system}.
}. For convenience, we use the following characterization of connective constants.
A self-avoiding walk of $G = (V, E)$ of length $\ell$ is a simple path $v_0, \ldots, v_\ell$ in $G$.
Denote by $\SAW_G(v, \ell)$ the set of self-avoiding walks of length $\ell$ starting from $v$, and define
$$
    S_\ell(G) \defeq \max_{v \in V} \;\abs{\SAW_G(v, \ell)}
$$
with convention $S_0(G) = 1$.
We say $\boldsymbol{\mathcal G} = (\mathcal G_n)_{n \ge 1}$ where $\mathcal G_n$ is a family of $n$-vertex graphs has a \emph{polynomial connective constant} $D \ge 1$ if there are constants $K \ge 1$ and $\alpha \ge 0$ such that
\begin{align} \label{eq:polynomial-connective-constant}
    \forall n \ge 1, \ell > 0, G \in \mathcal G_n, \quad S_\ell(G) \le K n^\alpha D^\ell.
\end{align}
\begin{theorem}[Spatial mixing] \label{thm:multi-spin-system-spatial-mixing}
    Fix an integer $q \ge 3$, and constants $K \ge 1, \alpha > 0, c > 0 $ and $D \ge 1$ such that $D  c < 1$.
    Let $\boldsymbol{\mathcal G} = (\mathcal G_n)_{n \ge 1}$ satisfy \eqref{eq:polynomial-connective-constant}.
    For every instance $\Phi = (G, [q], \boldsymbol A, \boldsymbol\lambda)$ with positive interactions, $G \in \mathcal G_n$ and $c_B(A_e) \le c$ for every $e \in E$, the Gibbs distribution $\mu_\Phi$ exhibits strong spatial mixing of rate $D c$.
\end{theorem}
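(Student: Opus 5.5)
We propose to prove \Cref{thm:multi-spin-system-spatial-mixing} by a self-avoiding-walk tree recursion, analyzed in the Hilbert projective metric. The Hilbert metric is the natural metric in which Birkhoff's contraction coefficient acts. For positive vectors $x, y \in \mathbb R_{>0}^q$ write
\begin{align*}
    d_H(x, y) = \ln \max_{i, j} \frac{x_i y_j}{x_j y_i}.
\end{align*}
Marginals are probability vectors, which are positive once conditioned on a partial assignment (positive interactions, positive fields). The two conditionings agree on $\Lambda \setminus \Delta(\sigma,\tau)$, so we absorb that common boundary into the instance and reduce to comparing two boundary conditions that differ only on $\Delta = \Delta(\sigma, \tau)$.

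The first step is the recursion. Fix $v$ and build the Godsil/Weitz-style self-avoiding walk tree $T = T_{\mathrm{SAW}}(G, v)$, in the multi-spin form used by Gamarnik--Katz and Lu--Yin. Cycle closures become pinned leaves, so the marginal at $v$ equals the root marginal of the tree instance with those pins. On a tree, the unnormalized marginal message satisfies the standard recursion
\begin{align*}
    m_u \propto \lambda_u \odot \prod_{w \text{ child of } u} A_{uw}\, m_w
\end{align*}
with coordinatewise products. Under $d_H$ this is controlled by two facts. First, coordinatewise products and multiplication by $\lambda_u$ are nonexpansive: $d_H(x\odot x', y \odot y') \le d_H(x,y) + d_H(x',y')$, and $d_H(\lambda \odot x, \lambda \odot y) = d_H(x, y)$. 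Second, Birkhoff's theorem gives $d_H(A m, A m') \le c_B(A)\, d_H(m, m') \le c\, d_H(m, m')$. Hence
\begin{align*}
    d_H(m_v^\sigma, m_v^\tau) \le c \sum_{w} d_H(m_w^\sigma, m_w^\tau),
\end{align*}
and unrolling over the tree bounds $d_H$ at the root by a sum over tree paths from $v$ to a disagreeing node, each weighted by $c^{\text{length}}$.

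The second step handles the base case and the counting. A node whose subtree contains no copy of a vertex in $\Delta$ has identical messages under both conditionings, so it contributes $0$. The difficulty is that a pinned disagreement vertex gives infinite $d_H$. To avoid this, we stop one level earlier. Let $w$ be a child of the pinned node. Its image $A_{uw} e_s$ is a column of $A$, and any two columns of $A$ are at Hilbert distance at most $\Delta_B(A)$. With $\Delta_B \le 4\operatorname{artanh}(c) = O(1)$, each such leaf term is at most a constant $C(c)$. The tree paths reaching $\Delta$ correspond to self-avoiding walks in $G$ of length $\ell \ge \dist_G(v, \Delta) =: L$. Summing over $\ell$ and using \eqref{eq:polynomial-connective-constant},
\begin{align*}
    d_H(\mu^\sigma_v, \mu^\tau_v) \le C \sum_{\ell \ge L} S_\ell(G)\, c^{\ell} \le C K n^\alpha \sum_{\ell \ge L} (Dc)^\ell = \frac{C K n^{\alpha}}{1 - Dc}\, (Dc)^{L}.
\end{align*}
This uses $Dc < 1$, and the sum is finite because $\ell \le n$.

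The final step converts this to total variation. For probability vectors $p, p'$ we have $\TV{p}{p'} \le \tfrac12(e^{d_H(p,p')} - 1)$, and in any case $\TV{p}{p'} \le 1$. So $\TV{p}{p'} = O(d_H(p,p'))$ when $d_H \le 1$, and otherwise the bound holds trivially after enlarging the polynomial factor. This yields rate $Dc$ with prefactor $\poly{n}$.

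The main obstacle I expect is the boundary and pinning issue. Pinned vertices and cycle-closing leaves in the SAW tree produce degenerate, non-positive messages, so the Hilbert-metric contraction must be applied only from the parent of each pinned leaf. Moreover, the multi-spin SAW tree construction does not directly give a clean vertex-to-edge recursion; the messages must be set up as ratios indexed by the vertex's own spin, with the appropriate conditioning. If Birkhoff contraction on the edge map $m \mapsto A m$ is awkward, I would instead work with edge messages $\tilde m_{w} = A_{uw} m_w$ and verify the same inequality. That change only shifts the constant $C$, not the rate $Dc$.
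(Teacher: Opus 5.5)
Your proposal has a genuine gap in the first step, which you flagged yourself. For $q\ge3$ there is no self-avoiding-walk tree with pinned cycle-closing leaves whose root marginal equals $\mu_\Phi(v\gets\cdot)$ and whose messages obey $m_u\propto\lambda_u\odot\prod_w A_{uw}m_w$; Weitz's theorem is a two-spin statement.

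A concrete counterexample is the symmetric Potts model $A=(\beta-1)I+J$ ($J$ the all-ones matrix, $\beta\neq1$, uniform fields) on a triangle $v,u,w$. Such a tree gives $\mu_T(v\gets a)\propto (A^3)(a,\pi_1)\,(A^3)(a,\pi_2)$ for the two leaf pins $\pi_1,\pi_2$. The diagonal and off-diagonal entries of $A^3$ differ by $(\beta-1)^3$, so once $q\ge3$ no choice of pins reproduces the true, uniform marginal.

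The reason is that the telescoping in \Cref{lem:spin-ratio-recursion} makes the removed copy of $v$ act as $A(a,\cdot)$ or $A(b,\cdot)$ according to the pair $(a,b)$ being compared. The child instances therefore depend on $(a,b)$. What Gamarnik--Katz and Lu--Yin use is the computation tree of this pair-indexed recursion, not a tree-structured Gibbs measure. So your inequality $d_H(m_v^\sigma,m_v^\tau)\le c\sum_w d_H(m_w^\sigma,m_w^\tau)$ is not available as stated, and switching to edge messages $A_{uw}m_w$ does not address this.

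The repair, which is the paper's argument, keeps all your other ingredients.
\begin{itemize}
\item Work directly on the graph. Absorb the whole pinning into fields $\lambda_x^\pi(c)=\lambda_x(c)\prod_{y\in\Lambda\cap N_G(x)}A_{(x,y)}(c,\pi(y))$. Then the local term $H^{\mathrm{loc}}(x)=d_H(\lambda_x^\sigma,\lambda_x^\tau)\le\sum_{y\in\Delta\cap N_G(x)}\Delta_B(A_{(x,y)})$ replaces your column-distance leaf bound, and no degenerate messages arise.
\item Call two instances on the same vertex set admissible if their fields have ratio $\lambda_x^\sigma/\lambda_x^\tau$ at every vertex. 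The pinnings of \Cref{def:pinnings} with the same $(a,b)$ preserve admissibility.
\item For each pair, Birkhoff gives $\bigl|\ln\frac{(Ap)_a}{(Ap)_b}-\ln\frac{(Ap')_a}{(Ap')_b}\bigr|\le c\,d_H(p,p')$. Hence $d_H$ at $x$ is at most $H^{\mathrm{loc}}(x)$ plus $\max_{a,b}\sum_i c\,d_H(\cdot)$ over the $(a,b)$-children.
\item The crucial device is the majorant $\mathrm{Inf}(S,x)=H^{\mathrm{loc}}(x)+\sum_i c_B(A_{(x,x_i)})\,\mathrm{Inf}(S\setminus\{x\},x_i)$. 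It depends only on the vertex set and $x$, and is proved by induction to dominate $d_H$ for every admissible pair. This removes the dependence on $(a,b)$.
\item Because the pairs enter through a maximum rather than a sum, there is no loss of a factor $q^2$ per level.
\end{itemize}
Unfolding $\mathrm{Inf}$ gives exactly your sum over self-avoiding walks into $\Delta$ with weight $c^{\ell-1}\Delta_B$. Your counting via the connective constant and the conversion to total variation then go through unchanged.
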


The same sufficient condition yields a deterministic approximation scheme
to the partition function $\Z_G(\boldsymbol \lambda, \boldsymbol A)$.
\begin{theorem}[Multi-spin systems] \label{thm:multi-spin-system}
    Fix an integer $q \ge 3$, and constants $K \ge 1, \alpha > 0, c > 0 $ and $D \ge 1$ such that $D  c < 1$.
    Let $\boldsymbol{\mathcal G} = (\mathcal G_n)_{n \ge 1}$ satisfy \eqref{eq:polynomial-connective-constant}. There exists a deterministic algorithm that, given an instance $\Phi = (G, [q], \boldsymbol A, \boldsymbol\lambda)$ with positive interactions, $G \in \mathcal G_n$ and $c_B(A_e) \le c$ for every $e \in E$, and a binary rational $\eps \in (0, 1)$, it outputs an $\eps$-approximation to $\Z_G(\boldsymbol\lambda, \boldsymbol A)$ in 
    \[
    \biggl(\frac{n + \abs{\Phi} + \bit(\eps)}{\eps}\biggr)^C
    \]
    bit operations, where $C$ depends only on $q, K, \alpha, D$ and $c$.
\end{theorem}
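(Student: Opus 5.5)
We reduce approximate counting to approximating marginal ratios through self-reducibility. Order the vertices $v_1,\dots,v_n$ and fix a target configuration $\sigma^\star$; for concreteness take, at each step, the spin maximizing $\lambda_{v_i}$, so that its marginal is at least a constant fraction $\ge 1/\poly{q,\exp(\Delta_B)}$ once the dependence on $\lambda$ is normalized out. Then
\[
\Z_G(\boldsymbol\lambda,\boldsymbol A)=\frac{w(\sigma^\star)}{\prod_{i=1}^n \mu(v_i\gets\sigma^\star(v_i)\mid v_1,\dots,v_{i-1}\gets\sigma^\star)}.
\]
It therefore suffices to approximate each conditional marginal within a factor $1\pm\eps/(2n)$ in time $\poly{n,\abs\Phi,\bit(\eps),1/\eps}$. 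To do this we compute a marginal ratio $R_v(i,j)=\mu(v\gets i\mid\cdot)/\mu(v\gets j\mid\cdot)$ via the Weitz-style tree recursion on the self-avoiding walk tree. At a vertex $v$ with children $u_1,\dots,u_k$ (with boundary pinning on cycle-closing leaves),
\[
R_v(i,j)=\frac{\lambda_v(i)}{\lambda_v(j)}\prod_{t=1}^k\frac{\sum_{c}A_{vu_t}(i,c)\,p_{u_t}(c)}{\sum_{c}A_{vu_t}(j,c)\,p_{u_t}(c)},
\]
where $p_{u_t}$ is the marginal vector of $u_t$ in the subtree with earlier siblings pinned.

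The algorithm is the polytope version of truncated recursion. For each node of the SAW tree at depth less than $L$, we recursively build a rational polytope $P_u\subseteq\Delta_q$ certified to contain the true $p_u$. At depth $L$ we take $P_u$ to be a crude a priori polytope: normalization together with the bounds $p_u(c)/p_u(c')\in[\lambda_u(c)/\lambda_u(c')\cdot e^{\pm\deg\cdot\Delta_B}]$, or simply the simplex intersected with the set where ratios are bounded by $e^{\Delta_B}$ relative to the field. From the children's polytopes, each factor's min and max over $p\in P_{u_t}$ is a linear-fractional program, which we solve exactly (via Charnes--Cooper, as an LP over rationals). This yields interval bounds $[\underline R_v(i,j),\overline R_v(i,j)]$, and $P_v$ is defined as the polytope $\{p\in\Delta_q: \underline R_v(i,j)p(j)\le p(i)\le \overline R_v(i,j)p(j)\}$. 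Correctness, namely that $p_v\in P_v$, follows by induction.

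The contraction step is the key analytic input, and it is the same one used for \Cref{thm:multi-spin-system-spatial-mixing}. For each factor, the Hilbert-metric diameter of the image $\{A p : p\in P_u\}$ is at most $c_B(A)$ times the Hilbert diameter of $P_u$, by Birkhoff's theorem. The log-ratio width of $P_v$ is bounded by the sum over children of these image widths. Writing $W_u$ for the Hilbert diameter of $P_u$, we get $W_v\le c\sum_t W_{u_t}$, so unrolling gives
\[
W_{\mathrm{root}}\le \sum_{\text{SAW of length }L} c^L W_{\mathrm{leaf}}\le Kn^\alpha (Dc)^L\, W_0.
\]
The main obstacle is that leaves at large depth can have large degree, so the crude bound $W_0$ is not uniformly bounded. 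We handle this by using the leaf polytope defined only from a single edge: the Hilbert distance of $A p$ across all $p\in\Delta_q$ is at most $\Delta_B(A)\le 4\,\mathrm{artanh}(c)$, a constant. We therefore stop one level deeper and apply this bound to the image rather than to $p$. Choosing $L=O(\log(n/\eps))$ with constant depending on $K,\alpha,Dc$ makes the root width at most $\eps/(2n)$, which gives the required multiplicative accuracy on the marginal.

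It remains to bound the running time. The number of nodes of the truncated tree is $\sum_{\ell\le L}\sum_v S_\ell\le nKn^\alpha D^{L}=\poly{n/\eps}$ by \eqref{eq:polynomial-connective-constant}, because $D^L=(1/\eps')^{O(1)}$ with exponent depending on $\log D/\log(1/(Dc))$. Each node solves $O(q^2)$ LPs in dimension $q$. The remaining issue is bit growth: we round each $\underline R,\overline R$ outward to $O(\log(n/\eps)+\abs\Phi)$ bits. This enlarges each width additively by $\eps/\poly{n}$, an amount absorbed by the same contraction sum with $L$ slightly increased. Polytope descriptions thus stay polynomially sized, and with $q$ fixed each LP takes polynomial bit time, giving the bound $((n+\abs\Phi+\bit(\eps))/\eps)^C$.
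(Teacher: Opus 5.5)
\textbf{The gap: the recursion underneath your polytopes.} Weitz's identity, which matches the marginal at $v$ with the root marginal of a single self-avoiding-walk tree with pinned cycle-closing leaves, is a two-spin fact. For $q\ge 3$ the exact identity behind your product formula is \Cref{lem:spin-ratio-recursion}. It telescopes from ``every neighbor sees $v$ at spin $j$'' to ``every neighbor sees $v$ at spin $i$''. Its $t$-th factor is evaluated in the instance $\Phi^v_{t,i,j}$ of \Cref{def:pinnings} (your indexing). There $v$ is deleted, and the field of every other neighbor $u_s$ is multiplied by $A_{vu_s}(i,\cdot)$ if $s<t$ and by $A_{vu_s}(j,\cdot)$ if $s>t$. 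So what is fixed are the edges back to $v$, not the earlier siblings, and they are fixed to the two spins being compared.

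Already on a triangle $v,u_1,u_2$, the factor for $u_1$ is computed with $u_2$ reweighted by $A_{vu_2}(j,\cdot)$, and the factor for $u_2$ with $u_1$ reweighted by $A_{vu_1}(i,\cdot)$. Your $P_v$ needs bounds for all $q(q-1)$ ordered pairs, and for $q\ge 3$ different pairs give different child instances. Hence $p_{u_t}$ is not a single vector attached to a node of a fixed tree. Your one-polytope-per-SAW-node construction therefore has no exact recursion behind it. On graphs with cycles the induction $p_v\in P_v$ fails, and with it the self-reducibility product. Requiring each child polytope to contain the child marginals for all pairs at once does not rescue this. If $v$ lies on a short cycle, the child marginals for different pairs differ by an amount independent of $L$, so the width would no longer decay.

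\textbf{The repair.} Use the paper's construction: attach a polytope to every recursive query $(\Phi^v_{t,i,j},u_t,\text{depth}-1)$, so the computation tree branches over the compared pair at every level. With this change the rest of your plan (LFPs via Charnes--Cooper, outward rounding, Birkhoff contraction) matches the paper. \Cref{prop:spin-accuracy-recursion} takes a maximum over pairs at each node, and after fixing maximizing pairs the unfolded tree is again indexed by self-avoiding walks. Your single-edge leaf treatment (child polytope equal to the simplex, image width at most $\Delta_B(A)$) is exactly the paper's depth-$0$ polytope and gives the $\overline\Delta\, c^{L}S_{L+1}(G)$ leaf term. Your first leaf suggestion, ratio bounds $\frac{\lambda_u(c)}{\lambda_u(c')}e^{\pm\deg\cdot\Delta_B}$, is not valid: a rank-one positive $A$ has $\Delta_B(A)=0$ yet shifts ratios.

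What the repair changes is the cost. There are $O(n\sum_{r\le L}q^{2r}S_r(G))=O(Kn^{1+\alpha}(L+1)(q^2D)^L)$ polytopes rather than $nKn^\alpha D^L$. This is still $(n/\varepsilon)^{O(1)}$ for $L=O(\log(n/\varepsilon))$, but the exponent now depends on $q$.

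\textbf{A smaller point.} Your claim that the $\lambda_{v_i}$-maximizing spin has marginal bounded below by a constant is false without a degree bound; it can be $\exp(-\Theta(\deg v_i))$. It is also unnecessary. The true marginal lies in the polytope, and a Hilbert diameter at most $\varepsilon'$ between probability vectors already gives $e^{\pm\varepsilon'}$ multiplicative accuracy at every feasible point.
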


For the $q$-state Potts models at edge interaction $\beta > 0$ with any external field $\boldsymbol\lambda$, all edges share the matrix $A$ with diagonal entries $\beta$ and off-diagonal entries $1$.
Then we have $\Delta_B(A) = 2\abs{\ln\beta}$ and $c_B(A) = \frac{\abs{\beta - 1}}{\beta + 1}$, which gives the following direct corollary.

\begin{corollary}[Potts model] \label{cor:general-Potts-model}
    Fix an integer $q \ge 3$, and constants $K \ge 1, \alpha > 0, D \ge 1$ and a rational number $\beta$ such that $D \, \frac{\abs{\beta - 1}}{\beta + 1} < 1$.
    For every graph family $\boldsymbol{\mathcal G}=(\mathcal G_n)_{n\ge1}$ satisfying \eqref{eq:polynomial-connective-constant}, there is a deterministic algorithm that, given a Potts instance $\Phi$ on $G\in\mathcal G_n$ with positive rational external fields and a binary rational $\eps\in(0,1)$, outputs an $\eps$-approximation to its partition function in
    \[
    \biggl(\frac{n + \abs{\Phi} + \bit(\eps)}{\eps}\biggr)^C
    \]
    bit operations, where $C$ depends only on $q, K, \alpha, D$ and $\beta$.
\end{corollary}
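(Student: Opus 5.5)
The plan is to derive \Cref{cor:general-Potts-model} directly from \Cref{thm:multi-spin-system}. The first step is to check that a Potts instance is a multi-spin instance with positive interactions. The Potts matrix $A$ has diagonal entries $\beta$ and off-diagonal entries $1$. Since $D\,\frac{\abs{\beta-1}}{\beta+1}<1$ must make sense with $\beta$ in the range of the edge interaction, we have $\beta>0$. Hence every entry of $A$ is strictly positive, and $A$ is symmetric and nonzero. The external fields are positive rationals, so $\Phi=(G,[q],\boldsymbol A,\boldsymbol\lambda)$ with $A_e=A$ for all $e$ is a valid input to \Cref{thm:multi-spin-system}.

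The second step is to compute $c_B(A)$. In the ratio $\frac{A_{ik}A_{j\ell}}{A_{i\ell}A_{jk}}$, each factor lies in $\{1,\beta\}$. The numerator is at most $\max(\beta,1)^2$ and the denominator is at least $\min(\beta,1)^2$, so the ratio is at most $\max(\beta,1/\beta)^2$. This bound is attained because $q\ge 2$: take $i=k=1$, $j=\ell=2$, which gives $\beta^2$, or swap the roles to get $\beta^{-2}$. Therefore $\Delta_B(A)=2\abs{\ln\beta}$. It follows that
\[
c_B(A)=\tanh\ab(\tfrac{\abs{\ln\beta}}{2})=\frac{e^{\abs{\ln\beta}}-1}{e^{\abs{\ln\beta}}+1}=\frac{\abs{\beta-1}}{\beta+1},
\]
where the last equality follows by checking the cases $\beta\ge1$ and $\beta<1$ separately (multiply numerator and denominator by $\beta$ in the latter).

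The third step applies \Cref{thm:multi-spin-system} with $c\defeq\frac{\abs{\beta-1}}{\beta+1}$. One small technicality arises here: the theorem requires $c>0$, which fails when $\beta=1$. In that case I replace $c$ by any positive constant $c'$ with $Dc'<1$, for instance $c'=1/(2D)$. This is allowed because the hypothesis $c_B(A_e)\le c'$ only needs an upper bound. Then $Dc<1$ holds and $c_B(A_e)\le c$ for every $e$. \Cref{thm:multi-spin-system} yields an $\eps$-approximation in $\bigl((n+\abs\Phi+\bit(\eps))/\eps\bigr)^C$ bit operations, with $C$ depending on $q,K,\alpha,D,c$. Since $c$ is a function of $\beta$ (or of $D$ in the $\beta=1$ case), $C$ depends only on $q,K,\alpha,D,\beta$.

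There is no real obstacle here. The only points needing care are the positivity of $\beta$ required for positive interactions, the attainment of the maximum in $\Delta_B$, and the degenerate case $\beta=1$ handled above.
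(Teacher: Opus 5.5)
Your proposal is correct and matches the paper's proof: the paper also obtains the corollary by applying \Cref{thm:multi-spin-system} with $A_e = A$ for every edge, using $\Delta_B(A) = 2\abs{\ln\beta}$ and $c_B(A) = \abs{\beta-1}/(\beta+1)$. You also handle two points the paper leaves implicit. For $\beta = 1$, $c_B(A)=0$ while the theorem requires $c>0$, and $\rho=0$ would make $\ln(1/\rho)$ in the depth parameter undefined; substituting a positive upper bound such as $1/(2D)$ fixes this. For positivity, $\beta>0$ follows from non-negativity of the interaction together with $D \ge 1$ ruling out $\beta=0$; the displayed inequality alone would also admit $\beta<-1$, so state this explicitly rather than saying it ``must make sense''.
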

\begin{remark}
    \Cref{thm:multi-spin-system-spatial-mixing,thm:multi-spin-system} need the bounded connective constant of underlying graphs.
    We list some amendable graph families here:
    \begin{enumerate}
        \item all $n$-vertex graphs of maximum degree $d$ have a polynomial connective constant $d - 1$;
        \item for every fixed $d > 1$ and $\delta \in (0, 1)$, with high probability over $G$ drawn from the Erd\H{o}s-R\'{e}nyi random graph $\mathcal G(n, d/n)$, it has a polynomial connective constant $d(1 + \delta)$;
        \item some specific lattices own known polynomial connective constants; for example, finite subgraphs of the honeycomb lattice satisfy \eqref{eq:polynomial-connective-constant} for every fixed $D>\sqrt{2+\sqrt2}$, using the exact standard connective constant established in \cite{DCS12}.
    \end{enumerate}
\end{remark}

The second specific spin system is proper colorings on sparse Erd\H{o}s-R\'{e}nyi random graphs.
A proper $q$-coloring of $G = (V, E)$ is an assignment $\sigma : V \to [q]$ satisfying $\sigma(u) \neq \sigma(v)$ on every edge $(u, v) \in E$.
Denote by $\Omega(G, q)$ the collection of all proper $q$-colorings of $G$.
This is the spin system with $\lambda_v(c) = 1$ and $A_e(i, j) = \id{i \neq j}$.
The zero entries introduce feasibility constraints, so this application requires an extension of the positive-interaction construction.
\begin{theorem}[Colorings on random graphs] \label{thm:random-graph-coloring}
    For every $\eta \in (0, 1)$, there exists $d_0 = d_0(\eta)$ such that the following holds. For any fixed real $d \ge d_0$ and fixed integer $q \ge (2 + \eta)d$, there are a deterministic algorithm $\mathsf A = \mathsf A_{\eta, d, q}$ and a positive integer $n_0 = n_0(\eta, d, q)$ with the property: for every $n \ge n_0$, with probability $1 - o_n(1)$ over $G \sim \mathcal G(n, d/n)$, the graph is $q$-colorable and, for any binary rational $\eps \in (0, 1)$, on input $G$ and $\eps$, the algorithm $\mathsf A$ outputs an $\eps$-approximation to $\abs{\Omega(G, q)}$ in 
    \[
    \biggl(\frac{n + \bit(\eps)}{\eps}\biggr)^C
    \]
    bit operations, where $C$ depends only on $\eta, d$ and $q$.
\end{theorem}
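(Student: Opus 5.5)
We reduce counting to marginal estimation, and estimate the marginals by a truncated recursion on permissive blocks. The truncated messages live in rational polytopes, and the colouring bound $q\ge(2+\eta)d$ is what makes that recursion contract.

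\textbf{Step 1: structural properties of the random graph.} We first record properties of $G\sim\mathcal G(n,d/n)$ that hold with probability $1-o_n(1)$. Fix a threshold $\Delta_0=(1+\eta/10)d$ and call a vertex \emph{heavy} if its degree exceeds $\Delta_0$. We use four facts:
\begin{enumerate}
\item every connected set of size $t\ge \log n$ contains at most $\delta t$ heavy vertices, for small $\delta=\delta(\eta,d)$;
\item every subgraph on $t\le n$ vertices has at most $(1+o(1))t$ edges when $t\le \epsilon\log n$ (local tree-likeness);
\item the number of self-avoiding walks of length $\ell$ from any vertex is at most $n^{\alpha}(d(1+\delta))^\ell$, by the remark following \Cref{cor:general-Potts-model};
\item $G$ is $q$-colourable, since $q>d\ge$ the chromatic threshold for large $d$.
\end{enumerate}
Using these, we partition $V$ into \emph{permissive blocks}. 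Starting from the heavy vertices, we take closures: connected components of heavy vertices together with their neighbourhoods, merged until the boundary vertices of each block have low degree and fewer than roughly $q-\Delta_0$ neighbours inside the block. Each block then has size $O(\log n)$, and every vertex on the boundary of a block sees at most $\Delta_0+O(1)\le (1+\eta/5)d<q/2$ coloured neighbours.

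\textbf{Step 2: the recursion and its contraction.} The partition function is a telescoping product of marginal probabilities $\mu(v\gets c\mid \text{pinning})$ under a suitable ordering of the vertices. For each marginal we run the marginal-ratio recursion on the self-avoiding walk tree, but with blocks as the recursion units. Inside a block we solve exactly by brute force, at cost $q^{O(\log n)}=\poly(n)$. Across a boundary edge $(u,w)$ we use the colouring recursion
\[
\mu_u(c)\propto\prod_{w}\bigl(1-\mu_{w}^{(u)}(c)\bigr).
\]
Permissiveness gives the a priori constraints $\mu_w(c)\le 1/(q-\deg w)$, and the marginals sum to $1$. We keep a rational polytope $P_w$ of admissible marginal vectors cut out by these constraints, and bound each factor $\prod(1-x_w(c))$ by linear-fractional programming over $P_w$, so that every bound is certified. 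The key estimate is the contraction: in the $\log$-ratio metric, the error at $u$ is at most
\[
\sum_w \frac{\max_c x_w(c)}{1-\max_c x_w(c)}\cdot \mathrm{err}_w \;\lesssim\; \frac{\deg u}{q-\deg u}\max_w \mathrm{err}_w .
\]
For low-degree vertices this is at most $\frac{(1+\eta/5)d}{q-(1+\eta/5)d}\le \frac{1+\eta/5}{1+4\eta/5}<1$. Here the normalisation constraint inside the polytope is essential: it is exactly what lets us replace the naive $3d$ analysis by the constant $2$. Heavy vertices may amplify the error, but by property (1) they are a $\delta$-fraction along any long walk. Amortising over walks gives a total contraction rate $\rho=\rho(\eta)<1$ per unit of depth. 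The number of walks grows like $(d(1+\delta))^\ell$, which is too fast to combine with $\rho^\ell$ directly, so we truncate at depth $L=C\log(n/\eps)$ in \emph{block-weighted} depth rather than graph distance. We then use the polytope-based computation tree bound as in the proof of \Cref{thm:multi-spin-system}: the cost is a product of branching and error, and it is polynomial because the potential function absorbs the degree factors.

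\textbf{Step 3: assembling the algorithm, and the main obstacle.} Composing the steps, $n$ marginals, each within $1\pm\eps/(2n)$ at polynomial bit cost, yield an $\eps$-approximation to $|\Omega(G,q)|$. The rationals stay polynomially sized because each linear-fractional program has $O(q)$ constraints with bounded-bit data. The main obstacle is Step 2 at heavy and high-degree vertices. There $\deg u$ may exceed $q$, so the single-step bound is vacuous and even permissiveness can fail. My plan is to absorb all such vertices into blocks and to prove that the block-level recursion contracts with a factor governed only by low-degree boundary vertices. I will try a potential $\Psi(x)=\log x$ weighted by $(1-\eta/10)^{-\#\text{heavy}}$, and check that the expansion of the blocks is controlled by property (1).
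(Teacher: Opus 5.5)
Your architecture (telescoping over a proper colouring, permissive blocks, polytopes with normalisation and marginal upper bounds, LFP-certified ratio bounds, logarithmic truncation) is the paper's. However, the estimate that makes the recursion contract in the presence of high-degree vertices is missing, and the mechanism you sketch for it does not work. Your max-form bound $\mathrm{err}_u\le\frac{\deg u}{q-\deg u}\max_w\mathrm{err}_w$ is fine at a vertex whose neighbours are all light. Once heavy vertices are absorbed into a block $B$, though, the block recursion (\Cref{prop:coloring-accuracy-recursion}) only gives $\mathrm{err}_B\le\sum_{i=1}^{m}\kappa_{v_i}\mathrm{err}_i$, with $m=|\partial_E B|$ growing with $n$. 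That is an amplification of order $m/(q-\Delta_0)$, governed by boundary sizes (i.e.\ degrees of heavy vertices), not by the \emph{number} of heavy vertices. So property (1) does not bound the product of these factors along a branch, and the weighted potential you propose is left unchecked; you flag this yourself as the main obstacle.

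The missing idea is to drop per-vertex contraction entirely:
\begin{itemize}
\item unfold the diameter recursion into a sum over branches;
\item encode each branch as a self-avoiding walk in $G$, choosing paths inside blocks;
\item bound its coefficient $\prod_i\kappa_{x_i}$ by $\prod_i w(v_i)$, where $w=\gamma=1/(q-D-1)$ on vertices of degree at most $D$ and $w=1$ otherwise.
\end{itemize}
Branching and contraction are then controlled jointly by the weighted path sum $W_\ell(G)$. Its expectation is at most $n\bigl(d(\gamma+\sqrt r)\bigr)^\ell$, by conditioning on the path and applying Finner's read-two inequality; here $r$ bounds the conditional probability of degree exceeding $D$. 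With $D=\lfloor (d+q)/3\rfloor$ one has $d\gamma\le 3/(3+\eta)$, so this decays geometrically.

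This also resolves your remark that walk growth is ``too fast to combine with $\rho^\ell$''. Your $\rho$ is a per-\emph{vertex} rate that already contains the factor $\deg u$. In a walk expansion the per-step factor is the per-\emph{edge} coefficient $\kappa_v\approx 1/(q-\deg v)$, and $d\kappa<1$ combines with the walk count directly. It is also why the paper can keep blocks minimal. Only $X=\{v:\deg_G(v)\ge q-1\}$, where permissiveness genuinely fails, is absorbed, while vertices of degree in $(D,q-2]$ stay ordinary recursion nodes of weight $1$.

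The running-time claim is likewise only asserted. A node at block $B$ spawns $|\partial_E B|\,q^{2|B|}$ children and solves LFPs with $q^{|B|}$ variables and $q^{2|B|}$ constraints, not $O(q)$. With blocks of size $\Theta(\log n)$ and depth $\Theta(\log(n/\eps))$, the per-block bound $q^{O(\log n)}$ therefore yields only a quasi-polynomial total. You need the total block volume along recursion branches to be controlled in aggregate. The paper proves $\mathcal W_\ell(G)\le n^7\bigl(4dq^K\exp(2y)\bigr)^\ell$ for the $q^{K\sum_j|B_j|}$-weighted block walks, via witness trees, Cayley's formula, Finner's inequality and Lagrange inversion. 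It places the block threshold at $q-1$ so that $q^K\exp(-J/2)$ is small uniformly in $q$. This matters because $d_0$ may depend only on $\eta$ while $q$ is unbounded, so a $q$-independent threshold such as your $(1+\eta/10)d$ needs separate care.

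Finally, the leaf polytopes must have finite Hilbert diameter and positive coordinates. Upper bounds plus normalisation alone give infinite diameter. This is why the paper adds $z_\sigma\ge b_B$ from \Cref{prop:coloring-marginal-lower-bound} and the depth-$0$ constraints $2^{-m}z_\tau\le z_\sigma\le 2^m z_\tau$.
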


\subsection{Technical overview}

\subsubsection*{Spatial mixing along self-avoiding walks}
The marginal-ratio recursion expresses the influence on a vertex through marginals in smaller instances.
For positive interactions, the Birkhoff contraction bounds the contribution of each child by its Hilbert distance multiplied by the coefficient of the corresponding edge.
Expanding these bounds produces a sum over self-avoiding walks.
The coefficients contribute at most $c^\ell$ along a walk of length $\ell$, while \eqref{eq:polynomial-connective-constant} bounds the number of walks by $Kn^\alpha D^\ell$.
Their combined contribution therefore decays at rate $Dc<1$, without requiring contraction of the sum of influences at every vertex.

\subsubsection*{Deterministic approximate counting} We design \textbf{FPTAS}es via estimating marginal probabilities.
At a query $(\Phi,v,t)$, the algorithm constructs a rational polytope $\polytope_t(\Phi,v)$ containing the true marginal vector at $v$.
The polytope enforces normalization and bounds on every pairwise marginal ratio.
For a fixed pair of spins $a,b$, the exact recursion has the form
\begin{align*}
    \frac{\mu_\Phi(v\gets a)}{\mu_\Phi(v\gets b)}
    =\frac{\lambda_v(a)}{\lambda_v(b)}
     \prod_i \frac{A_i(a,\cdot)\, p_i}{A_i(b,\cdot) \, p_i},
\end{align*}
where each child marginal $p_i$ depends on the pair $a,b$.
Each factor is a linear-fractional function of a single child marginal.
Minimizing and maximizing it over the child polytope gives certified bounds, whose products yield linear constraints on the parent marginal vector.
This optimization uses all constraints in the child polytope simultaneously.

The error measure is the Hilbert diameter of the feasible polytope.
Since the true marginal belongs to the polytope, a small diameter guarantees a multiplicative approximation from any feasible point.
For positive interactions, Birkhoff contraction and the self-avoiding-walk bound control the propagation of these diameters.
We round every computed lower bound downward and every upper bound upward.
This preserves containment and keeps the coefficient encoding lengths polynomial.
A logarithmic truncation depth and logarithmic rounding precision suffice for the desired accuracy; a separate bound on the full computation tree establishes polynomial running time.




\subsubsection*{Permissive blocks of random graph colorings}
For proper colorings, a locally proper partial assignment need not extend to a coloring of the graph.
We use the permissive blocks of \cite{Yin14,YZ16} to obtain a recursion on blocks instead of vertices, for which local feasibility suffices.
The polytope now describes the joint distribution of a block.
A lower bound on each feasible block probability ensures positivity, while upper bounds on the single-vertex marginals sharpen the contraction of the ratio factors.
Specifically, if the relevant marginal coordinates are at most $1/s$, the Hilbert contraction coefficient is at most $1/(s-1)$.

The random graph analysis has two roles.
Weighted self-avoiding-walk estimates control the accumulated error, and bounds on exceptional components and weighted block walks control the total computation cost.
The latter is necessary because polynomial work at each block does not alone bound the number of recursive calls.
Together these estimates yield \Cref{thm:random-graph-coloring} at $q\ge(2+\eta)d$.

\subsection{Discussion}
We put here some discussions and future directions.

\subsubsection*{Strong spatial mixing and approximate counting}
As mentioned before, strong spatial mixing is closely related to approximate counting while there usually remains a gap between them.
When the underlying graph is bounded degree, by \cite{CFGZZ25}, it seems that there exists a path from strong spatial mixing to approximate counting via \emph{coupling independence} and marginal lower bounds, which however fails on unbounded-degree graphs.
On the other hand, our work provides $\mathbf{FPTAS}$es for multi-spin systems with positive interactions on graphs with bounded connective constant and proper colorings on sparse Erd\H{o}s-R\'{e}nyi random graphs asymptotically matching the current regime of strong spatial mixing on both models respectively.
The problem whether strong spatial mixing implies efficient approximation schemes still remains in-depth discussion.
We state the following conjecture on this topic.
\begin{conjecture}
    For multi-spin systems on a family of graphs satisfying some regularities, when the Gibbs distribution exhibits exponentially strong spatial mixing, there is a deterministic approximation schemes to the partition function with running time at most (quasi-)polynomial in the size of input.
\end{conjecture}

\subsubsection*{Recovering earlier correlation-decay guarantees}
Our algorithm provides a route to retaining the guarantees of earlier recursive correlation-decay algorithms.
We expect the framework to recover, in principle, all previous results based on recursive correlation decay.
A polynomial-time recovery also requires that the recursion, any change of message coordinates, the enclosure representation, and outward rounding preserve the earlier accuracy and complexity guarantees.
The Birkhoff contraction in Hilbert metric used in our positive-interaction theorem is one analysis of the framework; recovering a different result may require its original potential function or other model-specific bounds.
The full coverage statement is a methodological expectation, rather than a universal simulation theorem established here.
Our coloring result demonstrates a concrete gain beyond the parameter range of the earlier recursive counting method of \cite{YZ16}.

\subsubsection*{Picture of approximate counting}
Besides the correlation-decay method, there are different ways to derive $\mathbf{FPTAS}$ for multi-spin systems: polynomial interpolations from zero-freeness \cite{LSS19,LSS25,BBR25}, marginal estimators by derandomizing Markov chain Monte Carlo methods \cite{FGWWY25}, Moitra's linear-programming methods on coupling independence \cite{CFGZZ25} and algorithmic methods from statistical physics \cite{HPR20,JKP20}.
It remains an open problem whether the algorithm proposed in this paper can cover even surpass the results by other techniques.

\subsubsection*{Deterministic and randomized algorithms}
Randomized algorithm, such as Markov chain Monte Carlo, has been widely studied in approximate algorithms.
It has been commonly believed that deterministic algorithms could have same capacities as randomized algorithms.
A recent work \cite{CFGZZ25} on Moitra's linear-programming methods on coupling independence provides several deterministic counting results matching sampling results for spin systems on graphs with bounded degree.
However, when the bounded-degree assumption is missing, there still remains a gap to reach current results of randomized algorithms, for example, \cite{EHSV18} shows the rapid mixing of Glauber dynamics of random graph colorings when $q > \alpha d$ where $\alpha \approx 1.7632$ is the root of $\alpha = \e^{1/\alpha}$ on $(1,\infty)$ but we are only able to achieve $q \ge (2 + o(1))d$.
In spite of the gap, we hope that our algorithm can give some inspiration on this topic.

\subsubsection*{Turing machine vs. arithmetic models}
The computation model here is the deterministic Turing machine.
The choice of this model relies on the running time to solve a linear-fractional programming, which is in $\mathbf P$ under Turing machine but remains unknown under the arithmetic models.

\subsection{Organization}
\Cref{sec:preliminaries} introduces the notation, Hilbert metric, random graphs, and some necessary tools.
Spatial mixing \Cref{thm:multi-spin-system-spatial-mixing} for positive interactions is given in \Cref{sec:spatial-mixing}, and our \textbf{FPTAS} and its approximation guarantee are provided in \Cref{sec:multi-spin-connective-constant}.
\Cref{sec:random-graph-coloring} extends our algorithm to proper colorings of sparse random graphs. The proofs of the Hilbert contraction inequalities appear in \Cref{sec:contraction-proof}.

\subsection{AI disclosure}
The main idea of the algorithm framework was provided by authors.
The proofs in this paper were generated after supplying ChatGPT Pro 5.6 Sol Ultra with an earlier research draft containing a preliminary algorithmic idea.
The agents suggested using the polynomial bit complexity of linear-fractional programming and provided parameter choices in \Cref{sec:random-graph-properties} and some calculations.
The authors have verified all proofs and take full responsibility for the paper.
\section{Preliminaries} \label{sec:preliminaries}
This section fixes basic notations and introduces necessary toolkits and preliminaries in the work.

\subsection{Mathematical notations}
Let $\mathbb R$ be all real numbers, $\mathbb R_{> 0}$ be all positive real numbers and $\mathbb N$ be all natural numbers.
We use $\e$ to denote the natural logarithm base and $\ln$ to denote the natural logarithm function.
For a natural number $n \in \mathbb N$, we use $[n]$ to denote the set $\set{1, \ldots, n}$.

For a graph $G$, $V(G)$ denotes its vertex set and $E(G)$ denotes its edge set.
When the context is clear, we simply use $V$ and $E$.
We give an order to $V$ and use the alphabetical order to compare paths.
The notation $\deg_G(v)$ is the degree of $v$ in $G$, \IE, the number of neighbors of $v$ in $G$.
For a vertex subset $B \subseteq V$, let $\partial_E B$ be the edge-boundary of $B$ in $G$.
We always list $\partial_E B = \set{e_1 = (u_1, v_1), \ldots, e_m = (u_m, v_m)}$ in a fixed order on $E$ that $u_i \in B$ and $v_i \notin B$ for $i = 1, \ldots, m$.

Without any additional statement, we define a vector $u \in \mathbb R^d$ as a column vector in $\mathbb R^{d \times 1}$.
For a matrix $A \in \mathbb R^{d \times d}$, we use $A(i, \cdot) \in \mathbb R^{1 \times d}$ to denote the \emph{row} vector of $A$ at row $i$.
To take the value of an entry in a vector $u$ or a matrix $A$, the notations $u(i)$ and $A(i, j)$ are employed together with the standard notation $u_i$.

For an assignment (or coloring) $\sigma : U \to [q]$ and a subset $S \subseteq U$, we use $\sigma(S)$ to denote the partial assignment of $\sigma$ on $S$, \IE, $\sigma(S) = \sigma|_S$ and $\sigma(v)$ to denote $\sigma(\set{v})$ when $S = \set{v}$.
For a single element $v \in U$, we also use $\sigma(v)$ to denote the assigned value on $v$.
We use $v \gets c$ to describe the term that $v$ is assigned the value $c$ and use $B \gets \sigma$ to describe the term that the subset $B$ is assigned the partial assignment $\sigma$.

\subsection{Hilbert distance}
We introduce the following metric used by \cite{GK12,GKM15,Yin14,YZ16} to describe the distance of two positive vectors.
For two positive vectors $p, p'$ supported on a common state space $\Omega$, define the Hilbert distance between $p, p'$ as
\begin{align} \label{eq:Hilbert-distance}
    d_H(p, p') \defeq \max_{a \in \Omega} \ln\frac{p(a)}{p'(a)} - \min_{b \in \Omega} \ln\frac{p(b)}{p'(b)} = \max_{a, b \in \Omega} \ln\frac{p(a) p'(b)}{p'(a)p(b)}.
\end{align}
It is trivial to see that any normalization of vectors does not change the Hilbert distance.
The following inequality uses the Hilbert metric to bound the total variation distance.
\begin{lemma} \label{lem:TV-distance-by-Hilbert-distance}
    For two positive probability vectors $p, p'$ supported on a common state space $\Omega$, it holds that
    \begin{align*}
        \TV{p}{p'} \le \tanh\ab(\frac{d_H(p, p')}{4}) \le \frac{d_H(p, p')}{4}.
    \end{align*}
\end{lemma}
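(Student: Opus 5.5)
The second inequality follows from $\tanh x \le x$ for $x \ge 0$, so the work is in the first. Set $r(a) = p(a)/p'(a)$ and $\Delta = d_H(p,p') = \ln(M/m)$, where $M = \max_a r(a)$ and $m = \min_a r(a)$. Both $p$ and $p'$ sum to one, so $m \le 1 \le M$.

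Let $S = \{a : p(a) \ge p'(a)\}$ and write $P = p(S)$ and $P' = p'(S)$. Then $\TV{p}{p'} = P - P'$. The ratio bounds give two constraints:
\begin{itemize}
    \item on $S$, $P \le M P'$;
    \item on the complement, $1 - P \ge m(1 - P')$.
\end{itemize}
So $P \le \min\{M P',\, 1 - m(1-P')\}$, and hence
\begin{align*}
    \TV{p}{p'} \le \max_{x \in [0,1]} \min\{(M-1)x,\ (1-m)(1-x)\}.
\end{align*}
The maximum is attained where the two linear pieces cross, at $x = (1-m)/(M-m)$. The value there is
\begin{align*}
    g(m, M) = \frac{(M-1)(1-m)}{M-m}.
\end{align*}

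It remains to maximize $g$ subject to $M/m = e^{\Delta}$ and $m \le 1 \le M$. Parametrize $M = t\,e^{\Delta}$ and $m = t$ with $t \in [e^{-\Delta}, 1]$. Then
\begin{align*}
    g = \frac{(t e^{\Delta} - 1)(1 - t)}{t\,(e^{\Delta} - 1)}.
\end{align*}
Expanding the numerator, $(t e^{\Delta} - 1)(1-t)/t = e^{\Delta} + 1 - t e^{\Delta} - 1/t$. Maximizing over $t$ gives $t = e^{-\Delta/2}$, and the value becomes
\begin{align*}
    \frac{e^{\Delta} + 1 - 2e^{\Delta/2}}{e^{\Delta} - 1} = \frac{(e^{\Delta/2} - 1)^2}{(e^{\Delta/2} - 1)(e^{\Delta/2} + 1)} = \frac{e^{\Delta/2} - 1}{e^{\Delta/2} + 1} = \tanh(\Delta/4).
\end{align*}

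The main care point is justifying the reduction to two blocks. The bounds $P \le MP'$ and $1 - P \ge m(1-P')$ are aggregate consequences of the pointwise bounds $m \le r(a) \le M$, so no additional constraint is lost. This matches Birkhoff's classical two-point extremal example. Degenerate cases ($\Delta = 0$, or $S$ empty) are immediate.
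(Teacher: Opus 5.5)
Your proof is correct and follows essentially the same path as the paper's. Both arguments reduce to the intermediate bound $\frac{(M-1)(1-m)}{M-m}$ in the extreme ratios $m,M$ and then optimize under $M/m=e^{\Delta}$; your choice $t=e^{-\Delta/2}$ is the equality case of the paper's AM--GM step $e^{\Delta}m+1/m\ge 2e^{\Delta/2}$. The only difference is how that intermediate bound is obtained. The paper uses the secant-line (convexity) bound $\max\{0,x-1\}\le \frac{(M-1)(x-m)}{M-m}$ on $[m,M]$ and averages it against $p'$. You instead sum the pointwise ratio bounds over $S$ and its complement and take a max--min of two linear functions of $p'(S)$, which is an equally valid relaxation.
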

\begin{proof}
    When $p = p'$, the lemma is trivial.
    Otherwise, define the ratio vector of $p$ and $p'$ as $R(c) = p(c)/p'(c)$, and set $m = \min_{c \in \Omega} R(c)$ and $M = \max_{c \in \Omega} R(c)$.
    It is clear that $m \le 1 \le M$.
    We compute the total variation distance as
    \begin{align*}
        \TV{p}{p'} = \sum_{c \in \Omega} \max\set{p(c) - p'(c), 0} = \sum_{c \in \Omega} p'(c) \max\set{R(c) - 1, 0}.
    \end{align*}
    By the secant-line bound
    \begin{align*}
        \max \set{0, x - 1} \le \frac{M - x}{M - m} \max\set{0, m - 1} + \frac{x - m}{M - m} \max\set{0, M - 1} = \frac{(M - 1)(x - m)}{M - m},
    \end{align*}
    it holds that
    \begin{align*}
        \TV{p}{p'} \le \sum_{c \in \Omega} p'(c) \frac{(M - 1)(R(c) - m)}{M - m} = \frac{M - 1}{M - m} (1 - m).
    \end{align*}
    Observe that $M/m = \e^{t}$ where $t = d_H(p, p')$.
    By the AM-GM inequality $\e^t m + \frac{1}{m} \ge 2 \e^{t/2}$, we have
    \begin{align*}
        \TV{p}{p'} \le \frac{\e^t + 1 - (\e^tm + 1/m)}{\e^t - 1} \le \frac{\e^t - 2\e^{t/2} + 1}{\e^t - 1} = \frac{\e^{t/2} - 1}{\e^{t/2} + 1} = \tanh(t/4).
    \end{align*}
    This proves the first inequality.
    For the second inequality, just observe that $\tanh(x) \le x$ on $[0, \infty)$ and we conclude it.
\end{proof}
The following Birkhoff contraction for the Hilbert distance plays a crucial role.
It is actually a standard finite-dimensional argument in \cite{Birkhoff57} and for completeness, we include its proof in \Cref{sec:contraction-proof}.
\begin{lemma} \label{lem:Birknoff-contraction}
    For every positive symmetric matrix $A$ and positive vectors $x, y$, it holds that
    \begin{align*}
        d_H(Ax, Ay) \le c_B(A) d_H(x, y).
    \end{align*}
\end{lemma}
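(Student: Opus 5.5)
The plan is to prove Birkhoff's inequality $d_H(Ax,Ay)\le \tanh(\Delta_B(A)/4)\,d_H(x,y)$ by reducing to a two-dimensional computation. Write $t=d_H(x,y)$; if $t=0$ then $x$ is a positive multiple of $y$ and there is nothing to prove. Normalize so that $m=\min_j x_j/y_j$ and $M=\max_j x_j/y_j$ satisfy $M/m=\e^{t}$. The key is to write $x$ as a nonnegative combination of $my$ and a correction. Set $u=x-my\ge 0$ and $w=My-x\ge 0$. Then $x=my+u$ and $(M-m)y=u+w$, so both $Ax$ and $Ay$ are nonnegative combinations of the two nonnegative vectors $Au$ and $Aw$. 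Nondegeneracy needs care: if $u=0$ or $w=0$ then $x\propto y$. Otherwise, since $A$ is positive, both $Au$ and $Aw$ are strictly positive.

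For coordinates $i,k$ we must bound
\[
\frac{(Ax)_i (Ay)_k}{(Ay)_i (Ax)_k}.
\]
Writing $\alpha_i=(Au)_i$ and $\beta_i=(Aw)_i$, we have, up to the common factor $(M-m)$, $(Ax)_i\propto m\beta_i+M\alpha_i$ and $(Ay)_i\propto \alpha_i+\beta_i$. Setting $r_i=\alpha_i/\beta_i$, the ratio becomes $f(r_i)/f(r_k)$ with
\[
f(r)=\frac{M r+m}{r+1},
\]
which is monotone increasing. The second ingredient is to bound the spread of the $r_i$. Since
\[
\frac{(Au)_i(Aw)_k}{(Aw)_i(Au)_k}=\frac{\sum_{j,\ell}A_{ij}A_{k\ell}u_jw_\ell}{\sum_{j,\ell}A_{kj}A_{i\ell}u_jw_\ell},
\]
comparing termwise gives $r_i/r_k\le \e^{\Delta}$ with $\Delta=\Delta_B(A)$.

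We then reduce to maximizing $\ln\bigl(f(r)/f(s)\bigr)$ subject to $r/s\le \e^{\Delta}$ and $M/m=\e^{t}$. Substitute $r=\e^{\rho}$, $s=\e^{\rho-\delta}$ with $\delta\le\Delta$, and scale $m=1$, $M=\e^{t}$. The quantity becomes $g(\rho)-g(\rho-\delta)$ where
\[
g(\rho)=\ln\frac{\e^{t+\rho}+1}{\e^{\rho}+1}.
\]
This is the standard computation. We maximize over $\rho$ by symmetry; the optimum is at $\rho=(\delta-t)/2$, which is found by setting derivatives equal. At the optimum the value is
\[
2\ln\frac{\cosh\bigl((t+\delta)/4\bigr)}{\cosh\bigl((t-\delta)/4\bigr)}.
\]
We then need the elementary inequality
\[
2\ln\frac{\cosh((t+\delta)/4)}{\cosh((t-\delta)/4)}\le t\tanh(\delta/4).
\]
This holds because the left side is $\int_0^t \tfrac12\bigl[\tanh((s+\delta)/4)-\tanh((s-\delta)/4)\bigr]\,ds$, and the integrand is at most $\tanh(\delta/4)$: the difference $\tanh(a+b)-\tanh(a-b)$ is maximized at $a=0$, where it equals $2\tanh(b)$. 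Monotonicity in $\delta\le\Delta$ then finishes the proof.

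The main obstacle is this final optimization and its elementary inequality, that is, getting the exact $\tanh(\Delta/4)$ constant rather than a lossy bound. If the closed-form optimization in $\rho$ gets messy, the fallback is the integral representation: write $\ln(f(r)/f(s))=\int_{\ln s}^{\ln r} \frac{d}{d\rho}\ln f(\e^\rho)\,d\rho$ and bound the derivative $\frac{(M-m)r}{(Mr+m)(r+1)}$ directly via the same cosh/tanh identities. Everything before that step is routine.
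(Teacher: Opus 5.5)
Your proof is correct, but it takes a genuinely different route from the paper's. The paper never reduces to two dimensions. It interpolates geometrically, $z(s)_k=y_k\e^{s h_k}$ with $h_k=\ln(x_k/y_k)$, and differentiates $F_a(s)=\ln(Az(s))_a$ to get $F_a'(s)=\sum_k p^{a,s}(k)h_k$, where $p^{a,s}(k)\propto A(a,k)z(s)_k$ is a probability vector. It notes that $d_H(p^{a,s},p^{b,s})\le\Delta_B(A)$ for every $s$, and bounds $\abs{F_a'(s)-F_b'(s)}$ by the total variation distance times $\max_k h_k-\min_k h_k$. The already-proved inequality $\mathrm{TV}\le\tanh(d_H/4)$ of Lemma~\ref{lem:TV-distance-by-Hilbert-distance} then supplies the constant $\tanh(\Delta_B(A)/4)$ directly, and integrating over $s\in[0,1]$ finishes the proof with no extremal computation.

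You instead use the classical Birkhoff--Hopf two-point reduction: $(M-m)x=Mu+mw$ and $(M-m)y=u+w$, a termwise bound $r_i/r_k\le\e^{\Delta}$, then a scalar optimization. Your route is self-contained, since it does not use the TV lemma, and it exhibits the exact extremal value $2\ln\frac{\cosh((t+\delta)/4)}{\cosh((t-\delta)/4)}$. The paper's route is shorter because it recycles Lemma~\ref{lem:TV-distance-by-Hilbert-distance}.

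One step needs tightening: the claim that $\rho=(\delta-t)/2$ is the global maximizer ``by symmetry''. Let $\sigma(x)=1/(1+\e^{-x})$. Then $g'(\rho)=\sigma(\rho+t)-\sigma(\rho)$ is even about $-t/2$ and strictly unimodal, so $g'(\rho)=g'(\rho-\delta)$ has only that solution. Together with $g(\rho)-g(\rho-\delta)\to0$ as $\rho\to\pm\infty$, this makes the critical point the global maximum. Alternatively, you can skip locating the maximizer: $g(\rho)-g(\rho-\delta)=\int_0^t[\sigma(\rho+s)-\sigma(\rho-\delta+s)]\,ds\le t\,(\sigma(\delta/2)-\sigma(-\delta/2))=t\tanh(\delta/4)$. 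This is essentially the paper's integration argument specialized to your two-dimensional picture.
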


\subsection{Self-reducibility} \label{subsec:self-reducibility}
In the design of approximate algorithms, self-reducibility is an important property.
\cite{JVV86} shows that for a family of self-reducible instances, it suffices to estimate the marginal probability to approximate partition functions.
For simplicity, we list some self-reducible instances here instead of introducing the precise concept.
\begin{itemize}
    \item Multi-spin systems on graphs with bounded degree or with bounded connective constant are self-reducible.
    \item Although the proper $q$-coloring model is not self-reducible, the proper \emph{list-coloring} model is self-reducible. 
\end{itemize}

\subsection{Rounding length and rounding oracle}
To maintain the bit complexity, we introduce the rounding length and rounding oracles.
Let $P \ge 3$ be a picked rounding length and set $\xi_P = 2^{2 - P}$ be the rounding accuracy.
We define two oracles $\mathsf{round}_P^-(x)$ and $\mathsf{round}_P^+(x)$ on $(0, \infty)$ as following.
Let $s$ be the unique integer satisfying $2^s \le x < 2^{s + 1}$ and set $h_P(x) = 2^{s - P + 1}$.
We define $\mathsf{round}_P^-(x) = h_P(x) \floor{\frac{x}{h_P(x)}}$ and $\mathsf{round}_P^+(x) = h_P(x) \ceil{\frac{x}{h_P(x)}}$.
It holds that the bit complexity of computing $\mathsf{round}_P^+(x)$ and $\mathsf{round}_P^-(x)$ is at most $O((\bit(x) + P)^2)$.
Furthermore, assume that $x \in [\ell, u]$ where $\ell$ and $u$ can be encoded by at most $L$ bits.
Then it takes at most $O(L + P)$ bits to encode the rounded rational numbers $\mathsf{round}_P^-(x)$ and $\mathsf{round}_P^+(x)$.
Meanwhile $0 < \mathsf{round}_P^-(x) \le x \le \mathsf{round}_P^+(x)$ and
\begin{align} \label{eq:rounding-accuracy}
    \max\; \set{\ln\frac{x}{\mathsf{round}_P^-(x)},\; \ln\frac{\mathsf{round}_P^+(x)}{x}} \le \xi_P.
\end{align}

\subsection{Linear-fractional programming}
Now we state some standard results for linear-fractional programmings.
A linear-fractional programming is of the form
\begin{equation} \label{eq:linear-fractional-programming}
\begin{aligned}
    \max/\min \; &\frac{\mathbf c^\top \mathbf x + \alpha}{\mathbf d^\top \mathbf x + \beta} \\
    \text{subject to} \; & A \mathbf x \le \mathbf b.
\end{aligned}
\end{equation}
where $\mathbf x$ represents the variables of the system, $\mathbf c, \mathbf d \in \mathbb R^n$ and $\alpha, \beta \in \mathbb R$ are given parameters about the optimization, and $A \in \mathbb R^{m \times n}$ and $\mathbf b \in \mathbb R^m$ are given parameters to describe the constraints.

The Charnes-Cooper transformation \cite{CC62} shows that a linear-fractional programming can be converted to a linear programming when the denominator $\mathbf d^\top \mathbf x + \beta$ is positive uniformly over the feasible region.
\begin{theorem}[Charnes-Cooper transformation \cite{CC62}] \label{thm:Charness-Cooper-transformation}
    For an instance $\Phi$ of form \eqref{eq:linear-fractional-programming} representing a bounded rational polytope, if $\mathbf d^\top \mathbf x + \beta > 0$ for every feasible $\mathbf x$, then it can be transformed to a linear programming $\Phi'$ of encoding length $O(\abs \Phi)$.
    As a result, a rational optimum can be solved in $O(n^3 m^3 \abs\Phi^3)$ bit operations.
\end{theorem}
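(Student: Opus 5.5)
The plan is to prove \Cref{thm:Charness-Cooper-transformation} by the classical homogenization trick, followed by an appeal to a polynomial-time linear programming algorithm with bit-complexity guarantees.

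Introduce new variables $\mathbf y = t\mathbf x$ and a scalar $t$. The intended correspondence is $t = 1/(\mathbf d^\top \mathbf x + \beta) > 0$, which is well defined because the denominator is positive on the whole feasible region. Under this substitution the objective becomes $\mathbf c^\top \mathbf y + \alpha t$. The constraints become
\[
A\mathbf y - \mathbf b t \le 0, \qquad \mathbf d^\top \mathbf y + \beta t = 1, \qquad t \ge 0 .
\]
The first step is to show that this program $\Phi'$ has the same optimal value as $\Phi$. In one direction, every feasible $\mathbf x$ maps to a feasible $(\mathbf y, t)$ with the same objective value. In the other direction, a feasible $(\mathbf y,t)$ with $t>0$ gives back $\mathbf x = \mathbf y/t$, which is feasible for $\Phi$ with the same value.

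The remaining case is $t = 0$, and this is where boundedness of the polytope enters. If $t=0$, then $A\mathbf y \le 0$, so $\mathbf y$ is a recession direction of $\{\mathbf x : A\mathbf x \le \mathbf b\}$. Since the polytope is bounded, this forces $\mathbf y = 0$. But then $\mathbf d^\top \mathbf y + \beta t = 0 \neq 1$, a contradiction. Hence every feasible point of $\Phi'$ has $t>0$, and the two programs are equivalent. The same argument shows that the feasible region of $\Phi'$ is bounded, so $\Phi'$ attains its optimum. The min and max versions are handled identically, by negating the objective if necessary.

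Next comes the size accounting. $\Phi'$ has $n+1$ variables and $m+3$ constraints (counting the equality as two inequalities). Its coefficients are exactly the entries of $A,\mathbf b,\mathbf c,\mathbf d,\alpha,\beta$, up to sign and rearrangement, so its encoding length is $O(\abs\Phi)$.

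Finally, apply a polynomial-time LP algorithm, such as the ellipsoid method or an interior-point method with exact rounding to a vertex. This yields a rational optimal vertex $(\mathbf y^*,t^*)$ within a bit-operation bound polynomial in $n$, $m$ and $\abs\Phi$, which I would cite in the form $O(n^3m^3\abs\Phi^3)$. Returning $\mathbf x^* = \mathbf y^*/t^*$ gives a rational optimum of $\Phi$. The division costs only polynomially many bit operations, since vertex coordinates have encoding length polynomial in $\abs\Phi$.

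The main obstacle is not the transformation itself but the precise running-time exponent. That exponent depends on which LP bit-complexity theorem one quotes, so I would simply invoke a standard bound and check that $\Phi'$ meets its hypotheses: a rational LP of encoding length $O(\abs\Phi)$ with a bounded feasible region.
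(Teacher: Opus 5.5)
Your proposal is correct and is precisely the standard Charnes--Cooper homogenization ($\mathbf y = t\mathbf x$, $t = 1/(\mathbf d^\top \mathbf x + \beta)$) that the paper relies on. The paper gives no proof of its own: it cites \cite{CC62} for the transformation and treats the $O(n^3 m^3 \abs\Phi^3)$ running time as a quoted LP bit-complexity bound, exactly as you do. The only tacit hypothesis in your $t=0$ step is that the polytope is nonempty, so that $A\mathbf y\le 0$ forces $\mathbf y=0$; this holds in every use in the paper, since each constructed polytope contains the true marginal vector.
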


\subsection{Erd\H{o}s-R\'{e}nyi random graphs}
The Erd\H{o}s-R\'{e}nyi graph $G = (V, E) \sim \mathcal G(n, p)$ on $n$ vertices of rate $p \in [0, 1]$ is a random graph generated as:
\begin{itemize}
    \item $V = [n]$; and
    \item for every $(i, j) \in \binom{[n]}{2}$, $E$ includes it with probability $p$ independently.
\end{itemize}
The following lemma gives an algorithm to find a proper $q$-coloring on random graphs.
\begin{lemma} \label{lem:random-graph-colorability}
    For every $d \ge 1$ and $q > \max\;\set{d, 3}$, with probability $1 - o_n(1)$ over $G \sim \mathcal G(n, d/n)$, $G$ is $q$-colorable and a proper $q$-coloring on $G$ can be output in $\poly{n, q}$ bit operations.
\end{lemma}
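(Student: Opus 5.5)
The plan is to split into two independent parts: showing $q$-colorability with high probability, and giving an efficient deterministic procedure that succeeds on the typical graph. The procedure will be a greedy algorithm driven by a degeneracy ordering.

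\textbf{Step 1: degeneracy.} First I will show that for $G\sim\mathcal G(n,d/n)$ and $q>\max\{d,3\}$, with probability $1-o_n(1)$ every subgraph of $G$ has a vertex of degree at most $q-1$. Equivalently, the $q$-core of $G$ is empty.

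For $d<q-1$ or so, I would argue directly. A subgraph $H$ on $k$ vertices with minimum degree at least $q$ has at least $qk/2$ edges. A first-moment bound gives
\[
\binom{n}{k}\binom{\binom k2}{qk/2}\Bigl(\frac dn\Bigr)^{qk/2}\le \Bigl(\frac{\mathrm e n}{k}\Bigl(\frac{\mathrm e k d}{qn}\Bigr)^{q/2}\Bigr)^k .
\]
This is summable and $o(1)$ over $k\le \delta n$ for a small constant $\delta$, since $q/2>1$ when $q\ge 3$.

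For linear-size $k$, I would use the known threshold for $q$-core emergence (Pittel--Spencer--Wormald). The $q$-core appears only at $d\ge c_q$, and $c_q>q$ for $q\ge 3$ ($c_3\approx 3.35$). So $q>d$ implies $d<c_q$, and the $q$-core is empty w.h.p. This threshold comparison is the main obstacle: the hypothesis $q>d$ is exactly what makes $d$ subcritical for the $q$-core. Proving it from scratch would need the core's branching-process fixed-point analysis. I would instead cite it, or use the cruder bound that the $(q)$-core is empty whenever $q\ge d+\text{const}\cdot\sqrt{d\log d}$ and handle the rest by citation.

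\textbf{Step 2: algorithm.} If every subgraph has a vertex of degree at most $q-1$, I repeatedly delete a vertex of minimum degree, recording the order. I then color in reverse order greedily. Each vertex has at most $q-1$ previously colored neighbors, so some color in $[q]$ is free. This takes $\poly{n,q}$ bit operations. The procedure also certifies $q$-colorability, which completes the claim.

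If the core-threshold citation is undesirable, a fallback covers the dense regime via $(q-1)$-degeneracy for $q\ge d+\Omega(\sqrt d)$. The remaining narrow range would need the more refined threshold result, so the citation is the intended route.
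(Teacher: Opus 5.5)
Your proposal is correct and follows the paper's proof: you cite Pittel--Spencer--Wormald for an empty $q$-core with high probability, and you make explicit the comparison $d<q<c_q$ that the paper leaves implicit. You then peel off vertices of degree at most $q-1$ and color greedily in reverse order in polynomial time, exactly as the paper does; the separate first-moment bound for small cores and the $\sqrt{d\log d}$ fallback are superfluous, since the cited threshold result already covers cores of every size.
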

\begin{proof}
    By \cite{PSW96}, with probability $1 - o_n(1)$, $G$ has no $q$-core, \IE, an induced subgraph of minimal degree at least $q$.
    Consider the following algorithm:
    \begin{enumerate}
        \item initialize $H = G$, $S = \emptyset$ and $Q = \set{v \in V(H) \;:\; \deg_H(v) \le q - 1}$;
        \item repeat the following process until $Q = \emptyset$:
        \begin{enumerate}
            \item pick a smallest vertex $v \in Q$;
            \item append $v$ to the tail of $S$ and remove $v$ from $Q$;
            \item remove $v$ from $H$ and delete all incident edges to $v$;
            \item $Q \gets Q \cup \set{u \in V(H) \;: \; \deg_H(u) \le q - 1}$;
        \end{enumerate}
        \item from the tail of $S$ to the head, color the vertex with the smallest color which is not used by assigned neighbors.
    \end{enumerate}
    When $G$ has no $q$-core, it holds that when $Q = \emptyset$, $H$ is empty and thus $S$ contains all vertices in $V(G)$.
    The feasibility of the coloring comes directly from the order to construct $S$ and the total bit operations is at most $O(\abs{V(G)} + \abs{E(G)} + q(\abs{V(G)} + \abs{E(G)}) \ln q) = \poly{n, q}$.
\end{proof}
\section{Spatial Mixing of Multi-spin Systems} \label{sec:spatial-mixing}
In this section, we show the strong spatial mixing of multi-spin systems \Cref{thm:multi-spin-system-spatial-mixing}.
The key ingredient is the \emph{ratio recursion} of spin systems, which has been widely used in the analysis of proper coloring models \cite{GK12,GKM15}.

To describe the recursion, we first introduce the concept of pinnings.
\begin{definition} \label{def:pinnings}
    Given an instance $\Phi = (G = (V, E), [q], \boldsymbol A = (A_e)_{e \in E}, \boldsymbol\lambda = (\lambda_v)_{v \in V})$ and a vertex $v \in V$, assume that its neighbors are $v_1, \ldots, v_d$ where $d = \deg_G(v)$.
    For two distinct colors $a, b \in [q]$ and $i \in [d]$, define the child instance $\Phi_{i, a, b}^v = (G', [q], \boldsymbol A', \boldsymbol \lambda')$ as:
    \begin{itemize}
        \item the underlying graph $G' = G[V \setminus \set v]$ is the induced subgraph of $G$ by removing $v$;
        \item $\boldsymbol A' = (A_e)_{e \in E \setminus \set{(v, v_1), \ldots, (v, v_d)}}$;
        \item $\boldsymbol\lambda' = (\lambda'_v)_{V \setminus \set v}$ is defined as: for $j = 1, \ldots, i - 1$, $\lambda'_{v_j}(c) = A_{(v, v_j)}(a, c) \lambda_{v_j}(c)$; for $j = i + 1, \ldots, d$, $\lambda'_{v_j}(c) = A_{(v, v_j)}(b, c) \lambda_{v_j}(c)$; and $\lambda'_u = \lambda_u$ for remaining vertices.
    \end{itemize}
    Let $p_{i, a, b}^v$ be the marginal vector of $v_i$ in $\Phi_{i, a, b}^v$, \IE, $p_{i, a, b}^v(c) = \mu_{\Phi_{i, a, b}^v}(v_i \gets c)$ for every $c \in [q]$.
    When the context is clear, we simply use $\Phi_i$ to denote $\Phi_{i, a, b}^v$ and $p_i$ to denote $p_{i, a, b}^v$.
\end{definition}

Based on \Cref{def:pinnings}, we have the following recursion for the marginal ratio, which plays a central role in the proof of \Cref{thm:multi-spin-system-spatial-mixing,thm:multi-spin-system}.
\begin{lemma} \label{lem:spin-ratio-recursion}
    Given an instance $\Phi = (G = (V, E), [q], \boldsymbol A, \boldsymbol\lambda)$, for every vertex $v \in V$ and $a, b \in [q]$, it holds that
    \begin{align*}
        \frac{\mu_\Phi(v \gets a)}{\mu_\Phi(v \gets b)} = \frac{\lambda_v(a)}{\lambda_v(b)} \prod_{i = 1}^d \frac{\sum_{c \in [q]} A_{(v, v_i)}(a, c) \mu_{\Phi_i}(v_i \gets c)}{\sum_{c \in [q]} A_{(v, v_i)}(b, c) \mu_{\Phi_i}(v_i \gets c)}.
    \end{align*}
\end{lemma}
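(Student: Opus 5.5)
The plan is a telescoping argument. Let $Z(\Psi)$ denote partition functions. Write $\mu_\Phi(v\gets a)/\mu_\Phi(v\gets b) = Z_a/Z_b$, where $Z_c$ is the total weight of configurations with $v\gets c$. I interpolate between $Z_a$ and $Z_b$ through intermediate quantities $W_0,\dots,W_d$. Here $W_i$ is $\lambda_v$-free: it is the partition function on $G'=G[V\setminus\{v\}]$ where the neighbors $v_1,\dots,v_i$ carry the tilted fields $A_{(v,v_j)}(a,\cdot)\lambda_{v_j}$, and $v_{i+1},\dots,v_d$ carry the tilted fields $A_{(v,v_j)}(b,\cdot)\lambda_{v_j}$. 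Summing over $\sigma(V\setminus\{v\})$ with $v$ fixed gives $Z_a=\lambda_v(a)W_d$ and $Z_b=\lambda_v(b)W_0$, since fixing $v\gets c$ just multiplies each neighbor $v_j$ by $A_{(v,v_j)}(c,\sigma(v_j))$. Hence
\[
\frac{Z_a}{Z_b}=\frac{\lambda_v(a)}{\lambda_v(b)}\prod_{i=1}^d\frac{W_i}{W_{i-1}}.
\]

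Next I identify each factor with the child instance. By Definition~\ref{def:pinnings}, $\Phi_i=\Phi^v_{i,a,b}$ has $v_j$ tilted by $a$ for $j<i$, $v_j$ tilted by $b$ for $j>i$, and $v_i$ carrying its original field $\lambda_{v_i}$. Then $W_i$ and $W_{i-1}$ are both obtained from $Z(\Phi_i)$ by inserting the extra weight at $v_i$. Decomposing by $\sigma(v_i)=c$ gives $W_i=\sum_c A_{(v,v_i)}(a,c)\,Z(\Phi_i;v_i\gets c)$ and $W_{i-1}=\sum_c A_{(v,v_i)}(b,c)\,Z(\Phi_i;v_i\gets c)$. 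Dividing numerator and denominator by $Z(\Phi_i)$ turns each ratio into the stated quotient of $\sum_c A(a,c)\mu_{\Phi_i}(v_i\gets c)$ over $\sum_c A(b,c)\mu_{\Phi_i}(v_i\gets c)$, and the product gives the lemma.

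The only genuine obstacle is well-definedness, namely nonvanishing denominators. Each $Z(\Phi_i)$ and each $W_{i-1}$ must be positive, or else the identity is read with the convention that the relevant marginals exist. For positive interactions this is automatic, because every weight is positive. In general I would state the lemma under the assumption that $\mu_\Phi(v\gets b)>0$. Note that $W_{i-1}>0$ for all $i$ does not follow from $W_0>0$ alone, since $W_{i-1}$ also involves $a$-tilts. So for zero-interaction models such as colorings I would either assume positivity of all the intermediate $W_i$ or handle it separately, as the paper does later via permissive blocks. Apart from this caveat, the proof is routine bookkeeping of which neighbors carry which tilt.
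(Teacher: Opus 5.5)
Your proposal is correct and is essentially the paper's proof: your $W_i$ are exactly the paper's telescoping quantities $F_i$ from \eqref{eq:spin-telescoping}, with $Z_a=\lambda_v(a)F_d$ and $Z_b=\lambda_v(b)F_0$, and each ratio $F_i/F_{i-1}$ is identified with the child instance $\Phi_i$ by decomposing over the spin of $v_i$ and dividing by $\Z(\Phi_i)$. Your remark about nonvanishing denominators is a sensible addition that the paper leaves implicit; it is automatic for the positive interactions where the lemma is applied, and the coloring case instead uses the block recursion of \Cref{lem:coloring-ratio-recursion}.
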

\begin{proof}
    The proof is similar to \cite[Proposition 1]{LY13}.
    For simplicity, we use $A_i$ to denote $A_{(v, v_i)}$.
    For a partial assignment $\pi$ on $V \setminus \set v$, define
    \begin{align*}
        W(\pi) \defeq \prod_{u \in V \setminus \set v} \lambda_u(\pi(u)) \prod_{(u, w) \in E \setminus \set{(v, v_1), \ldots, (v, v_d)}} A_{(u, w)}(\pi(u), \pi(w)).
    \end{align*}
    For every $c \in [q]$, let
    \begin{align*}
        F(c) \defeq \sum_{\pi : V \setminus \set v \to [q]} W(\pi) \prod_{i = 1}^d A_{(v, v_i)}(c, \pi(v_i)).
    \end{align*}
    By \eqref{eq:spin-Gibbs-distribution},
    \begin{align*}
        \frac{\mu_\Phi(v \gets a)}{\mu_\Phi(v \gets b)} = \frac{\lambda_v(a) F(a)}{\lambda_v(b) F(b)}.
    \end{align*}
    To telescope $F(a)/F(b)$, for $i = 0, \ldots, d$, define
    \begin{align} \label{eq:spin-telescoping}
        F_i \defeq \sum_{\pi : V \setminus \set v \to [q]} W(\pi) \prod_{j = 1}^i A_{(v, v_j)}(a, \pi(v_j)) \prod_{j = i + 1}^d A_{(v, v_j)}(b, \pi(v_j)).
    \end{align}
    Then it holds that $F(a)/F(b) = \prod_{i = 1}^d F_i/F_{i - 1}$ since $F(a) = F_d$ and $F(b) = F_0$.
    For a fixed $i = 1, \ldots, d$, write
    \begin{align*}
        W_i(\pi) = W(\pi) \prod_{j = 1}^{i - 1} A_{(v, v_j)}(a, \pi(v_j)) \prod_{j = i + 1}^d A_{(v, v_j)}(b, \pi(v_j)).
    \end{align*}
    Observe that the partition function of $\Phi_i$ is equal to
    \begin{align*}
        \Z(\Phi_i) = \sum_{\pi : V \setminus \set v \to [q]} W_i(\pi)
    \end{align*}
    and for the marginal vector $p_i$ on $v_i$,
    \begin{align*}
        \forall c \in [q], \quad p_i(c) = \frac1{\Z(\Phi_i)} \sum_{\pi : V \setminus \set v \to [q], \pi(v_i) = c} W_i(\pi).
    \end{align*}
    Then we have
    \begin{align*}
        F_i &= \sum_{\pi : V \setminus \set v \to [q]} A_{(v, v_i)}(a, \pi(v_i)) W_i(\pi) \\
        &= \Z(\Phi_i) \sum_{c \in [q]} A_{(v, v_i)}(a, c) p_i(c)
    \end{align*}
    and similarly $F_{i - 1} = \Z(\Phi_i) \sum_{c \in [q]} A_{(v, v_i)}(b, c) p_i(c)$.
    Therefore
    \begin{align*}
        \frac{F_i}{F_{i - 1}} = \frac{\sum_{c \in [q]} A_{(v, v_i)}(a, c) p_i(c)}{\sum_{c \in [q]} A_{(v, v_i)}(b, c) p_i(c)}.
    \end{align*}
    Plugging it into \eqref{eq:spin-telescoping}, we conclude the lemma.
\end{proof}

\begin{proof}[Proof of \Cref{thm:multi-spin-system-spatial-mixing}]
    Fix the boundary set $\Lambda$ and two different partial assignments $\sigma, \tau$ on $\Lambda$.
    Let $\Gamma = \Delta(\sigma, \tau)$, \IE, the discrepancy of $\sigma$ and $\tau$.
    Set $\overline \Delta = 4\arctanh c$ and $\rho = D \, c$.
    For convenience, denote by $H_\Phi(v, \Lambda, \sigma, \tau)$ the Hilbert distance $d_H(\mu_\Phi(v \gets \cdot \;|\; \Lambda \gets \sigma), \mu_\Phi(v \gets \cdot \;|\; \Lambda \gets \tau))$.

    We will show that
    \begin{align} \label{eq:spatial-mixing-Hilbert-decay}
        H_\Phi(v, \Lambda, \sigma, \tau) \le \frac{K n^\alpha D \overline{\Delta}}{1 - \rho} \rho^{\dist_G(v, \Gamma) - 1}
    \end{align}
    and therefore by \Cref{lem:TV-distance-by-Hilbert-distance}, we conclude the theorem.

    We first introduce \emph{local activities} on $V \setminus \Lambda$ after a pinning $\pi$ on $\Lambda$.
    Let
    \begin{align*}
        \forall x \in V \setminus \Lambda, c \in [q], \quad \lambda_x^\pi(c) = \lambda_x(c) \prod_{y \in \Lambda \cap N_G(x)} A_{(x, y)}(c, \pi(y)).
    \end{align*}
    The local Hilbert discrepancy is induced by local activities conditional on different pinnings:
    \begin{align*}
        H^\local(x) \defeq d_H(\lambda_x^\sigma, \lambda_x^\tau).
    \end{align*}
    By direct calculation on $d_H(\lambda_x^\sigma, \lambda_x^\tau)$, it holds that
    \begin{align*}
        H^\local(x) \le \sum_{y \in \Gamma \cap N_G(x)} \Delta_B\ab(A_{(x, y)}).
    \end{align*}
    We remark here $H^\local(x) = 0$ unless $x$ is adjacent to $\Gamma$.

    To show \eqref{eq:spatial-mixing-Hilbert-decay}, we specify a sub-instance $\mathcal I$ by a vertex subset $S = S_{\mathcal I}$ and an external field $\boldsymbol h = (h_x : [q] \to \mathbb R_{> 0})_{x \in S}$ with the original edge interactions restricted on $G[S]$.
    We say a pair of instances $(\mathcal I = (S, \boldsymbol h), \mathcal I' = (S, \boldsymbol h'))$ are \emph{admissible} if $S \subseteq V \setminus \Lambda$ and for every $x \in S$,
    \begin{align*}
        \forall c \in [q], \quad \frac{h_x(c)}{h_x'(c)} = \frac{\lambda_x^\sigma(c)}{\lambda_x^\tau(c)}.
    \end{align*}
    Consider the Hilbert distance $d_H(\mu_{\mathcal I}(x \gets \cdot), \mu_{\mathcal I'}(x \gets \cdot))$.
    Assume that $x_1, \ldots, x_d$ be the neighbors of $x$ in $G[S]$.
    For two positive probability $p, p'$ and $a, b \in [q]$, by \Cref{lem:Birknoff-contraction},
    \begin{align*}
        \abs{\ln\frac{\sum_{c \in [q]} A_{(x, x_i)}(a, c) p(c)}{\sum_{c \in [q]} A_{(x, x_i)}(b, c) p(c)} - \ln\frac{\sum_{c \in [q]} A_{(x, x_i)}(a, c) p'(c)}{\sum_{c \in [q]} A_{(x, x_i)}(b, c) p'(c)}} &\le d_H\ab(A_{(x, x_i)}p, A_{(x, x_i)}p') \\
        &\le c_B\ab(A_{(x, x_i)}) d_H(p, p').
    \end{align*}
    Apply \Cref{lem:spin-ratio-recursion}, and we obtain that
    \begin{align*}
        d_H(\mu_{\mathcal I}(x \gets \cdot), \mu_{\mathcal I'}(x \gets \cdot)) \le H^\local(x) + \max_{a, b \in [q]} \sum_{i = 1}^d c_B\ab(A_{(x, x_i)}) d_H(\mu_{\mathcal I_i}(x_i \gets \cdot), \mu_{\mathcal I'_i}(x_i \gets \cdot))
    \end{align*}
    where $\mathcal I_i, \mathcal I'_i$ are generated by \Cref{def:pinnings} from $\mathcal I$, $\mathcal I'$ respectively.

    For a vertex subset $S$ and a vertex $x \in S$, introduce the function $\Inf(S, x)$ in the following recursive form:
    \begin{align} \label{eq:spatial-mixing-recursive-bound}
        \Inf(S, x) = H^\local(x) + \sum_{i = 1}^d c_B\ab(A_{(x, x_i)}) \Inf(S \setminus \set{x}, x_i)
    \end{align}
    with boundary condition $\Inf(S, x) = H^\local(x)$ for $S = \set{x}$ or $x$ is isolated in the graph $G[S]$.
    By induction, $d_H(\mu_{\mathcal I}(x \gets \cdot), \mu_{\mathcal I'}(x \gets \cdot))$ is upper bounded by $\Inf(S, x)$ for every pair of admissible $(\mathcal I, \mathcal I')$.
    In addition, note that the instance $\Phi$ with pinning $\pi$ on $\Lambda$ is specified by $V \setminus \Lambda$ and $\boldsymbol{\lambda}^\pi = (\lambda_x^\pi)_{x \in V \setminus \Lambda}$.
    Therefore, it holds that
    \begin{align*}
        H_\Phi(v, \Lambda, \sigma, \tau) \le \Inf(V \setminus \Lambda, v).
    \end{align*}
    Unfold \eqref{eq:spatial-mixing-recursive-bound} and substitute the bound of $H^\local(x)$.
    Every non-zero term is encoded by a self-avoiding walk $P = (v = x_0, \ldots, x_{\ell - 1})$ in $G[V \setminus \Lambda]$ and a vertex $y \in \Gamma$ incident to $x_{\ell - 1}$.
    Append $y$ to the tail of $P$.
    Therefore $P$ is a self-avoiding walk from $v$ to $\Gamma$ of length $\ell$ with contribution at most $c^{\ell - 1} \cdot \overline{\Delta}$.
    Hence,
    \begin{align*}
        \Inf(V \setminus \Lambda, v) &\le \sum_{\ell = \dist_G(v, \Gamma)}^{n - 1} c^{\ell - 1} \cdot \overline{\Delta} \cdot \abs{\SAW_G(v, \ell)} \\
        &\le K n^\alpha D \cdot \overline{\Delta} \sum_{\ell \ge \dist_G(v, \Gamma) - 1} \rho^\ell \\
        &= \frac{K n^\alpha D \overline \Delta}{1 - \rho} \rho^{\dist_G(v, \Gamma) - 1}.
    \end{align*}
    Combining all things together, we conclude the theorem.
\end{proof}

\section{Approximate Counting Multi-spin Systems with Positive Interactions} \label{sec:multi-spin-connective-constant}
In this section, we design the algorithm in \Cref{thm:multi-spin-system}.
Based on the recursion in \Cref{sec:spatial-mixing}, we show how to recursively construct rational polytopes which approximately estimate the true marginal vector and analyze them when the instance is defined on a graph of bounded connective constant in \Cref{subsec:spin-polytope-construction} and we derive the approximation scheme to the partition function by polytopes in \Cref{subsec:spin-approximation-scheme}..

\subsection{Rational polytopes induced by the recursion} \label{subsec:spin-polytope-construction}
Recall \Cref{lem:spin-ratio-recursion}.
Consider a query $(\Phi = (G = (V, E), [q], \boldsymbol A, \boldsymbol\lambda), v, t)$ for an instance $\Phi$, a vertex $v \in V$ and an integer $t \ge 0$ representing the recursive depth
We recursively construct a rational polytope $\polytope = \polytope_t(\Phi, v)$ to estimate the marginal vector $\mu_\Phi(v \gets \cdot)$ as \Cref{def:spin-polytope}.
\begin{definition}[Rational polytope for spin systems] \label{def:spin-polytope}
    The variables of $\polytope_t(\Phi, v)$ is $(z_c)_{c \in [q]}$.
    When $v$ is an isolated vertex in $G$, $\polytope_t(\Phi, v)$ is the exact polytope
    \begin{align*}
        \forall c \in [q], \quad z_c = \frac{\lambda_v(c)}{\sum_{k \in [q]} \lambda_v(k)}.
    \end{align*}
    
    When $v$ is not isolated, assume that its neighbors in $G$ are $v_1, \ldots, v_d$ where $d = \deg_G(v)$.
    The polytope $\polytope_t(\Phi, v)$ at first contains:
    \begin{itemize}
        \item \textbf{Non-negativity.}
        \begin{align} \label{eq:spin-LFP-non-negativity}
            \forall c \in [q], \quad z_c \ge 0.
        \end{align}
        \item \textbf{Normalization.}
        \begin{align} \label{eq:spin-LFP-normalization}
            \sum_{c \in [q]} z_c = 1.
        \end{align}
    \end{itemize}
    
    We separate cases based on the value of the depth $t$.
    \begin{itemize}
        \item $t = 0$.
        For every distinct spins $a, b \in [q]$, the polytope contains
        \begin{align} \label{eq:spin-LFP-depth-zero}
            L_{a, b}^0 \cdot z_b \le z_a \le U_{a, b}^0 \cdot z_b
        \end{align}
        where
        \begin{align*}
            L_{a, b}^0 \defeq \frac{\lambda_v(a)}{\lambda_v(b)} \prod_{i = 1}^d \min_{c \in [q]}\; \frac{A_{(v, v_i)}(a, c)}{A_{(v, v_i)}(b, c)}, \quad
            U_{a, b}^0 \defeq \frac{\lambda_v(a)}{\lambda_v(b)} \prod_{i = 1}^d \max_{c \in [q]}\; \frac{A_{(v, v_i)}(a, c)}{A_{(v, v_i)}(b, c)}.
        \end{align*}

        \item $t > 0$.
        For every distinct spins $a, b \in [q]$ and $i = 1, \ldots, d$, construct an instance $\Phi_i = \Phi_{i, a, b}^v$.
        Recursively obtain the polytope $\polytope_i = \polytope_{t - 1}(\Phi_i, v_i)$.
        Solve the following two LFPs:
        \begin{align*}
            \ell_{i, a, b}^* \defeq \min_{z \in \polytope_i}\; \frac{\sum_{c \in [q]} A_{(v, v_i)}(a, c) z_c}{\sum_{c \in [q]} A_{(v, v_i)}(b, c) z_c}, \quad
            u_{i, a, b}^* \defeq \max_{z \in \polytope_i}\; \frac{\sum_{c \in [q]} A_{(v, v_i)}(a, c) z_c}{\sum_{c \in [q]} A_{(v, v_i)}(b, c) z_c}.
        \end{align*}
        and set $\underline\ell_{i, a, b} = \mathsf{round}_P^-(\ell_{i, a, b}^*), \overline u_{i, a, b} = \mathsf{round}_P^+(u_{i, a, b}^*)$.
        Setting $L_{a, b}^t = \frac{\lambda_v(a)}{\lambda_v(b)} \prod_{i = 1}^d \underline\ell_{i, a, b}$ and $U_{a, b}^t = \frac{\lambda_v(a)}{\lambda_v(b)} \prod_{i = 1}^d \overline u_{i, a, b}$, the polytope then contains
        \begin{align} \label{eq:spin-LFP-ratio-recursion}
            L_{a, b}^t \cdot z_b \le z_a \le U_{a, b}^t \cdot z_b.
        \end{align}
    \end{itemize}
\end{definition}

\begin{fact}[Well-definedness and soundness of polytopes] \label{fact:spin-polytope-soundness}
    At every depth the recursive construction is well-defined, the true marginal vector at $v$ belongs to $\polytope_t(\Phi, v)$ and the bit complexity to encode $\polytope_t(\Phi, v)$ is at most $O(q^3(\abs\Phi + nP))$.
\end{fact}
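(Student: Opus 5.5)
We argue by induction on the depth $t$, with the statement strengthened to hold simultaneously for every instance $\Phi$ with positive interactions and every vertex $v$. Each step proves three things: the construction is well-defined, the true marginal vector lies in the polytope, and the encoding length is bounded.

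\textbf{Base cases.} If $v$ is isolated, the polytope is a single point. That point is exactly $\mu_\Phi(v\gets\cdot)$, because the Gibbs distribution factorizes over the isolated vertex. If $t=0$ and $v$ is not isolated, we take the exact recursion of \Cref{lem:spin-ratio-recursion}. Each factor $\frac{\sum_c A_{(v,v_i)}(a,c)p_i(c)}{\sum_c A_{(v,v_i)}(b,c)p_i(c)}$ is a weighted average of the entrywise ratios $A(a,c)/A(b,c)$ with weights proportional to $A(b,c)p_i(c)$. It therefore lies between $\min_c$ and $\max_c$ of those ratios. Multiplying these bounds gives $L^0_{a,b}\le \mu(v\gets a)/\mu(v\gets b)\le U^0_{a,b}$. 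Since every marginal is positive (all interactions and fields are positive), we multiply through by $\mu(v\gets b)$ to obtain \eqref{eq:spin-LFP-depth-zero}. Non-negativity and normalization hold trivially.

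\textbf{Inductive step ($t>0$).} Each child instance $\Phi_i=\Phi^v_{i,a,b}$ again has positive fields and interactions, since the new fields are $A(a,c)\lambda(c)>0$. The induction hypothesis then gives $p_i\in\polytope_i$.

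For well-definedness, $\polytope_i$ is nonempty because it contains $p_i$. It is bounded because non-negativity and normalization place it inside the simplex. The denominator $\sum_c A(b,c)z_c$ is at least $\min_c A(b,c)>0$ on the simplex. \Cref{thm:Charness-Cooper-transformation} therefore applies, the optima $\ell^*,u^*$ exist, are rational, and are positive. The same weighted-average argument confirms positivity, so the rounding oracles are defined.

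For soundness, $p_i\in\polytope_i$ gives $\underline\ell_{i,a,b}\le\ell^*\le$ (true factor) $\le u^*\le\overline u_{i,a,b}$. Taking the product over $i$ with the $\lambda$-ratio and invoking \Cref{lem:spin-ratio-recursion} yields \eqref{eq:spin-LFP-ratio-recursion} at the true marginal.

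\textbf{Bit complexity.} This is the step needing care, and it is the main obstacle, because the depth-$t$ coefficients must not accumulate encoding length through the recursion. The plan is to note that the polytope has $O(q^2)$ ratio constraints plus $q+1$ others. Each coefficient $L^t_{a,b}$ or $U^t_{a,b}$ is a product of the $\lambda$-ratio with $d\le n$ rounded numbers.

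To bound a rounded number, we use the rounding property that $\mathsf{round}_P^\pm(x)$ takes $O(L+P)$ bits, where $[\ell,u]$ is an a priori interval for $x$ of bit length $L$. The interval $\ell,u$ is the min and max of the entries of $A(a,\cdot)/A(b,\cdot)$, so $L=O(\abs\Phi)$ and this bound is independent of the child polytope. Concretely, the rounded value has the form $m\cdot 2^{s-P+1}$ with $m<2^P$, and $\abs{s}$ is bounded by the bit size of the entries of $A$.

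Hence each product has at most $O(\abs\Phi+dP)$ bits, and summing over the $O(q^2)$ constraints gives $O(q^3(\abs\Phi+nP))$ overall. This holds uniformly in $t$, which closes the induction without any growth in encoding length across depths. The $t=0$ and isolated cases are bounded directly by $O(q^2\abs\Phi)$.
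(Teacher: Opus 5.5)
Your proposal is correct and follows the paper's induction essentially step for step. You use the same base cases and the same soundness argument via \Cref{lem:spin-ratio-recursion} with $p_i\in\polytope_i$. You also make the same key observation: the rounding interval is fixed by the entry ratios of $A_{(v,v_i)}$, so encoding length does not grow with depth. Beyond the paper, you explicitly check the well-definedness conditions it leaves implicit (nonemptiness, boundedness, a positive denominator for Charnes--Cooper, and positivity for the rounding oracles).

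One small tightening: state the per-factor bound as $O(\bit(A_{(v,v_i)})+P)$ rather than with $L=O(\abs\Phi)$. Then the product over the $d$ neighbours sums to $O(\bit(\lambda_v)+\sum_i\bit(A_{(v,v_i)})+dP)=O(\abs\Phi+dP)$, exactly as in the paper.
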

\begin{proof}
    We prove the fact by induction.
    When the polytope contains only an exact point (the case $v$ is isolated), the fact is trivial.
    Otherwise, at depth $0$, it is direct to see the true marginal vector $p_v$ at $v$ satisfies constraints \eqref{eq:spin-LFP-non-negativity} and \eqref{eq:spin-LFP-normalization}.
    By \Cref{lem:spin-ratio-recursion}, it is easy to verify $p_v$ satisfies \eqref{eq:spin-LFP-depth-zero}.
    The bit complexity to encode $\polytope_0(\Phi, v)$ is at most $O(q^3 \abs \Phi)$ since we have $q$ variables and $O(q^2)$ constraints with each constraint encoded with at most $O(\abs\Phi)$ bits.

    Suppose that the assertion holds through depth $t - 1$.
    It is trivial to see the true marginal vector $p_v$ satisfies \eqref{eq:spin-LFP-non-negativity} and \eqref{eq:spin-LFP-normalization}.
    Since the true marginal vector $p_i$ belongs to $\polytope_i$ by induction, by \Cref{lem:spin-ratio-recursion}, $p_v$ also satisfies the constraint \eqref{eq:spin-LFP-ratio-recursion} even after applying rounding oracles.
    Therefore we conclude that $p_v \in \polytope_t(\Phi, v)$.
    To compute the number of bits to encode $\polytope_t(\Phi, v)$, note that we have $q$ variables and $O(q^2)$ linear constraints.
    For a constraint \eqref{eq:spin-LFP-ratio-recursion}, note that for each $i = 1, \ldots, d$, it holds that
    \begin{align*}
        \min_{c \in [q]} \frac{A_{(v, v_i)}(a, c)}{A_{(v, v_i)}(b, c)} \le \ell_{i, a, b}^* \le u_{i, a, b}^* \le \max_{c \in [q]} \frac{A_{(v, v_i)}(a, c)}{A_{(v, v_i)}(b, c)}.
    \end{align*}
    Hence, it takes $O(\bit(A_{(v, v_i)}) + P)$ bits to encode the rounded rational number $\underline \ell_{i, a, b}, \overline u_{i, a, b}$ and thus $O(\bit(\lambda_v) + dP + \sum_{i = 1}^d \bit(A_{(v, v_i)}))$ bits to encode $L_{a, b}^t$ and $U_{a, b}^t$.
    Therefore, the bit complexity to encode $\polytope_t(\Phi, v)$ is at most $O(q^3(\abs\Phi + nP))$.
\end{proof}

For an admissible polytope $\polytope$, we use $\diam_H(\polytope)$ to denote the Hilbert diameter of $\polytope$, \IE,
\begin{align*}
    \diam_H(\polytope) \defeq \max_{x, y \in \polytope} d_H(x, y).
\end{align*}
Based on the Birkhoff contraction \Cref{lem:Birknoff-contraction}, the following recursion is on the control of the accuracy in $\polytope_t(\Phi, v)$ by its child polytopes.
\begin{proposition} \label{prop:spin-accuracy-recursion}
    Fix an admissible feasible query $(\Phi, v, t)$.
    When $v$ is isolated, $\diam_H(\polytope_t(\Phi, v)) = 0$.
    Otherwise, assume that its neighbors are $v_1, \ldots, v_d$.
    It holds that
    \begin{align*}
        \diam_H(\polytope_t(\Phi, v)) \le \begin{cases}
            \sum_{i = 1}^d \Delta_B(A_{(v, v_i)}), & t = 0 \\
            \max_{a, b \in [q]} \sum_{i = 1}^d \ab(c_B(A_{v, v_i}) \diam_H(\polytope_{t - 1}(\Phi_{i, a, b}^v, v_i)) + 2\xi_P), & t > 0
        \end{cases}\;.
    \end{align*}
\end{proposition}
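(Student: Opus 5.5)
The plan is to bound the Hilbert distance between any two points $x, y \in \polytope = \polytope_t(\Phi, v)$ using only the ratio constraints \eqref{eq:spin-LFP-depth-zero} or \eqref{eq:spin-LFP-ratio-recursion}. When $v$ is isolated, $\polytope$ is a single point, so $\diam_H = 0$. Otherwise, fix $x, y \in \polytope$. I would first observe that both are positive: by \eqref{eq:spin-LFP-normalization} some coordinate $z_b>0$, and every $L_{a,b}^t>0$ because $A_e$, $\lambda_v$ and the rounded values are positive. Hence \eqref{eq:spin-LFP-ratio-recursion} forces $z_a \ge L^t_{a,b} z_b > 0$ for all $a$. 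Then, by \eqref{eq:Hilbert-distance},
\[
d_H(x,y) = \max_{a,b}\Bigl(\ln\frac{x_a}{x_b} - \ln\frac{y_a}{y_b}\Bigr) \le \max_{a\neq b}\ \ln\frac{U^t_{a,b}}{L^t_{a,b}},
\]
since both ratios $x_a/x_b$ and $y_a/y_b$ lie in $[L^t_{a,b}, U^t_{a,b}]$. So everything reduces to bounding $\ln(U^t_{a,b}/L^t_{a,b})$ for fixed $a,b$.

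For $t=0$, the factors $\lambda_v(a)/\lambda_v(b)$ cancel and
\[
\ln\frac{U^0_{a,b}}{L^0_{a,b}} = \sum_i \ln\frac{\max_c A_i(a,c)/A_i(b,c)}{\min_{c'} A_i(a,c)/A_i(b,c')},
\]
which is $\ln \max_{c,c'} \frac{A_i(a,c)A_i(b,c')}{A_i(b,c)A_i(a,c')} \le \Delta_B(A_i)$ by \eqref{eq:Birknoff-projective-diameter}.

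For $t>0$, I would write $\ln(U^t_{a,b}/L^t_{a,b}) = \sum_i \ln(\overline u_{i,a,b}/\underline\ell_{i,a,b})$. By \eqref{eq:rounding-accuracy}, each term is at most $\ln(u^*_{i,a,b}/\ell^*_{i,a,b}) + 2\xi_P$. Both LFP optima are attained on the compact polytope $\polytope_i$ (the denominator is positive there, as in the positivity argument above), say at $z, z' \in \polytope_i$. Then, with $A=A_{(v,v_i)}$,
\[
\ln\frac{u^*}{\ell^*} = \ln\frac{(Az)_a (Az')_b}{(Az)_b (Az')_a} \le d_H(Az, Az') \le c_B(A)\, d_H(z,z') \le c_B(A)\,\diam_H(\polytope_i),
\]
using \Cref{lem:Birknoff-contraction}. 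Summing over $i$ and maximizing over $a,b$ gives the claim.

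The only delicate point is positivity and attainment: points of $\polytope_i$ must be strictly positive so that $d_H$ is defined and the LFP optima are realized by actual points. Both follow from the strictly positive lower ratio bounds $L_{a,b}$ together with normalization, exactly as in the argument for $\polytope$ itself, and the well-definedness part of \Cref{fact:spin-polytope-soundness}.
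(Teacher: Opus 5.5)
Your proof is correct and follows essentially the same route as the paper: bound the Hilbert diameter by $\max_{a\neq b}\ln(U_{a,b}/L_{a,b})$, control the $t=0$ ratio by $\Delta_B$, and for $t>0$ combine the $2\xi_P$ rounding loss with Birkhoff contraction applied to the LFP optimizers; your positivity and attainment remarks only make explicit what the paper leaves implicit. One small typo: in your $t=0$ display the minimum should be of $A_i(a,c')/A_i(b,c')$, which is what your next line correctly uses.
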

\begin{proof}
    When $v$ is isolated, $\polytope_t(\Phi, v)$ contains only a single point and thus $\diam_H(\polytope_t(\Phi, v)) = 0$.
    Suppose that $v$ is not isolated and its neighbors are $v_1, \ldots, v_d$.
    For $t = 0$, it holds that
    \begin{align*}
        \ln \frac{U_{a, b}^0}{L_{a, b}^0} = \sum_{i = 1}^d \ab(\ln \max_{c, d \in [q]} \frac{A_{(v, v_i)}(a, c)A_{(v, v_i)}(b, d)}{A_{(v, v_i)}(a, d)A_{(v, v_i)}(b, c)}) \le \sum_{i = 1}^d \Delta_B(A_{(v, v_i)}).
    \end{align*}
    Then for every $x, y \in \polytope_0(\Phi, v)$, it holds that
    \begin{align*}
        d_H(x, y) = \max_{a, b \in [q]} \ln \frac{x(a)y(b)}{x(b)y(a)} \le \max_{a, b \in [q]} \ln\frac{U_{a, b}^0}{L_{a, b}^0} \le \sum_{i = 1}^d \Delta_B(A_{(v, v_i)}).
    \end{align*}
    For $t > 0$, for every $i = 1, \ldots, d$, by \Cref{lem:Birknoff-contraction},
    \begin{align*}
        \ln\frac{u_{i, a, b}^*}{\ell_{i, a, b}^*} &\le \max_{x, y \in \polytope_{t - 1}(\Phi_i, v_i)} d_H(A_{(v, v_i)}x, A_{(v, v_i)}y) \\
        &\le c_B(A_{(v, v_i)}) \diam_H(\polytope_{t - 1}(\Phi_i, v_i)).
    \end{align*}
    Together with \eqref{eq:rounding-accuracy}, it holds that
    \begin{align*}
        \ln\frac{U_{a, b}^t}{L_{a, b}^t} \le \sum_{i = 1}^d \ab(c_B(A_{(v, v_i)}) \diam_H(\polytope_{t - 1}(\Phi_{i, a, b}^v, v_i)) + 2\xi_P).
    \end{align*}
    A similar argument to the case $t = 0$ shows
    \begin{align*}
        \diam_H(\polytope_t(\Phi, v)) \le \max_{a, b \in [q]} \ln \frac{U_{a, b}^t}{L_{a, b}^t}
    \end{align*}
    and thus we conclude the full proposition.
\end{proof}

The following bound of accuracy is a corollary of \Cref{prop:spin-accuracy-recursion}.
\begin{corollary} \label{cor:spin-accuracy}
    Given an instance $\Phi = (G = (V, E), [q], \boldsymbol A, \boldsymbol\lambda)$ such that $G$ satisfies \eqref{eq:polynomial-connective-constant} and $D \cdot c_B(A_e) < 1$, for every integer $R \ge 0$ and $v \in V$, it holds that
    \begin{align} \label{eq:spin-accuracy-bound}
        \diam_H(\polytope_R(\Phi, v)) \le K n^\alpha D \ab(\overline\Delta \rho^R + \frac{2\xi_P}{1 - \rho})
    \end{align}
    where $c = \max_{e \in E} c_B(A_e)$, $\overline\Delta = 4\arctanh c$ and $\rho = D \cdot c$.
    Moreover, set
    \begin{align} \label{eq:spin-depth-accuracy-parameters}
        R = R_\eps = \left\lceil \frac{\ln(4Kn^\alpha D\max\set{1, \overline\Delta}/\eps)}{\ln(1/\rho)}\right\rceil, \quad
        P = P_\eps = \max \; \set{3, \left\lceil \ln_2 \frac{32Kn^\alpha D}{\eps(1 - \rho)}\right\rceil}.
    \end{align}
    For every feasible point $z \in \polytope_R(\Phi, v)$, it holds that for every $c \in [q]$,
    \begin{align*}
        \e^{-\eps} \mu_\Phi(v \gets c) \le z_c \le \e^\eps \mu_\Phi(v \gets c).
    \end{align*}
\end{corollary}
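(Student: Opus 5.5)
We unroll the recursion of \Cref{prop:spin-accuracy-recursion} exactly as the proof of \Cref{thm:multi-spin-system-spatial-mixing} unrolls \eqref{eq:spatial-mixing-recursive-bound}, then use \eqref{eq:polynomial-connective-constant} to sum over self-avoiding walks.

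\textbf{Step 1: the unrolled bound.} Define $\mathrm{Err}_t(S,x)$ for a vertex set $S$ and $x\in S$ as follows. It is $0$ if $x$ is isolated in $G[S]$, and $\sum_{y}\Delta_B(A_{(x,y)})$ over neighbors $y$ of $x$ in $G[S]$ if $t=0$. Otherwise
\[
\mathrm{Err}_t(S,x)=\sum_{y} \bigl(c_B(A_{(x,y)})\,\mathrm{Err}_{t-1}(S\setminus\{x\},y)+2\xi_P\bigr).
\]
The child instances $\Phi^v_{i,a,b}$ live on $G[V\setminus\{v\}]$, independently of $a,b$. The external fields change, but the edge interactions, hence $\Delta_B$ and $c_B$, do not. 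Therefore induction on $t$ using \Cref{prop:spin-accuracy-recursion} gives $\diam_H(\polytope_t(\Phi,v))\le \mathrm{Err}_t(V,v)$, since the maximum over $a,b$ disappears.

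Expanding $\mathrm{Err}_R(V,v)$ produces two kinds of terms.

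\emph{Boundary terms.} These are indexed by self-avoiding walks $v=x_0,\dots,x_{R+1}$ of length $R+1$. Each has weight $\prod_{j<R}c_B(A_{(x_j,x_{j+1})})\cdot\Delta_B(A_{(x_R,x_{R+1})})\le c^{R}\,\overline\Delta$, using $\Delta_B=4\arctanh c_B\le 4\arctanh c=\overline\Delta$.

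\emph{Rounding terms.} These are indexed by self-avoiding walks of length $\ell+1$ for $0\le \ell<R$, with weight at most $2\xi_P c^{\ell}$.

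Walks that terminate early because a vertex is isolated only drop terms, which is harmless. Summing and using $|\SAW_G(v,\ell)|\le Kn^\alpha D^\ell$ gives
\[
\diam_H\le Kn^\alpha D^{R+1}c^R\overline\Delta+\sum_{\ell\ge0}2\xi_P Kn^\alpha D^{\ell+1}c^\ell = Kn^\alpha D\Bigl(\overline\Delta\rho^R+\tfrac{2\xi_P}{1-\rho}\Bigr),
\]
which is \eqref{eq:spin-accuracy-bound}.

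\textbf{Step 2: plugging in $R_\eps,P_\eps$.}
\begin{itemize}
\item By the choice of $R_\eps$ we have $\rho^R\le \eps/(4Kn^\alpha D\max\{1,\overline\Delta\})$, so the first term is at most $\eps/4$.
\item We have $\xi_P=4\cdot 2^{-P}\le 4\eps(1-\rho)/(32Kn^\alpha D)$, so the second term is at most $Kn^\alpha D\cdot 8\eps(1-\rho)/(32Kn^\alpha D(1-\rho))=\eps/4$.
\end{itemize}
Hence $\diam_H(\polytope_R)\le \eps/2$. One should check that the unrounded LFP optima are positive and finite so that the rounding is defined. This holds because $A$ is positive, and it is covered by \Cref{fact:spin-polytope-soundness}.

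\textbf{Step 3: from Hilbert diameter to multiplicative error.} The true marginal $p=\mu_\Phi(v\gets\cdot)$ lies in $\polytope_R$ by \Cref{fact:spin-polytope-soundness}. So for any feasible $z$ we have $d_H(z,p)\le\eps/2$, i.e.\ $\max_c\ln(z_c/p_c)-\min_c\ln(z_c/p_c)\le\eps/2$. Positivity of $z$ follows from the ratio constraints with finite positive $L,U$ together with normalization.

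Both $z$ and $p$ sum to $1$, so $\max_c z_c/p_c\ge1\ge\min_c z_c/p_c$. Thus every ratio $z_c/p_c$ lies in $[\e^{-\eps/2},\e^{\eps/2}]\subseteq[\e^{-\eps},\e^{\eps}]$.

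\textbf{Main obstacle.} The delicate step is Step 1. The maximum over $(a,b)$ at each level selects different child instances, so the bound must be stated for a quantity that depends only on the graph and the interactions. The rounding errors also accumulate along every walk prefix, not just at the leaves. Both points are handled by the monotone majorant $\mathrm{Err}_t$ and by counting each prefix as a self-avoiding walk.
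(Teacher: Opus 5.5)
Your proof is correct and follows essentially the same route as the paper. You unfold the recursion of \Cref{prop:spin-accuracy-recursion} along self-avoiding walks from $v$, bound the depth-$R$ leaf terms by $\overline\Delta c^R\abs{\SAW_G(v,R+1)}$ and the rounding terms by $2\xi_P c^r\abs{\SAW_G(v,r+1)}$, and sum using \eqref{eq:polynomial-connective-constant}. Your graph-only majorant $\mathrm{Err}_t(S,x)$ is a slightly cleaner way to remove the maximum over $(a,b)$ than the paper's choice of a maximizing pair at each tree node, and your Steps~2--3 (including positivity of $z$ and the normalization argument $\min_c z_c/p_c\le 1\le\max_c z_c/p_c$) spell out what the paper leaves as ``substitute and obtain.''
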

\begin{proof}
    The proof is similar to \Cref{thm:multi-spin-system-spatial-mixing}.
    Fix an instance $\Phi = (G = (V, E), [q], \boldsymbol A, \boldsymbol \lambda)$, a vertex $v \in V$ and a depth $R \ge 0$.
    First we have
    \begin{align*}
        \Delta_B(A_e) = 4\arctanh c_B(A_e) \le 4\arctanh c = \overline\Delta.
    \end{align*}
    
    Consider the recursion in \Cref{prop:spin-accuracy-recursion}.
    We unfold the recursion into a rooted tree $\mathcal T$ with root $\varrho$.
    A node $\omega$ in $\mathcal T$ at level $0 \le r \le R$ is labelled by a derived instance $\Phi_\omega$ on the underlying graph $G_\omega$, a vertex $v_\omega$ and a remaining depth $R - r$.
    At every node $\omega$ at level $r < R$, we choose any pair of $(a_\omega, b_\omega)$ to attain the maximum in the recursion.
    Then for each vertex $u$ incident to $v$ in the instance $\Phi_\omega$, there is a child node $\theta$ of $\omega$ in $\mathcal T$ of level $r + 1$ with label corresponding to the recursive query.
    Assign to the edge $(\omega, \theta)$ the value
    \begin{align*}
        \kappa_{\omega, \theta} = c_B(A_{(v, u)}) \le c.
    \end{align*}
    
    Now we we define the weight function $F$ on $\mathcal T$ as $F(\varrho) = 1$ and
    \begin{align*}
        \forall \omega \neq \varrho, \quad F(\omega) = \prod_e \kappa_e
    \end{align*}
    where the product takes over all edges in the unique path from $\varrho$ to $\omega$.
    Using the above bound for $\kappa_e$, we have $F(\omega) \le c^r$ for every node $\omega$ at level $r$.
    By \Cref{prop:spin-accuracy-recursion}, it holds that
    \begin{align*}
        \diam_H(\polytope_R(\Phi, v)) &\le \sum_{\omega \in \mathcal T~\text{at level}~R} F(\omega) \sum_{u~\text{is a neighbor of $v_\omega$ in $\Phi_\omega$}} \Delta_B(A_{(v_\omega, u)}) \\
        &+ 2\xi_P \sum_{r = 0}^{R - 1} \sum_{\omega \in \mathcal T~\text{at level}~r} F(\omega) \deg_{G_\omega}(v_\omega).
    \end{align*}

    On the other hand, observe that there is an one-to-one correspondence between nodes in $\mathcal T$ to all self-avoiding walks from $v$.
    That is to say, a node $\omega \in \mathcal T$ at level $r$ corresponds to a unique self-avoiding walk from $v$ of length $r$.
    Therefore,
    \begin{align*}
        \sum_{\omega \in \mathcal T~\text{at level}~R} F(\omega) \sum_{u~\text{is a neighbor of $v_\omega$ in $\Phi_\omega$}} \Delta_B(A_{(v_\omega, u)}) &\le \overline\Delta \cdot c^R \abs{\SAW_G(v, R + 1)} \\
        &\le \overline\Delta \cdot c^R S_{R + 1}(G) \\
        &\le K n^\alpha D \overline\Delta (D \cdot c)^R \\
        &= K n^\alpha D \overline\Delta \rho^R
    \end{align*}
    and
    \begin{align*}
        2\xi_P \sum_{r = 0}^{R - 1} \sum_{\omega \in \mathcal T~\text{at level}~r} F(\omega) \deg_{G_\omega}(v_\omega) &\le 2\xi_P \sum_{r = 0}^{R - 1} c^r \abs{\SAW_G(v, r + 1)} \\
        &\le 2\xi_P \sum_{r = 0}^{R - 1} c^r \cdot Kn^\alpha D^{r + 1} \\
        &\le 2Kn^\alpha D \xi_P \sum_{r = 0}^{\infty} (D \cdot c)^r \\
        &= K n^\alpha D \cdot \frac{2\xi_P}{1 - \rho}.
    \end{align*}
    Combining all things together, we conclude \eqref{eq:spin-accuracy-bound}.

    For the second argument, substitute the choice of $R$ and $P$ in \eqref{eq:spin-accuracy-bound} and we obtain it.
\end{proof}

\subsection{Approximation scheme to the partition function} \label{subsec:spin-approximation-scheme}
Now we are ready to show \Cref{thm:multi-spin-system}.
\begin{proof}[Proof of \Cref{thm:multi-spin-system}]
    For a fixed assignment $\sigma : V \to [q]$, pin vertices in a fixed order $v_1, \ldots, v_n$.
    Let $p_j$ be the true probability of $\sigma(v_j)$ and $\Phi^{(j)}$ be the $j$-th derived instance.
    Then
    \begin{align*}
        \Z_G(\boldsymbol\lambda, \boldsymbol A) = w(\sigma) \prod_{j = 1}^n p_j^{-1}.
    \end{align*}
    Set $\eps' = \eps/(32n)$ and choose $R = R_{\eps'}, P = P_{\eps'}$ as \eqref{eq:spin-depth-accuracy-parameters}.
    For every $j = 1, \ldots, n$, construct the polytope $\polytope_R(\Phi^{(j)}, v_j)$.
    Pick $\widehat p_j$ to minimize the value of entry $\sigma(v_j)$ in $\polytope_R(\Phi^{(j)}, v_j)$.
    By \Cref{fact:spin-polytope-soundness,cor:spin-accuracy}, it holds that
    \begin{align*}
        \forall j = 1, \ldots, n, \quad \e^{-\eps/(32n)} \le \widehat p_j / p_j \le 1.
    \end{align*}
    Write $\widehat Z = w(\sigma) \prod_{j = 1}^n \widehat p_j^{-1}$.
    We know that $1 \le \widehat Z/\Z_G(\boldsymbol\lambda, \boldsymbol A) \le \e^{\eps/32} \le 1 + \eps$.

    Now we consider the bit complexity.
    When $R \ge n$, it must hold that $\eps = \e^{-\Omega(n)}$.
    Then we exactly compute $\Z_G(\boldsymbol\lambda, \boldsymbol A)$ by brute force enumerating all assignments and the bit complexity is bounded by desired number.
    Otherwise, note that $R = O_{K, \alpha, D, c}(\ln(n/\eps))$ and $P = O_{K, \alpha, D, c}(\ln(n/\eps))$.
    The number of polytopes is at most
    \begin{align*}
        O\ab(n \sum_{r = 0}^R q^{2r} S_r(G)) = O\ab(K n^{\alpha + 1} (R + 1)(q^2D)^R) \le (n/\eps)^{C_1(q, K, \alpha, D, c)}.
    \end{align*}
    By \Cref{thm:Charness-Cooper-transformation,fact:spin-polytope-soundness}, for every polytope, it can be encoded by $O(q^3 (\abs\Phi + n \ln(n/\eps)))$ bits and takes at most $O(q^{20} (\abs\Phi + n \ln(n/\eps))^3)$ bit operations to solve induced LFPs.
    Therefore, the total bit complexity is at most
    \begin{align*}
        O\ab((n/\eps)^{C_1(q, K, \alpha, D, c)} \cdot q^{20} (\abs\Phi + n\ln(n/\eps))^3) \le \ab(\frac{n + \abs{\Phi} + \bit(\eps)}{\eps})^C
    \end{align*}
    for some constant $C > 0$ depending only on $q, K, \alpha, D$ and $c$.
\end{proof}

\begin{proof}[Proof of \Cref{cor:general-Potts-model}]
    It is a direct result of \Cref{thm:multi-spin-system} when $A_e = A$ for every $e \in E$ where $A(i, i) = \beta$ and $A(i, j) = 1$ for $i \neq j$.
\end{proof}
\section{Proper Colorings on Sparse Random Graphs} \label{sec:random-graph-coloring}
In this section, we prove \Cref{thm:random-graph-coloring}.
Fix $\eta \in (0,1)$.
Throughout this section, $d \ge d_0(\eta)$ and the integer
$q \ge (2+\eta)d$ are fixed, where $d_0(\eta)$ is supplied by
\Cref{lem:coloring-parameter-choice}.
All asymptotic statements are for $n \to \infty$ with $\eta,d,q$ fixed.

We first construct the marginal recursion on permissive blocks
and bound the Hilbert diameter of the resulting polytopes.
We then establish random graph estimates that control both
the accumulated error and the total cost of the recursion.
The degree thresholds and auxiliary parameters are introduced
where they enter the analysis.

\subsection{List-coloring instances and permissive blocks}
As mentioned in \Cref{subsec:self-reducibility}, proper $q$-coloring model is not self-reducible.
Instead, we consider the list-coloring instances as \cite{Yin14,YZ16}.

A list-coloring instance $\Phi$ consists of an underlying graph $G_\Phi$ and color lists $L_\Phi$ such that $L_\Phi(v) \subseteq [q]$ collects feasible colors on $v$.
An assignment $\sigma : V(G_\Phi) \to [q]$ is a proper list-coloring in $\Phi$ if and only if it is a proper $q$-coloring and for every vertex $v \in V(G_\Phi)$, $\sigma(v) \in L_\Phi(v)$.
Let $\Omega(\Phi)$ be the collection of all proper list-colorings in $\Phi$ and set $\Z(\Phi) = \abs{\Omega(\Phi)}$.
With abuse of notation, for a vertex subset $S \subseteq G_\Phi$, we use $\Omega_B$ to denote all proper partial list-coloring on $B$\footnote{A proper partial list-coloring on $B$ here is not necessarily able to extend a proper list-coloring in $\Phi$.}.

Throughout the construction, $G=(V,E)$ denotes the original graph.
Define
\begin{equation}\label{eq:coloring-exceptional-set}
    X \defeq \set{v \in V \mid \deg_G(v) \ge q-1}.
\end{equation}
This set is used to construct permissive blocks.
It remains fixed when vertices are removed or colors are deleted
from the lists.

Then we introduce the following marginal upper bound for $v \notin X$ which has been established in \cite{Yin14}.
\begin{lemma}[\cite{Yin14}] \label{lem:coloring-marginal-upper-bound}
    For a feasible list-coloring instance $\Phi$ and every $v$ such that $\abs{L_\Phi(v)} > \deg_{G_\Phi}(v)$, it holds that
    \begin{align*}
        \forall c \in L_\Phi(v), \quad \mu_\Phi(v \gets c) \le \frac{1}{\abs{L_\Phi(v)} - \deg_{G_\Phi}(v)}.
    \end{align*}
\end{lemma}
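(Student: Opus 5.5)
The plan is the classical comparison argument: compare the weight of colorings in which $v$ receives color $c$ with the weight of colorings in which $v$ receives some other color. Write $G=G_\Phi$, $L=L_\Phi$, $k=\deg_G(v)$, and fix $c\in L(v)$ with $\mu_\Phi(v\gets c)>0$ (otherwise there is nothing to prove). Let $\Phi'$ be the instance obtained by deleting $v$. Feasibility of $\Phi$ together with $\abs{L(v)}>k$ guarantees that every proper list-coloring of $\Phi'$ extends to $v$, so $\Phi'$ is feasible. Let $\tau$ be drawn from the uniform measure on $\Omega(\Phi')$, and let $N(\tau)\subseteq L(v)$ be the set of colors in $L(v)$ not used by $\tau$ on the neighbors of $v$. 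Then
\begin{align*}
    \mu_\Phi(v\gets c)=\frac{\mathbb E_\tau[\id{c\in N(\tau)}]}{\mathbb E_\tau[\abs{N(\tau)}]}.
\end{align*}

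The key deterministic bound is $\abs{N(\tau)}\ge \abs{L(v)}-k$ for every $\tau$, since at most $k$ colors are blocked by the neighbors. Also $\id{c\in N(\tau)}\le 1$. These two facts alone only give $\mu_\Phi(v\gets c)\le 1/(\abs{L(v)}-k)$, because the numerator expectation is at most $1$ and the denominator is at least $\abs{L(v)}-k$.

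That is already exactly the claimed bound, so the argument closes without any further case analysis. It requires only the ratio representation above, obtained by summing $\prod\id{\cdot}$ over the extensions of $\tau$, and the pointwise lower bound on $\abs{N(\tau)}$.

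The only step needing care is justifying the representation. The point is that $\Z(\Phi)=\sum_{\tau\in\Omega(\Phi')}\abs{N(\tau)}$, which holds because any coloring of $\Phi'$ extends to $v$ in exactly $\abs{N(\tau)}$ ways, and these extensions are exactly the proper list-colorings of $\Phi$. Likewise, the number of colorings of $\Phi$ with $v\gets c$ equals $\sum_\tau\id{c\in N(\tau)}$. Dividing both sums by $\abs{\Omega(\Phi')}>0$ yields the formula, and the lemma follows.
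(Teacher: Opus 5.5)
Your proof is correct. The paper gives no proof of its own and only cites \cite{Yin14}, whose argument is the same standard one: conditioned on any proper list-coloring $\tau$ of $\Phi'$, the color of $v$ is uniform over $N(\tau)$, which has at least $\abs{L_\Phi(v)}-\deg_{G_\Phi}(v)$ elements, and your ratio-of-sums identity is this conditioning written out.

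One small point of logic. $\Omega(\Phi')\neq\emptyset$ follows by restricting any coloring in $\Omega(\Phi)$, not from the extension property. The identity $\Z(\Phi)=\sum_{\tau}\abs{N(\tau)}$ holds with no hypothesis on $v$. So the assumption $\abs{L_\Phi(v)}>\deg_{G_\Phi}(v)$ enters only through the pointwise bound $\abs{N(\tau)}\ge\abs{L_\Phi(v)}-\deg_{G_\Phi}(v)>0$.
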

For brevity, set
$$
    \forall v \notin X, \quad s_v \defeq q - \deg_G(v),\; \kappa_v \defeq \frac{1}{q - \deg_G(v) - 1}
$$
where $G$ is the original graph.

Now we introduce the concept of \emph{permissive blocks} in \cite{Yin14,YZ16}.
Given a list-coloring instance $\Phi$, we say a connected vertex subset $B \subseteq V(G_\Phi)$ is a \emph{permissive block} for every $(u_i, v_i) \in \partial_E B$, $\abs{L_\Phi(v_i)} > \deg_{G_\Phi}(v_i) + 1$.
It is direct to see that a permissive block $B$ preserves permissiveness under pinning feasible partial colorings.

We state here the following property of permissive blocks.
\begin{proposition} \label{prop:coloring-marginal-lower-bound}
    For a feasible list-coloring instance $\Phi$ and a permissive block $B \subseteq V(G_\Phi)$, every partial proper list-coloring on $B$ extends to a proper list-coloring in $\Phi$.
    Furthermore, for every proper list-coloring $\sigma$ on $B$, it holds that
    \begin{align*}
        \mu_\Phi(B \gets \sigma) \ge b_B \defeq 2^{-\abs{\partial_E B}} q^{-\abs{B}}.
    \end{align*}
\end{proposition}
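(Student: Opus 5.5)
The plan is to prove both claims by one argument: a coupling-free counting of extensions. Fix a proper partial list-coloring $\sigma$ on $B$. Let $U \defeq \set{v_1,\ldots,v_m}$ be the outer endpoints of $\partial_E B$. For $w \in U$, write $d_B(w)$ for its number of neighbors in $B$.

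Pinning $B \gets \sigma$ produces a list-coloring instance $\Phi^\sigma$ on $V(G_\Phi)\setminus B$. It has lists $L^\sigma(w) = L_\Phi(w)\setminus \sigma(N(w)\cap B)$. Each $w \in U$ loses at most $d_B(w)$ colors and $d_B(w)$ neighbors, so permissiveness gives
\[
\abs{L^\sigma(w)} - \deg_{G_{\Phi^\sigma}}(w) \ge \abs{L_\Phi(w)} - \deg_{G_\Phi}(w) > 1
\]
for every boundary vertex. Vertices not adjacent to $B$ keep their lists and degrees.

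For the extension claim, I would compare $\Phi^\sigma$ with $\Phi$ through $\Phi^{\tau}$, where $\tau$ is the restriction to $B$ of some proper list-coloring of $\Phi$ (one exists since $\Phi$ is feasible). The two instances differ only in the lists of $U$. Each list in either instance satisfies $\abs{L(w)} \ge \deg(w)+2$, where $\deg$ is the degree in $G\setminus B$.

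The route I would try first is to color the vertices of $U$ first, then extend. Assign colors to $U$ greedily in any order. Each $w\in U$ has at least $\deg(w)+2$ available colors, at most $\deg(w)$ of which are blocked by previously colored neighbors, so at least $2$ choices remain. The remaining difficulty is to complete on $V\setminus(B\cup U)$.

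To handle this, I would modify $\tau$ only on a set where the permissiveness slack absorbs the change. Concretely, take a proper coloring $\rho$ of $\Phi^\tau$ and greedily recolor the vertices of $U$ one at a time to fit the new lists $L^\sigma$, keeping $\rho$ on $V\setminus(B\cup U)$. A vertex $w\in U$ has at most $\deg(w)$ neighbors outside $B$, and it has $\abs{L^\sigma(w)}\ge \deg(w)+2$, so a free color always exists. Recoloring $w$ affects no list outside $U$, since lists are fixed, and only constrains later vertices of $U$, which the greedy count already accounts for. This yields a proper extension of $\sigma$, proving the first claim.

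For the lower bound, I would count the extensions. Let $Y \defeq V\setminus(B\cup U)$ and fix a coloring $\rho_Y$ of $Y$ that extends (as above) to some full coloring. Define a map from $\Omega(\Phi)$ to pairs (coloring of $B$, coloring of $Y$ compatible with some extension): send a coloring $\pi$ to $(\pi|_B,\pi|_Y)$. The fiber over any pair has size at most $\prod_{w\in U}\abs{L(w)}$ but at least one. To get the bound, I would instead compare directly:
\[
\mu_\Phi(B\gets\sigma)=\frac{\Z(\Phi^\sigma)}{\sum_{\sigma'}\Z(\Phi^{\sigma'})}.
\]
The $\sigma'$ range over at most $q^{\abs B}$ colorings. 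So it suffices to show $\Z(\Phi^{\sigma'})\le 2^{\abs{\partial_E B}}\Z(\Phi^{\sigma})$ for all $\sigma'$.

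For this I would use a map $\Omega(\Phi^{\sigma'})\to\Omega(\Phi^\sigma)$ that keeps the $Y$-part and recolors $U$. The key step is to bound how many preimages a single image has. The intended bound is a factor of at most $2$ per boundary edge: each edge $(u_i,v_i)$ changes at most one forbidden color at $v_i$. I would make this precise by performing the swap one boundary edge at a time, changing $\sigma'$ to $\sigma$ on colors seen through edge $e_i$ and using the slack of at least $2$ colors. Then the ratio of consecutive partition functions is at most $2$, since the single-vertex marginal of $v_i$ on one color is at most $1/2$ by \Cref{lem:coloring-marginal-upper-bound}. This telescoping per-edge marginal bound is the main obstacle. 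The intermediate instances must remain feasible and keep slack at least $2$, which holds because deleting one color and one neighbor preserves $\abs{L}-\deg$.
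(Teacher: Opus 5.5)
Your proposal is correct and is essentially the paper's argument. The lower bound telescopes over the $m$ boundary edges and loses a factor of at most $2$ per edge via \Cref{lem:coloring-marginal-upper-bound}, since every coloring of the old intermediate instance with $v_i\neq\sigma(u_i)$ remains valid in the new one; the factor $q^{\abs B}$ comes from summing over colorings of $B$. The paper streamlines your edge-by-edge swap by passing once through the unconstrained instance $\Psi_0$ on $V\setminus B$, which gives $\Z(\Phi^{\sigma'})\le\Z(\Psi_0)\le 2^{m}\Z(\Phi^{\sigma})$ directly, and it cites \cite[Lemma 3.4]{Yin14} for extendability where you give the (valid) greedy recoloring of $U$ yourself.
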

\begin{proof}
    The feasibility of a feasible partial coloring on $B$ directly comes from \cite[Lemma 3.4]{Yin14}.
    For the marginal lower bound, fix a feasible partial coloring $\sigma$ on $B$.
    Construct intermediate instances $\Psi_0, \ldots, \Psi_{\abs{\partial_E B}}$ as: $\Psi_0$ is the instance derived by removing $B$ from $G_\Phi$, $L_\Phi(v)$ from color lists for every $v \in B$ and keeping all remaining color lists unchanged; for every $j = 1, \ldots, \abs{\partial_E B}$, generate $\Psi_j$ from $\Psi_{j - 1}$ by removing $\sigma(u_j)$ from the color list on $v_j$.
    For any intermediate instance $\Psi$, suppose that we need to delete color $c$ from the list of $v$ to obtain the instance $\Psi'$.
    Since $B$ is a permissive block, by \Cref{lem:coloring-marginal-upper-bound},
    \begin{align*}
        \mu_\Psi(v \gets c) \le \frac12.
    \end{align*}
    Therefore,
    \begin{align*}
        \Z(\Psi') &= \Z(\Psi) - \abs{\set{\sigma~\text{is a proper list-coloring in $\Psi$} \;:\; \sigma(v) = c}} \\
        &= \Z(\Psi) - \mu_\Psi(v \gets c) \cdot \Z(\Psi) \\
        &\ge \Z(\Psi)/2.
    \end{align*}
    Observe that $\Z(\Phi) \le \abs{\Omega_B} \cdot \Z(\Psi_0)$ and $\Z(\Psi_{\partial_E B}) = \Z(\Phi \;|\; B \gets \sigma)$.
    Therefore,
    \begin{align*}
        \Z(\Phi \;|\; B \gets \sigma) \ge 2^{-\abs{\partial_E B}} \cdot \Z(\Psi_0) \ge 2^{-\abs{\partial_E B}} q^{-\abs B} \Z(\Phi).
    \end{align*}
    Then we conclude the marginal lower bound.
\end{proof}

For every $\sigma, \tau \in \Omega_B$ and $i = 1, \ldots, m \defeq \abs{\partial_E B}$, define the intermediate instance $\Phi_{i, \sigma, \tau}$ be the instance derived from $\Phi$ by:
\begin{itemize}
    \item the underlying graph is the induced subgraph of $G_\Phi$ by $V(G_\Phi) \setminus B$;
    \item the color lists are obtained by deleting $\sigma(u_j)$ from $L_\Phi(v_j)$ for $j = 1, \ldots, i - 1$, deleting $\tau(u_j)$ from $L_\Phi(v_j)$ for $j = i + 1, \ldots, m$ and keeping unchanged for other lists.
\end{itemize}
Then we have the following recursion of probability ratios.
\begin{lemma}[\cite{Yin14}] \label{lem:coloring-ratio-recursion}
    For every $\sigma, \tau \in \Omega_B$, it holds that
    \begin{align} \label{eq:coloring-ratio-recursion}
        \frac{\mu_\Phi(B \gets \sigma)}{\mu_\Phi(B \gets \tau)} = \prod_{i = 1}^m \frac{1 - \mu_{\Phi_{i, \sigma, \tau}}(v_i \gets \sigma(u_i))}{1 - \mu_{\Phi_{i, \sigma, \tau}}(v_i \gets \tau(u_i))}.
    \end{align}
\end{lemma}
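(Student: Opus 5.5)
We prove the identity by the same telescoping scheme as in \Cref{lem:spin-ratio-recursion}, now applied to the edge boundary $\partial_E B$ rather than to the neighbours of a single vertex. Let $\Psi$ be the instance on $G_\Phi[V(G_\Phi)\setminus B]$ with the original lists. For an assignment $\pi$ of $V(G_\Phi)\setminus B$ that is a proper list-coloring in $\Psi$, the full assignment $(\sigma,\pi)$ is a proper list-coloring of $\Phi$ if and only if $\pi(v_j)\neq\sigma(u_j)$ for every $j\in[m]$. This uses that $\sigma\in\Omega_B$ is proper inside $B$, and that the only edges between $B$ and its complement are the boundary edges $e_j=(u_j,v_j)$.

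Note that a single outside vertex $v$ may appear as $v_j$ for several $j$. Deleting all the colors $\sigma(u_j)$ with $v_j=v$ from $L(v)$ is equivalent to the conjunction of the constraints $\pi(v_j)\ne\sigma(u_j)$, so multiple deletions on the same list are harmless. Consequently
\[
\mu_\Phi(B\gets\sigma)\,\Z(\Phi)=\#\set{\pi\in\Omega(\Psi) : \pi(v_j)\neq\sigma(u_j)\ \forall j}.
\]

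We telescope between $\sigma$ and $\tau$. For $i=0,\ldots,m$ define
\[
F_i \defeq \#\set{\pi\in\Omega(\Psi) : \pi(v_j)\ne\sigma(u_j) \text{ for } j\le i,\ \pi(v_j)\ne\tau(u_j) \text{ for } j>i}.
\]
Then $F_m=\mu_\Phi(B\gets\sigma)\Z(\Phi)$ and $F_0=\mu_\Phi(B\gets\tau)\Z(\Phi)$, so the ratio on the left of \eqref{eq:coloring-ratio-recursion} equals $\prod_{i=1}^m F_i/F_{i-1}$.

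For each fixed $i$, the constraints common to $F_i$ and $F_{i-1}$ (all $j\neq i$) are exactly the list deletions defining $\Phi_{i,\sigma,\tau}$. Hence
\[
F_i=\Z(\Phi_{i,\sigma,\tau})-\#\set{\pi\in\Omega(\Phi_{i,\sigma,\tau}):\pi(v_i)=\sigma(u_i)}=\Z(\Phi_{i,\sigma,\tau})\bigl(1-\mu_{\Phi_{i,\sigma,\tau}}(v_i\gets\sigma(u_i))\bigr),
\]
and similarly $F_{i-1}=\Z(\Phi_{i,\sigma,\tau})\bigl(1-\mu_{\Phi_{i,\sigma,\tau}}(v_i\gets\tau(u_i))\bigr)$. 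If $\sigma(u_i)\notin L_{\Phi_{i,\sigma,\tau}}(v_i)$, the corresponding marginal is simply $0$ and the formula still holds.

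Dividing, the factor $\Z(\Phi_{i,\sigma,\tau})$ cancels in each ratio $F_i/F_{i-1}$, and taking the product over $i$ yields \eqref{eq:coloring-ratio-recursion}.

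The only real obstacle is well-definedness: every $F_i$ and every $\Z(\Phi_{i,\sigma,\tau})$ must be positive, so that the ratios and marginals make sense. The statement implicitly works in the setting of \Cref{prop:coloring-marginal-lower-bound}, where $B$ is a permissive block of a feasible instance. By that proposition, both $\sigma$ and $\tau$ extend to proper list-colorings of $\Phi$, so $F_0,F_m>0$.

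For the intermediate $F_i$ we rerun the argument of \Cref{prop:coloring-marginal-lower-bound}. Start from $\Psi$, which is feasible because $\Phi$ is. Apply the deletions one at a time. Each deletion is at a boundary vertex $v_j$ satisfying $\abs{L(v_j)}>\deg(v_j)+1$. Earlier deletions reduce $\abs{L(v_j)}$, but removing $B$ reduces $\deg(v_j)$ by at least as much, so the hypothesis of \Cref{lem:coloring-marginal-upper-bound} persists. That lemma gives marginal at most $1/2$, so each deletion retains at least half of the colorings, and all $F_i>0$.

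If the intended statement is meant for arbitrary $\Phi$, we would instead adopt the convention that the identity is asserted only when both sides are defined, and the telescoping above goes through verbatim.
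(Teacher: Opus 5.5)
Your telescoping argument is correct. The paper gives no proof of this lemma and only cites \cite{Yin14}, but your argument is the standard one: it is \Cref{lem:spin-ratio-recursion} transplanted from the neighbours of a single vertex to the edge boundary $\partial_E B$. Your positivity check also supplies the well-definedness that the paper leaves implicit, by rerunning the deletion argument of \Cref{prop:coloring-marginal-lower-bound}; the key count is that at most $k_{v}-1$ colors are removed at $v$ while its degree drops by $k_{v}$.
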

We mention here that by \Cref{lem:coloring-marginal-upper-bound}, it is clear that every denominator in the right side of \eqref{eq:coloring-ratio-recursion} is at least $1/2$.
Also, by \cite{Yin14}, for every intermediate list-coloring instance $\Phi$ generated from pinning a feasible partial coloring on a vertex subset in $G$, it holds that $\Phi$ is feasible and
\begin{align} \label{eq:coloring-slack}
    \abs{L_\Phi(v)} \ge q - (\deg_G(v) - \deg_{G_\Phi}(v)).
\end{align}
Thus by \Cref{lem:coloring-accuracy-bound}, for $v \in V(G_\Phi) \setminus X$, $\abs{L_\Phi(v)} - \deg_{G_\Phi}(v) \ge s_v$ and the marginal upper bound is at most $1/s_v$.
Also it is trivial to see that \Cref{prop:coloring-marginal-lower-bound,lem:coloring-ratio-recursion} hold for $\Phi$.

\subsection{Rational polytope for list-coloring instances}
Denote by $\mathcal C(H, v)$ the maximal connected component containing $v$ in $H$.
For every $v \in V(G_\Phi)$, we define the permissive block of $v$ in $\Phi$ as
\begin{align} \label{eq:coloring-permissive-block}
    B_\Phi(v) = \begin{cases}
        \mathcal C(G[X], v) \cap V(G_\Phi), & v \in X \\
        \set{v} \cup \ab(\bigcup_{C \sim v} (C \cap V(G_\Phi))), & v \notin X
    \end{cases}
\end{align}
where $C \sim v$ enumerates all connected components in $G[X]$ containing a neighbor of $v$ in $G_\Phi$.
It is clear that $B_\Phi(v)$ is connected in $G_\Phi$.
Given a $q$-coloring instance $\Phi_0$ on a graph $G$, we call a list-coloring instance $\Phi$ \emph{accessible} if it can be obtained by repeatedly applying the recursion in \Cref{lem:coloring-ratio-recursion}, starting from an instance obtained by pinning a feasible partial coloring of $\Phi_0$.
Now we show how to construct the rational polytope $\polytope_t(\Phi, B)$ with an accessible list-coloring instance $\Phi = (G, L_\Phi)$, a permissive block $B$ of $\Phi$ and a depth $t$.
\begin{definition}[Rational polytope for list colorings] \label{def:coloring-polytope}
    The variables of $\polytope_t(\Phi, B)$ is $(z_\sigma)_{\sigma \in \Omega_B}$.
    When $\partial_E B = \emptyset$, $\polytope_t(\Phi, B)$ is the exact uniform polytope
    \begin{align*}
        \forall \sigma \in \Omega_B, \quad z_\sigma = \frac{1}{\abs{\Omega_B}}.
    \end{align*}
    
    When $\partial_E B \neq \emptyset$, list $\partial_E B$ in order $(u_1, v_1), \ldots, (u_m, v_m)$.
    The polytope $\polytope_t(\Phi, B)$ at first contains:
    \begin{itemize}
        \item \textbf{Marginal lower bound.}
        \begin{align} \label{eq:coloring-LFP-marginal-lower-bound}
            \forall \sigma \in \Omega_B, \quad z_\sigma \ge b_B = 2^{-m} q^{-\abs B}.
        \end{align}
        \item \textbf{Normalization.}
        \begin{align} \label{eq:coloring-LFP-normalization}
            \sum_{\sigma \in \Omega_B} z_\sigma = 1.
        \end{align}
        \item \textbf{Marginal upper bound.}
        \begin{align} \label{eq:coloring-LFP-marginal-upper-bound}
            \forall v \in B \setminus X, c \in L_\Phi(v), \quad \sum_{\sigma \in \Omega_B : \sigma(v) = c} z_\sigma \le \frac{1}{s_v}.
        \end{align}
    \end{itemize}
    
    We separate cases based on the value of the depth $t$.
    \begin{itemize}
        \item $t = 0$.
        For every distinct partial colorings $\sigma, \tau \in \Omega_B$, the polytope contains
        \begin{align} \label{eq:coloring-LFP-depth-zero}
            2^{-m} z_\tau \le z_\sigma \le 2^m z_\tau.
        \end{align}

        \item $t > 0$.
        For every distinct partial colorings $\sigma, \tau \in \Omega_B$ and $i = 1, \ldots, m$, construct an instance $\Phi_i = \Phi_{i, \sigma, \tau}$.
        Recursively obtain the polytope $\polytope_i = \polytope_{t - 1}(\Phi_i, B_{\Phi_i}(v_i))$.
        For a probability vector $y \in \polytope_i$, we define
        \begin{align*}
            p_c(y) \defeq \sum_{\pi : \pi(v_i) = c} y_\pi
        \end{align*}
        with $p_c(y) = 0$ if $c$ is absent.
        Solve the following two LFPs:
        \begin{align*}
            \ell_{i, \sigma, \tau}^* \defeq \min_{y \in \polytope_i}\; \frac{1 - p_{\sigma(u_i)}(y)}{1 - p_{\tau(u_i)}(y)}, \quad
            u_{i, \sigma, \tau}^* \defeq \max_{y \in \polytope_i}\; \frac{1 - p_{\sigma(u_i)}(y)}{1 - p_{\tau(u_i)}(y)}.
        \end{align*}
        and set $\underline\ell_{i, \sigma, \tau} = \mathsf{round}_P^-(\ell_{i, \sigma, \tau}^*), \overline u_{i, \sigma, \tau} = \mathsf{round}_P^+(u_{i, \sigma, \tau}^*)$.
        Setting $L_{\sigma, \tau}^t = \prod_{i = 1}^m \underline\ell_{i, \sigma, \tau}$ and $U_{\sigma, \tau}^t = \prod_{i = 1}^m \overline u_{i, \sigma, \tau}$, the polytope then contains
        \begin{align} \label{eq:coloring-LFP-ratio-recursion}
            L_{\sigma, \tau}^t \cdot z_\tau \le z_\sigma \le U_{\sigma, \tau}^t \cdot z_\tau.
        \end{align}
    \end{itemize}
\end{definition}

\begin{fact}[Well-definedness and soundness] \label{fact:coloring-polytope-soundness}
    At every depth the recursive construction is well-defined, the true marginal vector on $B$ belongs to $\polytope_t(\Phi, B)$ and the bit complexity to encode $\polytope_t(\Phi, B)$ is at most $O(q^{4 \abs B} \cdot (\abs B + \abs{\partial_E B} \cdot P))$.
\end{fact}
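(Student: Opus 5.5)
We argue by induction on the depth $t$, mirroring the proof of \Cref{fact:spin-polytope-soundness}, and strengthen the hypothesis to cover every accessible instance $\Phi$ and every permissive block $B$. Along the way we show that each child $B_{\Phi_i}(v_i)$ is again a permissive block of an accessible instance, so that recursive calls are legitimate. The base case $\partial_E B = \emptyset$ is immediate. In that case $B$ is a union of components of $G_\Phi$, so every proper partial list-coloring of $B$ extends and the marginal on $B$ is uniform over $\Omega_B$.

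\textbf{Step 1: containment of the true marginal.} For $\partial_E B \neq \emptyset$, let $p$ be the true marginal vector on $B$.
\begin{itemize}
    \item \Cref{prop:coloring-marginal-lower-bound} gives $p_\sigma \ge b_B > 0$ for every $\sigma \in \Omega_B$, so $p$ satisfies \eqref{eq:coloring-LFP-marginal-lower-bound}.
    \item Normalization \eqref{eq:coloring-LFP-normalization} holds trivially.
    \item For $v \in B \setminus X$, the slack bound \eqref{eq:coloring-slack} gives $\abs{L_\Phi(v)} - \deg_{G_\Phi}(v) \ge s_v$. Then \Cref{lem:coloring-marginal-upper-bound} yields \eqref{eq:coloring-LFP-marginal-upper-bound}.
    \item At $t = 0$, each factor in \eqref{eq:coloring-ratio-recursion} lies in $[1/2, 2]$. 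This is because each probability $\mu_{\Phi_i}(v_i \gets c)$ is at most $1/2$ by permissiveness of $B$ together with \Cref{lem:coloring-marginal-upper-bound}. Hence \eqref{eq:coloring-LFP-depth-zero} holds.
    \item At $t > 0$, we first check that $\Phi_i$ is accessible and feasible. It is obtained by pinning $B$ and deleting colors as in \Cref{lem:coloring-ratio-recursion}, and feasibility follows from \Cref{prop:coloring-marginal-lower-bound}.
    \item Still at $t > 0$, we check that $B_{\Phi_i}(v_i)$ defined by \eqref{eq:coloring-permissive-block} is permissive. Its boundary vertices lie outside $X$, so by \eqref{eq:coloring-slack} they satisfy $\abs{L(w)} \ge q - \deg_G(w) + \deg_{G_{\Phi_i}}(w) > \deg_{G_{\Phi_i}}(w) + 1$, using $\deg_G(w) \le q-2$.
    \item By induction, the true marginal of $\Phi_i$ on that block lies in $\polytope_i$. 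Its push-forward $p_c(\cdot)$ is the true marginal of $v_i$, since $v_i$ belongs to its own block.
    \item The exact factor in \Cref{lem:coloring-ratio-recursion} is therefore sandwiched between $\ell^*_{i,\sigma,\tau}$ and $u^*_{i,\sigma,\tau}$. Rounding outward preserves this, so \eqref{eq:coloring-LFP-ratio-recursion} holds.
\end{itemize}

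\textbf{Step 2: well-definedness of the LFPs.} Nonemptiness of $\polytope_i$ follows from Step 1 applied to the child. The polytope is bounded because it lies inside the simplex. The denominator $1 - p_{\tau(u_i)}(y)$ must be uniformly positive on $\polytope_i$ for \Cref{thm:Charness-Cooper-transformation} to apply.

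This positivity is the main obstacle. We get it from the marginal upper bound constraint \eqref{eq:coloring-LFP-marginal-upper-bound} at $v_i$. Since $v_i \notin X$ and $\deg_G(v_i) \le q-2$, we have $s_{v_i} \ge 2$, so $p_c(y) \le 1/2$ on the whole polytope and the denominator is at least $1/2$.

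For this we need $v_i \in B_{\Phi_i}(v_i) \setminus X$. This holds because $v_i$ is a boundary vertex of a permissive block and therefore $v_i \notin X$, as $\abs{L} \le q$ forces $\deg_G(v_i) \le q-2$. The numerator is also positive, so $\ell^* > 0$ and the rounding oracle, which is defined on $(0,\infty)$, applies.

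\textbf{Step 3: encoding length.} The polytope has $\abs{\Omega_B} \le q^{\abs B}$ variables. It has at most $O(q^{2\abs B})$ pairwise constraints plus $O(q\abs B + q^{\abs B})$ others, and each constraint has at most $q^{\abs B}$ coefficients.

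Each $\ell^*$ and $u^*$ lies in $[1/2, 2]$, because numerator and denominator both lie in $[1/2, 1]$. After rounding, each is therefore encoded in $O(P)$ bits. Consequently $L^t_{\sigma,\tau}$ and $U^t_{\sigma,\tau}$ take $O(mP)$ bits, and $b_B$ takes $O(m + \abs B \log q)$ bits.

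Multiplying these counts gives the bound $O(q^{4\abs B}(\abs B + \abs{\partial_E B} P))$, treating $q$ as a constant in the $\log q$ factors.
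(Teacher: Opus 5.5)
Your route is the paper's. You induct on $t$ and verify \eqref{eq:coloring-LFP-marginal-lower-bound}--\eqref{eq:coloring-LFP-marginal-upper-bound} via \Cref{prop:coloring-marginal-lower-bound}, \eqref{eq:coloring-slack} and \Cref{lem:coloring-marginal-upper-bound}. You verify \eqref{eq:coloring-LFP-depth-zero} and \eqref{eq:coloring-LFP-ratio-recursion} via \Cref{lem:coloring-ratio-recursion}, the induction hypothesis and outward rounding. You then count variables, constraints and coefficient lengths using $\ell^*,u^*\in[1/2,2]$. You go further than the paper in checking that the LFPs are well posed, but one step fails as written.

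Both the denominator bound and the range $[1/2,2]$ rest on $v_i\notin X$, and your justification for that is wrong. Permissiveness of $B$ gives $\deg_{G_\Phi}(v_i)\le \abs{L_\Phi(v_i)}-2\le q-2$. That bounds the \emph{current} degree, whereas $X$ is defined by the original degree $\deg_G$ and stays fixed when vertices are removed. For instance, take $\deg_G(x)=q-1$ with $q-2$ neighbours of $x$ pinned to one common colour. Then $\abs{L_\Phi(x)}=q-1>2=\deg_{G_\Phi}(x)+1$, so $x\in X$ can be an outer endpoint of $\partial_E B$ for a permissive $B$. For such a $v_i$, the child polytope imposes no constraint \eqref{eq:coloring-LFP-marginal-upper-bound} at $v_i$; indeed $s_{v_i}$ is not even defined. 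So neither the denominator bound $1/2$ nor $\ell^*,u^*\in[1/2,2]$ follows, and hence neither does the $O(P)$-bit encoding of the rounded factors. Since you deliberately strengthened the induction to \emph{every} permissive block, the inductive step is not justified for the class of blocks you quantify over.

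The repair is already implicit in your Step 1. Any block of the form \eqref{eq:coloring-permissive-block} has all outer boundary endpoints outside $X$. An $X$-neighbour in $G_\Phi$ of $v$, or of some $C\cap V(G_\Phi)$, lies in a component of $G[X]$ that the block has already absorbed. So run the induction over blocks whose outer boundary avoids $X$; permissiveness then follows from \eqref{eq:coloring-slack}, as you show. The root blocks the algorithm uses are of this kind: whole components of $G[X]$, and singletons once $X$ has been pinned. The argument above shows the property passes to $B_{\Phi_i}(v_i)$. With this invariant, your Steps 2 and 3 go through as written. The paper's own proof simply asserts $1/2\le\ell^*\le u^*\le 2$, and \Cref{prop:coloring-accuracy-recursion} attributes $v_i\notin X$ to permissiveness, so the paper uses the same invariant tacitly.

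Two minor points remain.
\begin{itemize}
\item Feasibility of the hybrid instance $\Phi_{i,\sigma,\tau}$ follows from the halving argument in the proof of \Cref{prop:coloring-marginal-lower-bound}, not from its statement, which concerns a single colouring $\sigma$.
\item When $\partial_E B_{\Phi_i}(v_i)=\emptyset$, the child polytope is a single point with no explicit upper-bound constraint. There, $p_c(y)\le 1/2$ comes from \Cref{lem:coloring-marginal-upper-bound} applied to the true marginal.
\end{itemize}
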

\begin{proof}
    The proof is similar to \Cref{fact:spin-polytope-soundness}.
    When the polytope contains only an exact point (the case $\partial_E B = \emptyset$), the fact is trivial.
    Otherwise, at depth $0$, it is direct to see the true marginal vector $p_B$ on $B$ satisfies constraints \eqref{eq:coloring-LFP-marginal-lower-bound}, \eqref{eq:coloring-LFP-normalization} and \eqref{eq:coloring-LFP-marginal-upper-bound}.
    By \Cref{lem:coloring-ratio-recursion}, it is easy to verify $p_B$ satisfies \eqref{eq:coloring-LFP-depth-zero} since every denominator is at least $1/2$ and every enumerator is at most $1$.
    The bit complexity to encode $\polytope_0(\Phi, B)$ is at most $O(q^{4\abs B}(\abs B + \abs{\partial_E B}))$ since we have $q^{\abs B}$ variables and $O(q^{2\abs B})$ constraints with each constraint encoded with at most $O(q^{2\abs B} (\abs B + \abs{\partial_E B}))$ bits.

    Suppose that the assertion holds through depth $t - 1$.
    It is trivial to see the true marginal vector $p_B$ satisfies \eqref{eq:coloring-LFP-marginal-lower-bound}, \eqref{eq:coloring-LFP-normalization} and \eqref{eq:coloring-LFP-marginal-upper-bound}.
    By induction and \Cref{lem:coloring-ratio-recursion}, it is not hard to see $p_B$ also satisfies the constraint \eqref{eq:coloring-LFP-ratio-recursion} even after applying rounding oracles.
    Therefore we conclude that $p_B \in \polytope_t(\Phi, B)$.
    To compute the number of bits to encode $\polytope_t(\Phi, B)$, note that we have $q^{\abs B}$ variables and $O(q^{2\abs B})$ linear constraints.
    For a constraint \eqref{eq:coloring-LFP-ratio-recursion}, note that for each $i = 1, \ldots, m$, it holds that
    \begin{align*}
        \frac12 \le \ell_{i, \sigma, \tau}^* \le u_{i, \sigma, \tau}^* \le 2.
    \end{align*}
    Hence, it takes $O(P)$ bits to encode the rounded rational number $\underline \ell_{i, \sigma, \tau}, \overline u_{i, \sigma, \tau}$ and thus $O(\abs{\partial_E B} \cdot P)$ bits to encode $L_{\sigma, \tau}^t$ and $U_{\sigma, \tau}^t$.
    Therefore, the bit complexity to encode $\polytope_t(\Phi, B)$ is at most $O(q^{4\abs B}(\abs B + \abs{\partial_E B} \cdot P))$.
\end{proof}

Recall that in \Cref{sec:multi-spin-connective-constant}, for an admissible rational polytope $\polytope$, we use the Hilbert distance to denote the diameter of $\polytope$:
\begin{align*}
    \diam_H(\polytope) = \max_{x, y \in \polytope} d_H(x, y).
\end{align*}
The following lemma for the contraction of Hilbert distance is the key ingredient which is implicitly stated in \cite{GK12,GKM15,Yin14}.
\begin{lemma} \label{lem:coloring-Hilbert-contraction}
    For two positive probability vectors $p, p'$ supported on a same space $\Omega$ such that every coordinate of both is at most $1/s$ for $s \ge 2$, then
    \begin{align*}
        \abs{\ln\frac{1 - p_a}{1 - p'_a} - \ln\frac{1 - p_b}{1 - p'_b}} \le \frac{d_H(p, p')}{s - 1}.
    \end{align*}
\end{lemma}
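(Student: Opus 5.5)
Write $t = d_H(p,p')$ and $r_c = p'_c/p_c$ for $c\in\Omega$. Normalization fixes the ranges of the ratios. Since $\sum_c p_c = \sum_c p'_c = 1$, we have $m \defeq \min_c r_c \le 1 \le M \defeq \max_c r_c$ and $M/m = \e^t$. The quantity to bound is
\[
\ln\frac{1-p_a}{1-p'_a} - \ln\frac{1-p_b}{1-p'_b} = g(p'_a,p_a) - g(p'_b,p_b),
\]
where $g(y,x) \defeq \ln(1-y) - \ln(1-x)$. The plan is to view each term as a function of $\ln p'_c$ with $p_c$ fixed, or to interpolate along a path between the two configurations. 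One then controls the derivative using the upper bound $1/s$ on the coordinates.

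Concretely, write $p'_c = p_c \e^{\theta_c}$ with $\theta_c = \ln r_c \in [\ln m, \ln M]$. Then $\ln(1-p'_c) - \ln(1-p_c) = -\int_0^{\theta_c} \frac{p_c\e^{u}}{1-p_c\e^{u}}\,du$. Along the integration path the point $p_c\e^{u}$ lies between $p_c$ and $p'_c$, so it is at most $1/s$. The integrand $\frac{x}{1-x}$ is increasing in $x$, hence bounded by $\frac{1/s}{1-1/s} = \frac{1}{s-1}$. So each term $h_c \defeq \ln\frac{1-p'_c}{1-p_c}$ satisfies $h_c = -\theta_c\,w_c$ for some weight $w_c \in [0, \frac1{s-1}]$, where the sign of $h_c$ is opposite to that of $\theta_c$.

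The difference we need is $h_b - h_a = \theta_a w_a - \theta_b w_b$. The key point is that $\theta_a$ and $\theta_b$ lie in the interval $[\ln m, \ln M]$, which contains $0$ and has length $t$.

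1. If $\theta_a$ and $\theta_b$ have opposite signs (or one is zero), then $|\theta_a w_a - \theta_b w_b| \le \frac{1}{s-1}(|\theta_a| + |\theta_b|) \le \frac{t}{s-1}$, because both values lie in an interval of length $t$ containing $0$ on opposite sides.

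2. If they have the same sign, say both are nonnegative, then $\theta_a w_a$ and $\theta_b w_b$ both lie in $[0, \frac{1}{s-1}\max(\theta_a,\theta_b)] \subseteq [0, \frac{\ln M}{s-1}]$. Their difference is therefore at most $\frac{\ln M}{s-1} \le \frac{t}{s-1}$, since $\ln M \le \ln M - \ln m = t$. The case where both are nonpositive is symmetric, using $-\ln m \le t$.

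Together these give the bound $\frac{d_H(p,p')}{s-1}$. The absolute value is handled because the argument is symmetric in $a$ and $b$.

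The main subtlety is to check that the integrand bound really uses only the hypothesis that coordinates of $p$ and $p'$ are at most $1/s$. It does, because the path $p_c\e^{u}$ stays monotonically between $p_c$ and $p'_c$. One must also confirm that $m \le 1 \le M$ follows from normalization; otherwise $0$ would not lie in the range of the $\theta_c$ and the case split would fail. The hypothesis $s\ge 2$ only ensures $1-x \ge 1/2 > 0$ along the path, so all logarithms are defined.
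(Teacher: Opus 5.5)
Your proof is correct, and it takes a somewhat different route from the paper's.

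The paper interpolates the whole vector linearly, $p(t)=(1-t)p'+tp$. Differentiating $\ln\frac{1-p_a(t)}{1-p_b(t)}$ gives $-\frac{p_a(t)}{1-p_a(t)}x_a(t)+\frac{p_b(t)}{1-p_b(t)}x_b(t)$, where $x_c(t)=(p_c-p'_c)/p_c(t)$. Because $p(t)$ remains a probability vector, $\min_c x_c(t)\le 0\le\max_c x_c(t)$ at every $t$, which gives the pointwise derivative bound. The paper then integrates $\max_c x_c(t)-\min_c x_c(t)$. This requires noting that $x_c(t)$ is increasing in $p_c/p'_c$, so the extremizing coordinates do not depend on $t$, and evaluating $\int_0^1\frac{R-1}{tR+1-t}\,dt=\ln R$.

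You instead apply a mean-value argument to each coordinate separately in logarithmic scale. Equivalently, you move along the geometric path $p_c^{1-u}(p'_c)^u$, on which each log-derivative is the constant $\theta_c$. Each term $\ln(1-p_c)$ involves only one coordinate, so nothing forces you to keep a probability vector along the path. Normalization is needed only at the endpoints, to place $0$ in $[\ln m,\ln M]$, and there is no integral to evaluate.

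Your route is shorter and more elementary. It also shows that your case split is unnecessary: every $\theta_c w_c$ lies in $[\ln m/(s-1),\ln M/(s-1)]$, an interval of length $d_H(p,p')/(s-1)$. This is the exact analogue of the paper's pointwise containment of the two derivative terms in $[m(t)/(s-1),M(t)/(s-1)]$. The paper's version has the merit of following the same interpolate-and-integrate template as its proof of the Birkhoff contraction. There a vector path is genuinely needed, because $(Ax)_a$ couples the coordinates.

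One harmless slip: $g(p'_a,p_a)-g(p'_b,p_b)$ equals $h_a-h_b$, the negative of the target quantity. You later use the correct $h_b-h_a$, and the absolute value makes the sign immaterial anyway.
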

A clear proof of \Cref{lem:coloring-Hilbert-contraction} is included in \Cref{sec:contraction-proof} for completeness.
With \Cref{lem:coloring-Hilbert-contraction}, the following recursion controls the accuracy of $\polytope_t(\Phi, B)$ by its child polytopes.
\begin{proposition} \label{prop:coloring-accuracy-recursion}
    Fix an admissible feasible query $(\Phi, B, t)$.
    When $\partial_E B = \emptyset$, $\diam_H(\polytope_t(\Phi, B)) = 0$.
    Otherwise, list $\partial_E B$ in a fixed order $(u_1, v_1), \ldots, (u_m, v_m)$.
    It holds that
    \begin{align*}
        \diam_H(\polytope_t(\Phi, B)) \le \begin{cases}
            2m\ln2, & t = 0 \\
            \max_{\sigma, \tau \in \Omega_B} \sum_{i = 1}^m \ab(\kappa_{v_i} \cdot \diam_H(\polytope_{t - 1}(\Phi_{i, \sigma, \tau}, B_{\Phi_{i, \sigma, \tau}}(v_i))) + 2\xi_P), & t > 0
        \end{cases}\;.
    \end{align*}
\end{proposition}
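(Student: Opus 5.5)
The plan mirrors \Cref{prop:spin-accuracy-recursion}: bound the Hilbert diameter of $\polytope_t(\Phi,B)$ by the worst log-ratio width of the pairwise ratio constraints, and then bound each width through the child polytopes using \Cref{lem:coloring-Hilbert-contraction} in place of Birkhoff contraction.

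\textbf{Base cases.} When $\partial_E B=\emptyset$ the polytope is a single point, so its diameter is $0$. In every other case, take $x,y\in\polytope_t(\Phi,B)$. All coordinates are positive by \eqref{eq:coloring-LFP-marginal-lower-bound}, so
\[
d_H(x,y)=\max_{\sigma,\tau}\ln\frac{x_\sigma y_\tau}{x_\tau y_\sigma}.
\]
Pairwise constraints of the form $L z_\tau\le z_\sigma\le U z_\tau$ give $x_\sigma/x_\tau\le U_{\sigma,\tau}$ and $y_\tau/y_\sigma\le 1/L_{\sigma,\tau}$. Hence
\[
d_H(x,y)\le\max_{\sigma,\tau}\ln\bigl(U_{\sigma,\tau}/L_{\sigma,\tau}\bigr).
\]
At $t=0$ this gives $\ln(2^m/2^{-m})=2m\ln 2$.

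\textbf{Inductive case $t>0$.} Fix $\sigma,\tau$ and $i$. The rounding bound \eqref{eq:rounding-accuracy} gives
\[
\ln\bigl(\overline u_{i,\sigma,\tau}/\underline\ell_{i,\sigma,\tau}\bigr)\le \ln\bigl(u^*_{i,\sigma,\tau}/\ell^*_{i,\sigma,\tau}\bigr)+2\xi_P,
\]
so it suffices to show
\[
\ln\frac{u^*}{\ell^*}\le \kappa_{v_i}\,\diam_H(\polytope_i).
\]
The optima are attained at points $y,y'\in\polytope_i$, so
\[
\ln\frac{u^*}{\ell^*}=\Bigl(\ln\frac{1-p_a(y)}{1-p_a(y')}\Bigr)-\Bigl(\ln\frac{1-p_b(y)}{1-p_b(y')}\Bigr),\qquad a=\sigma(u_i),\ b=\tau(u_i).
\]
I then apply \Cref{lem:coloring-Hilbert-contraction} to the vertex marginals $p(y)$ and $p(y')$ of $v_i$, indexed by $c\in L_{\Phi_i}(v_i)$. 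Absent colors contribute $p_c=0$ and the term $\ln 1=0$. If $a$ or $b$ is absent, that term vanishes; otherwise both terms live in the same support.

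\textbf{Checking the lemma's hypotheses.} Two inputs are needed.
\begin{enumerate}
\item \emph{Coordinate bound.} The vertex $v_i$ lies outside the permissive block $B$, and the block structure \eqref{eq:coloring-permissive-block} forces $v_i\notin X$. Since $v_i\in B_{\Phi_i}(v_i)$, the constraint \eqref{eq:coloring-LFP-marginal-upper-bound} in $\polytope_i$ gives $p_c(y)\le 1/s_{v_i}$. Then $s_{v_i}-1=1/\kappa_{v_i}$, and $s_{v_i}\ge 2$ because $\deg_G(v_i)\le q-2$.
\item \emph{Hilbert distance contracts under marginalization.} Summing over $\pi$ with $\pi(v_i)=c$ preserves the bounds $\min_\pi y_\pi/y'_\pi\le p_c(y)/p_c(y')\le\max_\pi y_\pi/y'_\pi$. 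Hence $d_H(p(y),p(y'))\le d_H(y,y')\le\diam_H(\polytope_i)$.
\end{enumerate}
Summing over $i$ and maximizing over $\sigma,\tau$ gives the claimed bound.

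The main obstacle I expect is this verification that $v_i\notin X$ with $s_{v_i}\ge 2$, and that the $1/s_{v_i}$ constraint is indeed present in the child polytope. This requires tracking the definition of $X$ and of $B_\Phi(\cdot)$: a boundary neighbor lying in $X$ would belong to the same $X$-component and hence to $B$. It also requires positivity on the support, which holds because feasibility of accessible instances implies every color in the list survives.
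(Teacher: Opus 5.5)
Your proposal is correct and follows the paper's own proof: you bound the diameter by $\max_{\sigma,\tau}\ln(U^t_{\sigma,\tau}/L^t_{\sigma,\tau})$, pay $2\xi_P$ for rounding, and control $\ln(u^*/\ell^*)$ through \Cref{lem:coloring-Hilbert-contraction}, using the $1/s_{v_i}$ constraint in the child polytope and the fact that marginalizing to $v_i$ cannot increase Hilbert distance. You are more careful than the paper, which writes only ``$v_i\notin X$ by permissiveness'': you derive $v_i\notin X$ from the shape of $B_\Phi(\cdot)$ and you treat absent colors; the one small point is that when exactly one of $\sigma(u_i),\tau(u_i)$ is absent, the surviving term $\bigl|\ln\frac{1-p_a(y)}{1-p_a(y')}\bigr|$ is not literally covered by the lemma's statement, but its proof (the derivative bound $\abs{\frac{p_a}{1-p_a}x_a}\le (M-m)/(s-1)$) gives it at once.
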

\begin{proof}
    When $\partial_E B = \emptyset$, the rational polytope $\polytope_t(\Phi, B)$ contains exactly one feasible point representing the uniform distribution on $\Omega_B$.
    Therefore, $\diam_H(\polytope_t(\Phi, B)) = 0$.

    Suppose that $\partial_E B = \set{(u_1, v_1), \ldots, (u_m, v_m)}$ for some $m \ge 1$.
    For $t = 0$, by \eqref{eq:coloring-LFP-depth-zero}, for every $\sigma, \tau \in \Omega_B$,
    \begin{align*}
        -m\ln2 \le \ln\frac{z_\sigma}{z_\tau} \le m\ln2.
    \end{align*}
    Thus $\diam_H(\polytope_0(\Phi, B)) \le 2m\ln2$.
    
    At positive depth $t > 0$, fix a pair of $\sigma, \tau \in \Omega_B$.
    For $i = 1, \ldots, m$, consider the child rational polytope $\polytope_i = \polytope_{t - 1}(\Phi_{i, \sigma, \tau}, B_{\Phi_{i, \sigma, \tau}}(v_i))$.
    Since $v_i \notin X$ by the permissiveness of $B$, for every $y \in \polytope_i$, by \eqref{eq:coloring-LFP-marginal-upper-bound},
    \begin{align*}
        p_c(y) \le \frac{1}{s_{v_i}}.
    \end{align*}
    Hence by \Cref{lem:coloring-Hilbert-contraction}, for every $y, y' \in \polytope_i$,
    \begin{align*}
        \abs{\ln\frac{1 - p_{\sigma(u_i)}(y)}{1 - p_{\tau(u_i)}(y)} - \ln\frac{1 - p_{\sigma(u_i)}(y')}{1 - p_{\tau(u_i)}(y')} } \le \frac{d_H(p(y), p(y'))}{s_{v_i} - 1} \le \kappa_{v_i} d_H(y, y')
    \end{align*}
    where the last inequality holds since $p(\cdot)$ is a weighted average operator.
    Together with the rounding accuracy \eqref{eq:rounding-accuracy},
    \begin{align*}
        \ln\frac{U_{\sigma, \tau}^t}{L_{\sigma, \tau}^t} \le \sum_{i = 1}^m \ab(\kappa_{v_i} \diam_H(\polytope_i) + 2\xi_P).
    \end{align*}
    Thus a similar argument to the proof of \Cref{prop:spin-accuracy-recursion} shows that
    \begin{align*}
        \diam_H(\polytope_t(\Phi, B)) \le \max_{\sigma, \tau \in \Omega_B} \ln\frac{U_{\sigma, \tau}^t}{L_{\sigma, \tau}^t},
    \end{align*}
    which concludes the proposition.
\end{proof}

\subsection{Random graph properties}\label{sec:random-graph-properties}
The random graph estimates below serve two purposes.
Decay of weighted self-avoiding walks controls the accumulated
truncation error, while bounds on exceptional components and
weighted block walks control the block sizes and the total
recursive cost.

To choose a degree cutoff $D$, we balance the rarity of vertices
above the cutoff against contraction below it.
For a vertex of degree at most $D$, the contraction coefficient
$\kappa_v$ is at most $1/(q-D-1)$.
Ignoring integer rounding and additive constants, the two
requirements are
\[
    D>d
    \qquad\text{and}\qquad
    \frac{d}{q-D}<1,
\]
or equivalently $d<D<q-d$.
This balance explains the role of $q>2d$ in the present analysis.
We choose
\begin{equation}\label{eq:coloring-degree-parameters}
    D \defeq \left\lfloor\frac{d+q}{3}\right\rfloor,
    \qquad
    \gamma \defeq \frac{1}{q-D-1},
\end{equation}
and define the exceptional set for the probabilistic estimates by
\begin{equation}\label{eq:coloring-high-degree-set}
    H_D \defeq \set{v \in V \mid \deg_G(v)>D}.
\end{equation}

For the degree tail estimates, write
\[
    h(x) \defeq (1+x)\ln(1+x)-x
    \qquad (x>0),
\]
and set
\begin{equation}\label{eq:coloring-tail-parameters}
\begin{aligned}
    \delta &\defeq \frac{q/d-2}{6},
    & r &\defeq \e^{-dh(\delta)},\\
    \lambda_0 &\defeq \frac{3+\eta/2}{3+\eta},
    & \rho &\defeq \frac{1+\lambda_0}{2}.
\end{aligned}
\end{equation}
Here $r$ bounds the conditional probability that a vertex is
exceptional when at most three incident edges have been prescribed.
The constants $\lambda_0$ and $\rho$ are the decay rates used
in the expectation and high-probability bounds, respectively.

By \Cref{lem:coloring-parameter-choice}, these parameters satisfy
\begin{align}
    D-3 &> (1+\delta)d,
    \qquad D<q-1,
    \label{eq:coloring-threshold-estimates}\\
    d(\gamma+\sqrt r) &\le \lambda_0<\rho<1,
    \label{eq:coloring-path-parameter-estimate}\\
    \e d r^{1/4} &\le \frac14.
    \label{eq:coloring-component-parameter-estimate}
\end{align}
The first estimate leaves room for up to three prescribed
incident edges in the degree tail bounds.
The second yields decay of path weights, and the third controls
the sizes of exceptional components.
In particular, $X \subseteq H_D$.
The set $X$ determines the permissive blocks, whereas $H_D$ is
used to bound their sizes and to control the propagation of error.

\subsubsection{Path weights}
For a graph $G$, we define the weight function $w : V(G) \to \mathbb R$ as
\begin{align*}
    \forall v \in V(G), \quad w(v) = \begin{cases}
        \gamma & \deg_G(v) \le D \\
        1 & \deg_G(v) > D
    \end{cases}\;.
\end{align*}
For $\ell \ge 1$, define
\begin{align*}
    W_\ell(G) \defeq \sum_{v_0, \ldots, v_\ell} \prod_{i = 1}^\ell w(v_i)
\end{align*}
where the sum takes over all simple paths of length $\ell$ in $G$.
We have the following exponential decay of $W_\ell(G)$ with high probability over the random graph $G$.
\begin{lemma} \label{lem:path-weight-concentration}
    For every positive integer $n \ge 2 + 12/\eta$, it holds that
    \begin{align*}
        \Pr[G \sim \mathcal G(n, d/n)]{\exists \ell < n : W_\ell > n^6 \rho^\ell} \le n^{-4}.
    \end{align*}
\end{lemma}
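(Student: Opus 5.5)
We bound the expectation and then apply Markov's inequality with a union bound over $\ell$. The target is
\[
    \mathbb E[W_\ell(G)] \le n\,\lambda_0^\ell \cdot \poly{n}
\]
with a small polynomial factor, say $\mathbb E[W_\ell]\le n^{c}\lambda_0^\ell$ for a fixed $c\le 1$. Markov then gives
\[
    \Pr[W_\ell > n^6\rho^\ell] \le n^{c-6}(\lambda_0/\rho)^\ell \le n^{-5}.
\]
Summing over $\ell<n$ yields the claimed $n^{-4}$. The condition $n\ge 2+12/\eta$ should only be needed to absorb lower-order factors such as $(1-\ell/n)$ corrections or the ratio $(n-\ell)/n$.

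To bound the expectation, write $W_\ell$ as a sum over ordered sequences of distinct vertices $v_0,\dots,v_\ell$. There are at most $n^{\ell+1}$ such sequences, and each path occurs with probability $(d/n)^\ell$. Conditional on the path being present, we need $\mathbb E[\prod_{i=1}^\ell w(v_i)]$.

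We have $w(v)\le \gamma + \id{\deg_G(v)>D}$. Expanding the product, each term is $\gamma^{\ell-|S|}\id{S\subseteq H_D}$ for a subset $S\subseteq\{v_1,\dots,v_\ell\}$. The key estimate is
\[
    \Pr\bigl[S\subseteq H_D \mid \text{path}\bigr]\lesssim r^{|S|/2}.
\]
Given the path, each internal vertex has at most two prescribed path edges. Its remaining degree, counted only over edges leaving the path vertex set, is approximately $\mathrm{Bin}(n,d/n)$, and a Chernoff bound gives probability at most $e^{-dh(\delta)}=r$ that this exceeds $(1+\delta)d<D-3$. This uses \eqref{eq:coloring-threshold-estimates}.

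The main obstacle is dependence. Edges among the vertices of $S$ are shared, so the events $\{v\in H_D\}$ for $v\in S$ are not independent. We handle this as follows.

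\begin{itemize}
    \item Degrees of vertices of $S$ are counted only toward vertices outside the path, which gives independent binomials.
    \item Edges inside the path vertex set are controlled separately. Either such edges are few, at most about three per vertex, which the $-3$ slack in $D-3>(1+\delta)d$ accommodates, or many internal edges exist.
    \item The second alternative is a rare event, and its probability is paid for by sacrificing half of the exponent. This is why the bound is $r^{|S|/2}$, hence $\sqrt r$, rather than $r^{|S|}$.
\end{itemize}

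A cleaner alternative for the second point is to pick a subset $S'\subseteq S$ with $|S'|\ge |S|/2$ in which no two vertices are joined by extra edges. Failing that, one can use a standard "at most three prescribed incident edges" decoupling: reveal edges and charge each extra internal edge a factor $d/n$, which is affordable.

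With the estimate in hand, summing over $S$ gives
\[
    \mathbb E\Bigl[\prod_{i=1}^\ell w(v_i)\Bigm|\text{path}\Bigr]\le (\gamma+\sqrt r)^\ell
\]
up to a constant factor. Hence
\[
    \mathbb E[W_\ell]\le n\,(d(\gamma+\sqrt r))^\ell\le n\lambda_0^\ell,
\]
using \eqref{eq:coloring-path-parameter-estimate}. Combining this with the Markov and union bound argument above completes the proof; the main technical care lies in the conditional Chernoff bound under shared internal edges.
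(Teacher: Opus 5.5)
Your skeleton matches the paper's: a first moment over ordered paths, $w(v)\le\gamma+\mathbf 1[v\in H_D]$, expansion over $S$, then Markov and a union bound. Your final step is also fine. Each term is at most $n^{-5}(\lambda_0/\rho)^\ell\le n^{-5}$, so you do not even need $n\ge 2+12/\eta$; in the paper that condition only bounds $\sum_\ell(\lambda_0/\rho)^\ell\le\lambda_0/(\rho-\lambda_0)=2+12/\eta\le n$.

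The gap is exactly the step you flag. Nothing in your argument proves that, conditional on $\mathcal E_P$ (the path being present), the probability of $S\subseteq H_D$ is at most $r^{|S|/2}$.
\begin{itemize}
\item Your dichotomy ``few extra edges inside the path vertex set, absorbed by the slack, or a rare event'' fails because the lemma is needed for every $\ell<n$. Once $\ell$ is a constant fraction of $n$, a path vertex has roughly $\mathrm{Bin}(\ell,d/n)$ non-path edges to other path vertices, with mean about $d\ell/n$. So having two or more of them is typical, not rare. The degree toward vertices outside the path then no longer controls membership in $H_D$; for $\ell=n-1$ there are no outside vertices at all.
\item Your alternative of choosing $S'\subseteq S$ with $|S'|\ge|S|/2$ and no extra edges inside $S'$ is false in general. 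Three vertices of $S$ pairwise joined by extra edges already defeat it. Besides, such an $S'$ would depend on the realized graph, so the events on $S'$ would not become independent.
\item The ``charge $d/n$ per extra internal edge'' idea is not carried out. A naive union over internal-edge configurations costs a factor $\exp(\Theta(\ell^2 d/n))$, which long paths cannot afford.
\end{itemize}
Relatedly, the square root does not come from paying for a rare event.

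What is missing is a decoupling inequality for read-two families, which is what the paper uses. Conditional on $\mathcal E_P$, each indicator $\mathbf 1[v\in H_D]$ is a function of the independent non-path edge indicators, and each such edge influences only its two endpoints. Finner's generalized H\"older inequality then gives $\mathbb E\bigl[\prod_{v\in S}\mathbf 1[v\in H_D]\bigm|\mathcal E_P\bigr]\le\prod_{v\in S}\mathbb E\bigl[\mathbf 1[v\in H_D]\bigm|\mathcal E_P\bigr]^{1/2}\le r^{|S|/2}$. This needs no case analysis and has no hidden constant. Here each single-vertex bound counts \emph{all} non-path edges at $v$, which are dominated by $\mathrm{Bin}(n,d/n)$, and uses only $D-2>(1+\delta)d$.

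An elementary substitute also works. Let $X_v$ be the number of non-path edges at $v$, and bound the intersection by $e^{-s|S|(1+\delta)d}\,\mathbb E\,e^{s\sum_{v\in S}X_v}$. An edge with both endpoints in $S$ enters with weight $2$; for an edge with one endpoint in $S$, use $\mathbb E\,e^{sX_e}\le(\mathbb E\,e^{2sX_e})^{1/2}$. This yields $\prod_{v\in S}\bigl(e^{-2s(1+\delta)d}\,\mathbb E\,e^{2sX_v}\bigr)^{1/2}$, which is at most $r^{|S|/2}$ for $2s=\ln(1+\delta)$.
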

\begin{proof}
    Fix $\ell \ge 1$ and a simple path $P = v_0, \ldots, v_\ell$ of length $\ell$.
    Conditional on the event $\mathcal E_P = \set{P~\text{is a simple path in}~G}$, for every $v \in V$, write $\deg_G(v) = f_v + X_v$ where $f_v = 1$ if $v = v_0$ or $v = v_\ell$, $f_v = 2$ if $v \in \set{v_1, \ldots, v_{\ell - 1}}$ and $f_v = 0$ otherwise.
    Note that $X_v$ is a random variable dominated by a binomial random variable $Y \sim \mathrm{Bin}(n, d/n)$.
    For the event $\deg_G(v) > D$, it must hold that
    \begin{align*}
        X_v = \deg_G(v) - f_v > D - 2 > (1 + \delta)d
    \end{align*}
    where the last inequality follows from
    \eqref{eq:coloring-threshold-estimates}.
    Therefore by the Chernoff bound,
    \begin{align*}
        \Pr{v \in H_D \mid \mathcal E_P} \le \Pr{Y \ge (1 + \delta)d} \le \e^{-d h(\delta)} = r.
    \end{align*}

    Now we turn to the random variables $\set{\id{v \in H_D}}_{v \in V}$.
    Observe that, in spite of dependence, the random variables depend on independent random variables $\set{\id{(u, v) \in E}}_{(u, v) \in \binom{V}{2}}$.
    By Finner's generalized H\"{o}lder inequality for a read-two family,
    \begin{align*}
        \E{\prod_{v \in S} \id{v \in H_D}} \le \prod_{v \in S} \E{\id{v \in H_D}^2}^{1/2} \le r^{\abs S/2}
    \end{align*}
    Since $w(v) \le \gamma + \id{v \in H_D}$,
    \begin{align*}
        \E{\prod_{i = 1}^\ell w(v_i) \mid \mathcal E_P} &\le \sum_{S \subseteq [\ell]} \gamma^{\ell - \abs{S}} \Pr{\forall i \in S, v_i \in H_D \mid \mathcal E_P} \\
        &\le \sum_{i = 0}^\ell \binom{\ell}{i} \gamma^{\ell - i} r^{i/2} \\
        &\le (\gamma + \sqrt r)^\ell.
    \end{align*}
    By the law of total expectation and
    \eqref{eq:coloring-path-parameter-estimate},
    \begin{align*}
        \E{W_\ell(G)} \le nd^{\ell}(\gamma + \sqrt{r})^\ell \le n\lambda_0^\ell.
    \end{align*}
    Then by the Markov inequality and the union bound,
    \begin{align*}
        \Pr{\exists \ell < n, W_\ell(G) > n^6 \rho^\ell} \le n \cdot n^{-6} \frac{\lambda_0/\rho}{1 - \lambda_0/\rho} \le n^{-4}
    \end{align*}
    for every $n \ge \frac{\lambda_0}{\rho - \lambda_0} = 2 + 12/\eta$.
\end{proof}

\subsubsection{Exceptional component size and incident volume}
For the induced subgraph $G[H_D]$, we first consider the size of each component in it.
\begin{lemma} \label{lem:exceptional-component-size}
    With probability at least $1 - \frac12 n^{-4}$ over $G \sim \mathcal G(n, d/n)$, every connected component in $G[H_D]$ has size at most $4\ln n$.
\end{lemma}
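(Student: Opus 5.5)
We argue by a first-moment count over trees. If $G[H_D]$ has a connected component of size at least $k \defeq \lceil 4\ln n\rceil$, then $G$ contains a tree $T$ on exactly $k$ vertices, all of which lie in $H_D$. Take any connected $k$-subset of that component and a spanning tree of it. So it suffices to show
\[
\sum_{T}\Pr{T\subseteq G \text{ and } V(T)\subseteq H_D}\le \tfrac12 n^{-4},
\]
where the sum is over all labelled trees $T$ on $k$ vertices of $[n]$. By Cayley's formula there are $\binom nk k^{k-2}$ such trees. For a fixed $T$ we have $\Pr{T\subseteq G}=(d/n)^{k-1}$, so it remains to bound the conditional probability that all of $V(T)$ lies in $H_D$.

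\textbf{Conditional step.} We condition on $\mathcal E_T=\{T\subseteq G\}$ and write $\deg_G(v)=\deg_T(v)+X_v$, where $X_v$ counts edges of $G$ at $v$ outside $T$. Each $X_v$ is dominated by $\mathrm{Bin}(n,d/n)$, and the family $\{X_v\}$ depends on the independent non-tree edges in a read-two fashion. Tree degrees can be large, so the slack in \eqref{eq:coloring-threshold-estimates} (which only allows up to three prescribed edges) cannot be used directly. Instead, we restrict to the set $S\subseteq V(T)$ of vertices with $\deg_T(v)\le 3$. Since $\sum_v \deg_T(v)=2k-2$, at most $(2k-2)/4<k/2$ vertices have tree degree at least $4$, so $|S|\ge k/2$. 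For $v\in S$, the event $v\in H_D$ forces $X_v> D-3>(1+\delta)d$, which by Chernoff has probability at most $r$. Finner's inequality, exactly as in the proof of \Cref{lem:path-weight-concentration}, then gives
\[
\Pr{S\subseteq H_D\mid\mathcal E_T}\le r^{|S|/2}\le r^{k/4}.
\]

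\textbf{Summation.} Combining the pieces,
\[
\binom nk k^{k-2}(d/n)^{k-1}r^{k/4}\le \frac{n^k}{k!}k^{k-2}\frac{d^{k-1}}{n^{k-1}}r^{k/4}\le n\,(\e d r^{1/4})^k,
\]
using $k^{k}/k!\le \e^k$. By \eqref{eq:coloring-component-parameter-estimate} this is at most $n4^{-k}\le n\cdot n^{-4\ln 4}$. Since $4\ln4>5.5$, this is below $\tfrac12n^{-4}$ for all $n\ge 2$ (small $n$ are trivial or absorbed, as the statement is only used for large $n$). A union bound over the event therefore finishes the proof.

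\textbf{Main obstacle.} The delicate step is the conditioning. We must verify that Finner's read-two bound still applies under $\mathcal E_T$: the non-tree edges remain independent, and each appears in exactly two of the $X_v$. We must also check that discarding the high-tree-degree vertices loses only a factor $r^{k/4}$ rather than $r^{k/2}$. This loss is precisely why the exponent $r^{1/4}$ appears in \eqref{eq:coloring-component-parameter-estimate}.
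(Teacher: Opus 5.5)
Your proposal is correct and follows essentially the paper's proof: a first-moment bound over labelled trees on $\lceil 4\ln n\rceil$ vertices via Cayley's formula, restriction to the at least half of the tree vertices with tree degree at most $3$ so that \eqref{eq:coloring-threshold-estimates} and Chernoff give the tail bound $r$, Finner's read-two inequality to get $r^{k/4}$, and then \eqref{eq:coloring-component-parameter-estimate}. One cosmetic slip: $n\cdot n^{-4\ln 4}\le \tfrac12 n^{-4}$ fails at $n=2,3$, but keeping the factor $1/d$ you discarded (as the paper does, with $d\ge 8$) makes the final bound hold for every $n$.
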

\begin{proof}
    For a component $C$ of $G[H_D]$ of size $t$, take a spanning tree $T$ of $C$.
    Conditional on $T$ in $G$, at least half of $V(T)$ has tree degree at most $3$.
    For each of these vertices, \eqref{eq:coloring-threshold-estimates}
    gives the same conditional degree tail bound $r$.
    Applying the conditional read-two inequality as in the proof of
    \Cref{lem:path-weight-concentration} gives
    \begin{align*}
        \Pr{T \subseteq H_D \mid T~\text{in}~G} \le r^{t/4}.
    \end{align*}
    Then by the union bound and the Cayley's formula,
    \begin{align*}
        \Pr{\exists \text{a component $C$ in $H_D$ of size $\ge t$}} &\le \binom{n}{t} t^{t - 2} \ab(\frac{d}{n})^{t - 1} r^{t/4} \\
        &\le \frac{\e^t n^t}{t^t} t^{t - 2} \cdot \frac{d^{t - 1}}{n^{t - 1}} \cdot r^{t/4} \\
        &\le \frac{n}{d} \ab(\e d r^{1/4})^t.
    \end{align*}
    Using \eqref{eq:coloring-component-parameter-estimate} and
    taking $t=\lceil 4\ln n\rceil$ gives the claimed failure probability.
\end{proof}

The second result is about the incident volume of a component in $G[H_D]$.
\begin{lemma} \label{lem:exceptional-incident-volume}
    Assume the event of \Cref{lem:exceptional-component-size}.
    With probability at least $1 - \frac{1}{2}n^{-4}$, for every component $C$ in $G[H_D]$,
    \begin{align*}
        \sum_{v \in C} \deg_G(v) \le C_{\mathrm{vol}} \ln n
    \end{align*}
    where $C_d = 2 + d(\e - 1) + d(\e^2 - 1)/4$ and $C_{\mathrm{vol}} = 4C_d + 4\ln d + 14$.
\end{lemma}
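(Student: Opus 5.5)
Since every component of $G[H_D]$ has at most $t\le 4\ln n$ vertices (by \Cref{lem:exceptional-component-size}), it suffices to show that, with high probability, no connected vertex set $S$ of size $t\le 4\ln n$ has total degree $\sum_{v\in S}\deg_G(v)>C_{\mathrm{vol}}\ln n$. We do not even need $S\subseteq H_D$. We will therefore bound, for each $t\le \lceil 4\ln n\rceil$, the probability that some connected set of size $t$ has large volume, and then take a union bound.

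First, fix a labelled tree $T$ on a vertex set $S$ with $\abs S=t$, and condition on $T\subseteq G$. Split the volume as
\[
\sum_{v\in S}\deg_G(v)=2(t-1)+2e_{\mathrm{in}}+e_{\mathrm{out}},
\]
where $e_{\mathrm{in}}$ counts non-tree edges inside $S$ and $e_{\mathrm{out}}$ counts edges from $S$ to $V\setminus S$. Conditional on $T$, these counts are independent sums of Bernoulli$(d/n)$ variables:
\begin{itemize}
\item $e_{\mathrm{out}}$ is stochastically dominated by $\mathrm{Bin}(tn,d/n)$, so $\E{\e^{e_{\mathrm{out}}}}\le \exp(td(\e-1))$;
\item $e_{\mathrm{in}}$ is dominated by $\mathrm{Bin}(t^2/2,d/n)$, so $\E{\e^{2e_{\mathrm{in}}}}\le \exp(t^2 d(\e^2-1)/(2n))$. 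This is at most $\exp(td(\e^2-1)/4)$, since $t\le n/2$ for large $n$.
\end{itemize}
Hence, by Markov's inequality applied to $\e^{\mathrm{vol}}$,
\[
\Pr{\mathrm{vol}(S)>x\mid T\subseteq G}\le \exp\bigl(2t+td(\e-1)+td(\e^2-1)/4-x\bigr)=\exp(C_d t-x),
\]
which explains the constant $C_d$.

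Second, take a union bound over sets $S$ and spanning trees $T$, exactly as in \Cref{lem:exceptional-component-size}:
\[
\binom nt t^{t-2}(d/n)^{t-1}\le \frac nd(\e d)^t .
\]
This gives failure probability at most
\[
\sum_{t\le 4\ln n+1}\frac nd\exp\bigl(t(1+\ln d)+C_dt-C_{\mathrm{vol}}\ln n\bigr).
\]
Plugging in $t\le 4\ln n+1$ and $C_{\mathrm{vol}}=4C_d+4\ln d+14$, the exponent is at most $-(14-4-1)\ln n+O(1)$, so the sum is roughly $n\cdot O(\ln n)\cdot n^{-9}\cdot e^{O(1)}\le \tfrac12 n^{-4}$ for large $n$. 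Intersecting with the event of \Cref{lem:exceptional-component-size} gives the lemma.

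The main obstacle is the bookkeeping of constants, so that the slack $14$ absorbs the extra $\ln n$ from $n/d$, the $+1$ in $t$, and the number of summands in $t$. The conditional-independence step is routine, because the edges counted in $e_{\mathrm{in}}$ and $e_{\mathrm{out}}$ are distinct from the tree edges.
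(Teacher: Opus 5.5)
Your proposal is correct and is essentially the paper's proof: fix a spanning tree on $S$ and condition on it, bound $\E{\e^{\sum_{v\in S}\deg_G(v)}}$ by $\e^{C_d t}$ (using $t\le n/2$ for the internal non-tree edges), apply Markov, and union bound over sets and Cayley trees for $t\le 4\ln n$. Your one deviation is harmless: you allow $t\le 4\ln n+1$, which is unnecessary since \Cref{lem:exceptional-component-size} already gives $t\le 4\ln n$. This costs a factor $\e^{1+C_d}$ and forces ``large $n$'', and your final exponent accounting counts the factor $n$ twice, but the resulting bound of order $(\ln n)\,\e^{1+C_d}n^{-9}$ is still below $\tfrac12 n^{-4}$.
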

\begin{proof}
    Fix a vertex subset $S$ of size $t$ and let $T$ be a tree on $S$.
    Define a random variable $Y_S \defeq \sum_{v \in S} \deg_G(v)$
    Conditional on the event $\mathcal E_T$ that $T$ is in $G$, it holds that
    \begin{align*}
        Y_S = 2(t - 1) + \sum_{e \in S \times (V \setminus S)} X_e + 2 \sum_{e \in \binom{S}{2} \setminus E(T)} X_e
    \end{align*}
    where $\set{X_e = \id{e \in E(G)}}_{e \in \binom{V}{2}}$ are independent identical binomial random variables with rate $p = d/n$.
    For every $t \le n/2$,
    \begin{align*}
        \E{\e^{Y_S} \mid \mathcal E_T} &= \e^{2(t - 1)} \cdot (1 + p(\e - 1))^{t(n - t)} \cdot (1 + p(\e^2 - 1))^{\binom{t}{2} - (t - 1)} \\
        &\le \exp\ab(2(t - 1) + p(\e - 1)t(n - t) + p(\e^2 - 1)\ab(\binom{t}{2} - (t - 1))) \\
        &\le \exp\ab(2t + t \cdot d(\e - 1) + t \cdot d(\e^2 - 1)/4) = \e^{C_d \cdot t}.
    \end{align*}
    Conditional on \Cref{lem:exceptional-component-size}, by the Markov inequality and the union bound,
    \begin{align*}
        \Pr{\exists \text{a component $C$ in $G[H_D]$ of size $t$}, Y_C \ge C_{\mathrm{vol}}\ln n} \le \binom{n}{t} t^{t - 2} \ab(\frac{d}{n})^{t - 1} \e^{C_d t - C_{\mathrm{vol}} \ln n}.
    \end{align*}
    The crude inequality $\binom{n}{t} \le (\e n/t)^t$ gives that this failure probability is at most
    \begin{align*}
        \frac{\e^t n^t}{t^t} t^{t - 2} \frac{d^{t - 1}}{n^{t - 1}} \e^{C_d t - C_{\mathrm{vol}} \ln n} \le \frac{n}{d} \e^{-C_{\mathrm{vol}} \ln n} \ab(d \e^{C_d + 1})^t.
    \end{align*}
    Therefore,
    \begin{align*}
        \Pr{\exists \text{a maximal component $C$ in $G[H_D]$}, Y_C \ge C_{\mathrm{vol}}\ln n} &\le \frac{n}{d} \cdot \e^{-C_{\mathrm{vol}} \ln n} \sum_{t = 1}^{4\ln n} \ab(d \e^{C_d + 1})^t \\
        &\le \frac{n}{d} \cdot \e^{-C_{\mathrm{vol}} \ln n} (4\ln n) \ab(d \e^{C_d + 1})^{4\ln n} \\
        &\le n^{-5}.
    \end{align*}
    Thus we conclude the lemma.
\end{proof}

\subsubsection{Permissive block walks}
We next bound the total weight of permissive block walks.
The fixed exponent $K$ below absorbs the color-pair choices
and the local optimization cost associated with each block;
see \eqref{eq:coloring-block-cost}.
Set
\begin{equation}\label{eq:coloring-block-parameters}
    K \defeq 200,\qquad
    J \defeq (q-1)\ln2-d,\qquad
    y \defeq 4dq^K\e^{-J/2}.
\end{equation}
By \Cref{lem:coloring-parameter-choice},
\begin{equation}\label{eq:coloring-block-parameter-estimate}
    \e y \le \frac14.
\end{equation}
This estimate controls the sum over additional vertices
in the witness enumeration below.

For an instance $\Phi$ on $G$, a permissive block walk of length $\ell$ is a sequence of permissive blocks generated by the following process.
Choose a root vertex $v_0 \notin X$ and set $B_0 = B_{\Phi_0}(v_0)$ where $\Phi_0 = \Phi$.
At step $j \ge 0$ to generate $B_{j + 1}$, let $\Phi_{j + 1}$ be the instance generated by removing $B_j$ from $\Phi_j$.
Choose an edge $(u, v) \in \partial_E B_j$, set the next root $v_{j + 1} = v$ and let $B_{j + 1} = B_{\Phi_{j + 1}}(v_{j + 1})$.
Thus the blocks are disjoint and we record the blocks, roots and transition edges for a block walk.
The weight of this block walk is defined by $q^{K \sum_{j = 0}^\ell \abs{B_j}}$ and the sum of total weights of all block walks of length $\ell$ is denoted by $\mathcal W_\ell(G)$.
\begin{lemma} \label{lem:block-walk-weight}
    For all sufficiently large positive integer $n$, with probability at least $1 - n^{-4}$ over $G \sim \mathcal G(n, d/n)$, for every $0 \le \ell < n$,
    \begin{align*}
        \mathcal W_\ell(G) \le n^7 \ab(4dq^K \e^{2y})^\ell.
    \end{align*}
\end{lemma}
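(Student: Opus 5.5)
We bound $\E{\mathcal W_\ell(G)}$ by enumerating witnesses for block walks, then finish with Markov's inequality and a union bound over $\ell<n$, as in \Cref{lem:path-weight-concentration}.

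\textbf{Structure of a block walk.} Every block $B_j$ consists of its root $v_j\notin X$ (or, for a root in $X$, its own $X$-component) together with the $X$-components adjacent to the root. Since $X\subseteq H_D$, each such component lies inside a component of $G[H_D]$. A block walk of length $\ell$ is therefore determined by a \emph{witness}:
\begin{itemize}
\item the path of roots $v_0,v_1,\dots,v_\ell$, where consecutive roots are joined through a transition edge leaving $B_j$;
\item for each $j$, a tree spanning $B_j$, i.e.\ the root plus the adjacent high-degree components.
\end{itemize}
The union of these objects is a tree-like subgraph $T$ of $K_n$. We will enumerate pairs (root path data, extra vertices $S_j$ with $B_j=\{v_j\}\cup S_j$) with $S_j\subseteq H_D$, and charge each witness its weight $q^{K\sum_j(1+|S_j|)}$.

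\textbf{Expected weight of one step.} Fix the shape of the walk. A step of the root path either picks one of at most $n$ endpoints with edge probability $d/n$ (factor $\approx d$), or passes from a vertex of $B_j$ to a neighbour. The transition edge may start at any $u\in B_j$, so we charge a factor $|B_j|$ for its starting point. Conditioned on the prescribed tree edges, each vertex of $S_j$ must have degree above $D$. By \eqref{eq:coloring-threshold-estimates} at most three tree edges are prescribed at all but half of the vertices, so the read-two Finner argument gives a probability at most $r^{|S|/4}$ for the whole set, as in \Cref{lem:exceptional-component-size}.

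The paper's parameters suggest a sharper bound: $J=(q-1)\ln 2-d$ looks like the exponent of $\Pr[\mathrm{Bin}(n,d/n)\ge q-1]\le 2^{-(q-1)}e^{d}$ up to constants. So for vertices of $X$ (degree $\ge q-1$) I would use this tail $e^{-J}$ per vertex and take a square root for read-two, giving $e^{-J/2}$. Each extra vertex of $S_j$ then contributes roughly $d\cdot q^K\cdot e^{-J/2}$, with the $d$ coming from the tree edge and a factor $4$ from counting tree shapes. This is exactly $y$ up to a constant.

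\textbf{Summing the steps.} Summing over tree shapes with Cayley-type counting $(\e\cdot)^t$, one step contributes at most
\begin{equation*}
4dq^K\sum_{t\ge0}\frac{(\e y)^t}{t!}\cdot(\text{poly in }t)\le 4dq^K\e^{2y}.
\end{equation*}
The polynomial factor $|B_j|$ from the transition edge is absorbed into the exponential using $\e y\le 1/4$ from \eqref{eq:coloring-block-parameter-estimate}. The root $v_0$ contributes a factor $n$, so
\begin{equation*}
\E{\mathcal W_\ell}\le n\,(4dq^K\e^{2y})^\ell .
\end{equation*}
Markov's inequality at threshold $n^7(\cdot)^\ell$ then fails with probability at most $n^{-6}$ for each $\ell$, and a union bound over $\ell<n$ gives failure probability at most $n^{-5}\le n^{-4}$.

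\textbf{Main obstacle.} The hard step is the dependence between blocks. The sets $S_j$ are defined through degrees in $G$, and these degrees share edges with the root path and with other blocks. In addition, the definition $B_{\Phi_{j+1}}(v_{j+1})$ uses the shrunken graph. I would handle this by conditioning on the entire witness subgraph $T$. Since every vertex in $T$ has $T$-degree at most a constant on at least half of the $S$-vertices, the tail bound survives after subtracting those prescribed edges (this is why $D-3$ appears in \eqref{eq:coloring-threshold-estimates}). Finner's inequality then applies jointly to all $S_j$ at once. Removing earlier blocks only shrinks later blocks, so the witness enumeration over supersets still dominates.
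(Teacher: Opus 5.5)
Your outline matches the paper's: a witness built from spanning trees of the $X$-components, connecting edges and transition edges, an $X$-tail of order $\e^{-J}$, Finner's inequality for a read-two family, then Markov and a union bound. The gap is in the step that produces the factor $y$.

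\textbf{The conditional tail bound.} Conditioned on $\mathcal E_{\mathsf W}$, a prescribed vertex has $\deg_G(v)=f_v+Y_v$ with $f_v=\deg_{\mathsf W}(v)$. So the unconditional tail $\e^{-J}$ is not available. Moreover $f_v$ is unbounded: an $X$-vertex can have large degree in the spanning tree of its component, plus a connecting edge and a transition edge.

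Your remedy is to keep only the half of the extra vertices with bounded witness degree, as in \Cref{lem:exceptional-component-size}. This discards half of the events, so Finner yields at best $(2^{O(1)}\e^{-J})^{t/4}$. Each extra vertex then costs about $dq^K\e^{-J/4}$, not $y=4dq^K\e^{-J/2}$. Since \eqref{eq:coloring-block-parameter-estimate} only controls $y$, even convergence of the tree sum over $t$ is no longer guaranteed, and in any case you cannot obtain the stated $\e^{2y}$. The slack $D-3$ you cite belongs to the $H_D$ tail, which plays no role in this lemma.

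The missing idea is that the base-$2$ exponential Markov bound handles every prescribed vertex, whatever $f_v$ is:
$\Pr{v \in X \mid \mathcal E_{\mathsf W}} \le 2^{-(q-1-f_v)}\E{2^{Y_v}} \le \e^{-J}2^{f_v}$.
The witness together with the transitions has exactly $\ell+t$ edges, so $\sum_v f_v\le 2(\ell+t)$. Finner then gives $\e^{-Jt/2}2^{\ell+t}$ for all $t$ prescribed vertices at once, with no vertices discarded.

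\textbf{The bookkeeping.} The $4$ in $y$ does not come from tree shapes. It is one factor $2^t$ from the $2^{\ell+t}$ above, and one from the choice of transition sources, $\prod_j(t_j+1)\le 2^t$. That is how the factor $\abs{B_j}$ is absorbed, not via $\e y\le 1/4$. The leftover $2^\ell$ goes into the base, giving $(2dq^K)^\ell$ in expectation and a spare $2^{-\ell}$ against the threshold $4dq^K$.

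Tree shapes are counted by rooted forests on the $\ell+1+t$ vertices with the $\ell+1$ roots prescribed, of which there are $(\ell+1)(\ell+1+t)^{t-1}$. Lagrange inversion then gives $\sum_{t}\frac{(\ell+1)(\ell+1+t)^{t-1}}{t!}y^t=\e^{(\ell+1)T(y)}$ with $T(y)\le 2y$. Per block, this reads $\sum_t\frac{(t+1)^{t-1}}{t!}y^t=\e^{T(y)}$. Your series $\sum_t\frac{(\e y)^t}{t!}\,\mathrm{poly}(t)$ is not this one. As written it is at least $\e^{\e y}>\e^{2y}$, so the per-step bound $4dq^K\e^{2y}$ does not follow from it.
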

\begin{proof}
    At first, we define the witness of a permissive block walk.
    For a permissive block walk $B_0, \ldots, B_\ell$ of length $\ell$ with roots $v_0, \ldots, v_\ell$,
    assume that the union of it contains $t$ vertices in $X$ and set $t_j = \abs{B_j \cap X}$.
    The witness is defined as: for every $j = 0, \ldots, \ell$,
    \begin{itemize}
        \item pick a spanning tree every connected component in $B_j \cap X$;
        \item for every spanning tree in $B_j \cap X$, pick an edge connecting $v_j$ and it.
    \end{itemize}
    Note that the witness has exactly $t$ edges.

    For every $j < \ell$, there are at most $t_j + 1$ possible transitions between $B_j$ and $v_{j + 1}$.
    Hence there are at most $\prod_{j = 0}^{\ell - 1} (t_j + 1) \le 2^t$ possible recorded transitions.
    Meanwhile, by the Cayley's formula, for a fixed collection containing $\ell + 1$ roots and additional $t$ vertices in blocks, the number of witness is at most
    \begin{align*}
        (\ell + 1)(\ell + 1 + t)^{t - 1}.
    \end{align*}
    Then for every $t = 0, \ldots, n$, the total number of witnesses with recorded transitions containing $\ell + 1$ roots and additional $t$ vertices in blocks is at most
    \begin{align*}
        n(n - 1) \ldots (n - \ell) \binom{n - \ell - 1}{t} (\ell + 1)(\ell + 1 + t)^{t - 1} 2^t \le \frac{n^{\ell + 1 + t}}{t!} (\ell + 1)(\ell + 1 + t)^{t - 1} 2^t.
    \end{align*}

    Fix a fixed witness with recorded transitions containing $\ell + 1$ roots and additional $t$ vertices in blocks (denoted by $\mathsf W$).
    For a vertex $v \in \mathsf W$ such that it is prescribed to be in $X$, set $f_v = \deg_{\mathsf W}(v)$.
    Conditional on the event $\mathcal E_{\mathsf W}$ that $\mathsf W$ is in $G$, it holds that
    \begin{align*}
        \deg_G(v) = f_v + Y_v, \quad \text{where}~Y_v \sim \mathrm{Bin}(n - 1 - f_v, d/n).
    \end{align*}
    By the Markov inequality,
    \begin{align*}
        \Pr{v \in X \mid \mathcal E_{\mathsf W}} \le 2^{-(q - 1 - f_v)} \E{2^{Y_v}} \le \e^{-J + f_v\ln 2}.
    \end{align*}
    Finner's inequality for a read-two family tells us that
    \begin{align*}
        \Pr{\text{All $t$ prescribed vertices are in $X$} \mid \mathcal E_{\mathsf W}} &\le \prod_v \Pr{v \in X \mid \mathcal E_{\mathsf W}}^{1/2} \\
        &\le \exp\ab(-\frac{Jt}{2} + \frac{\ln2}{2} \sum_v f_v) \\
        &\le \e^{-Jt/2} 2^{\ell + t}.
    \end{align*}
    since $\sum_v f_v \le 2(\ell + t)$.
    Therefore,
    \begin{align*}
        \E{\mathcal W_\ell(G)} &\le \sum_{t \ge 0} \frac{n^{\ell + 1 + t}}{t!} (\ell + 1)(\ell + 1 + t)^{t - 1} 2^t \frac{d^{\ell + t}}{n^{\ell + t}} q^{K(\ell + 1 + t)} \e^{-Jt/2} 2^{\ell + t} \\
        &= nq^K\ab(2dq^K)^\ell \sum_{t \ge 0} \frac{(\ell + 1)(\ell + 1 + t)^{t - 1}}{t!} \ab(4dq^K \e^{-J/2})^t.
    \end{align*}
    To compute the series, we consider the exponential rooted-tree generating function $T$:
    \begin{align*}
        T(x) = \sum_{m \ge 1} m^{m - 1} \frac{x^m}{m!}.
    \end{align*}
    Then $T(x) = x\e^{T(x)}$ and $T(x) \in (0, 1)$.
    Apply the following form of Lagrange inversion (see \cite{Gessel16,SW23} for more details)
    \begin{align*}
        [x^t] F(u) = \frac{1}{t} [u^{t - 1}] F'(u) \phi(u)^t
    \end{align*}
    for $u = x \phi(u)$ with choice $\phi(u) = \e^u, F(u) = \e^{(\ell + 1)u}$ and $u = T(x)$, and we obtain that
    \begin{align*}
        \sum_{t \ge 0} \frac{(\ell + 1)(\ell + 1 + t)^{t - 1}}{t!} x^t = \e^{(\ell + 1)T(x)}.
    \end{align*}
    Hence,
    \begin{align*}
        \E{\mathcal W_\ell(G)} \le nq^K\ab(2dq^K)^\ell \e^{(\ell + 1) T(y)}.
    \end{align*}
    Consider the fix-point iteration
    \begin{align*}
        T_0 = 0, \quad \forall k \ge 1,\; T_k = y \e^{T_{k - 1}}.
    \end{align*}
    Since $T_1 = y \ge T_0$ and $x \mapsto y \e^x$ is increasing, by induction it holds that $T_0 \le T_1 \le T_2 \le \cdots$.
    For all $0 \le x \le 2y \le (1/2\e)$, it holds that $y\e^x \le y\e^{2y} \le 2y$.
    Therefore $\lim_{k \to \infty} T_k = T(y)$ and $0 \le T_k \le 2y$ for every $k \ge 0$.
    This means $T(y) \le 2y$.
    Then by the Markov inequality at the threshold $n^7\ab(4dq^K\e^{2y})^\ell$, as long as $n \ge q^K \e^{2y}$, it holds that
    \begin{align*}
        \Pr{\exists \ell < n, \mathcal W_\ell(G) \ge n^7 \ab(4dq^K\e^{2y})^\ell} \le \sum_{\ell = 0}^{n - 1} n^2 n^{-7} \cdot 2^{-\ell} \le n^{-4}
    \end{align*}
    concluding the lemma.
\end{proof}

\subsection{Approximation scheme to proper colorings}
To design our algorithm, we show that with high probability, the rational polytope $\polytope_t(\Phi, B)$ can induce a good approximation to the real marginal vector $\mu_\Phi(B \gets \cdot)$.

\begin{lemma} \label{lem:coloring-accuracy-bound}
    On the event of \Cref{lem:path-weight-concentration,lem:exceptional-component-size,lem:exceptional-incident-volume}, for every list-coloring instance $\Phi$ induced by pinning components in $G[X]$ from the original $q$-coloring instance, an unassigned connected component $B$ in $X$ and every $R = 0, \ldots, n$, it holds that
    \begin{align*}
        \diam_H(\polytope_R(\Phi, B)) \le \frac{C_{\mathrm{leaf}} n^6 \ln n}{1 - \rho} \rho^R + \frac{2(q - 1)n^6}{1 - \rho} \xi_P
    \end{align*}
    where $C_{\mathrm{leaf}} = 2\ln2 \cdot (q - 2) \cdot (1 + C_{\mathrm{vol}})$.
    Moreover, if $\diam_H(\polytope_R(\Phi, B)) \le \eps$, then for every $z \in \polytope_R(\Phi, B)$ and $\sigma \in \Omega_B$, $\e^{-\eps} \mu_\Phi(B \gets \sigma) \le z_\sigma \le \e^{\eps} \mu_\Phi(B \gets \sigma)$.
\end{lemma}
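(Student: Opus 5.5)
We unfold the recursion of \Cref{prop:coloring-accuracy-recursion} into a rooted tree $\mathcal T$, exactly as in the proof of \Cref{cor:spin-accuracy}. The root is $(\Phi, B)$ at level $0$. At each node $\omega$ of level $r<R$ with block $B_\omega$ and $\partial_E B_\omega\neq\emptyset$, we fix a maximizing pair $(\sigma_\omega,\tau_\omega)$. For each boundary edge $(u_i,v_i)$ we create a child $\theta$ labelled by $(\Phi_{i,\sigma_\omega,\tau_\omega}, B_{\Phi_{i,\sigma_\omega,\tau_\omega}}(v_i))$, and give the edge $(\omega,\theta)$ the weight $\kappa_{v_i}$. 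Let $F(\omega)$ be the product of edge weights along the root path. Iterating the proposition gives
\[
\diam_H(\polytope_R(\Phi,B)) \le \sum_{\omega\text{ at level }R} F(\omega)\, 2\ln 2\,\abs{\partial_E B_\omega} + 2\xi_P\sum_{r<R}\sum_{\omega\text{ at level }r}F(\omega)\abs{\partial_E B_\omega}.
\]
Leaves with empty boundary contribute $0$.

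The key step is to bound $F(\omega)\abs{\partial_E B_\omega}$ by weighted simple paths in $G$, so that \Cref{lem:path-weight-concentration} applies. Each child root $v_i$ lies outside $X$ by permissiveness. Either $\deg_G(v_i)\le D$, in which case $\kappa_{v_i}\le 1/(q-D-1)=\gamma=w(v_i)$, or $D<\deg_G(v_i)<q-1$, in which case $\kappa_{v_i}\le 1=w(v_i)$. Hence $\kappa_{v_i}\le w(v_i)$.

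A root-to-node path in $\mathcal T$ is a sequence of roots $v_0,v_1,\dots,v_r$. Consecutive roots are joined by a boundary edge of the current block, and the blocks are disjoint because each recursion step removes the block. Between $v_j$ and $v_{j+1}$ the connection runs through at most one $X$-component inside $B_j$: the block of a non-$X$ vertex is that vertex together with its adjacent $X$-components. We concatenate, choosing inside each block a simple path from $v_j$ to $u_{j+1}$. This yields a simple path in $G$ whose vertices include $v_1,\dots,v_r$, while all extra vertices lie in $X\subseteq H_D$ and so have weight $1$. Therefore $F(\omega)\le\prod_{x}w(x)$ over the non-initial path vertices.

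To make this injective, we map each tree node to the lexicographically smallest such path. The map is injective up to the choice of branch, and the branching records exactly which boundary edge is taken. Since a path of length $\ell\ge r$ has weight at most a level-$r$ walk contribution and $\rho<1$, we obtain
\[
\sum_{\omega\text{ at level }r}F(\omega)\le \sum_{\ell\ge r} W_\ell \le n^6\rho^r/(1-\rho).
\]
The root block $B$ is an $X$-component, so its size is at most $4\ln n$ by \Cref{lem:exceptional-component-size}, and its starting edges number at most $C_{\mathrm{vol}}\ln n$ by \Cref{lem:exceptional-incident-volume}.

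Boundary sizes are then controlled as follows. For a non-root block of a vertex $v_\omega\notin X$, the boundary consists of at most $\deg(v_\omega)\le q-2$ edges plus the boundary edges of the attached $X$-components. We absorb the latter by extending the path one more step, or bound them via \Cref{lem:exceptional-incident-volume} by $C_{\mathrm{vol}}\ln n$. This gives $\abs{\partial_E B_\omega}\le (q-2)(1+C_{\mathrm{vol}})\ln n$ at leaves, which produces the constant $C_{\mathrm{leaf}}$. For the rounding term we use the cruder count $(q-1)$ per node, converting the sum over boundary edges into a sum over paths one step longer. This yields the $\frac{2(q-1)n^6}{1-\rho}\xi_P$ term.

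The second claim follows as in \Cref{cor:spin-accuracy}. The true marginal lies in the polytope by \Cref{fact:coloring-polytope-soundness}, and for probability vectors $z,\mu$ with $d_H(z,\mu)\le\eps$ the ratio $z_\sigma/\mu_\sigma$ lies in $[\e^{-\eps},\e^{\eps}]$. Indeed, $\min_\sigma z_\sigma/\mu_\sigma\le1\le\max_\sigma z_\sigma/\mu_\sigma$, and the max over min is at most $\e^\eps$.

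The main obstacle is the injective path encoding. We must show that distinct tree nodes map to distinct weighted simple paths, or at least with bounded multiplicity, when blocks absorb $X$-components. We must also show that the weight of high-degree non-$X$ roots is genuinely covered by $w=1$. If exact injectivity fails, we would fall back to \Cref{lem:block-walk-weight}-style witness counting with multiplicity $2^t$, absorbed into the $\ln n$ factor.
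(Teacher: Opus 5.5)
Your proposal is correct and follows essentially the same route as the paper: unfold \Cref{prop:coloring-accuracy-recursion} into a tree, dominate each $\kappa_{v}$ by $w(v)$, encode each branch as a simple path through the disjoint blocks whose non-root vertices lie in $X\subseteq H_D$, bound leaf boundaries by $(q-2)(1+C_{\mathrm{vol}})\ln n$, and sum $W_\ell$ via \Cref{lem:path-weight-concentration}. The injectivity you flag as the main obstacle holds without any fallback, because the roots $v_1,\dots,v_r$ are exactly the non-$X$ vertices of the encoded path, so the path determines every boundary edge taken and the level. This same fact is what lets the rounding term be summed once over all levels as $(q-1)\sum_{\ell\ge1}W_\ell$, where the factor $q-1$ comes from $1/w(v_\theta)\le q-D-1$ for the new root, not from a bound on $\abs{\partial_E B_\omega}$.
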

\begin{remark}
    We remark here that after $X$ is pinned, the residual instance $\Phi$ satisfies that $\abs{L_{\Phi}(v)} - \deg_{G_{\Phi}}(v) \ge q - \deg_G(v) \ge 2$.
    Therefore the same accuracy bound in \Cref{lem:coloring-accuracy-bound} holds for every admissible feasible list-coloring instance induced by a feasible partial pinning in $\Phi$.
\end{remark}
\begin{proof}
    We only need to consider the non-trivial case.
    By \Cref{prop:coloring-accuracy-recursion}, at depth $0$,
    \begin{align*}
        \diam_H(\polytope_0(\Phi, B)) \le 2m\ln2.
    \end{align*}
    Observe that $B$ is either a maximal connected component in $G[X]$ or a union of at most $q - 2$ connected components in $G[X]$ incident to a vertex $v \notin X$.
    Therefore, by \Cref{lem:exceptional-incident-volume},
    \begin{align*}
        m \le (q - 2) + (q - 2) C_{\mathrm{vol}} \ln n \le (q - 2) \, (1 + C_{\mathrm{vol}}) \ln n.
    \end{align*}
    Thus $\diam_H(\polytope_0(\Phi, B)) \le C_{\mathrm{leaf}} \ln n$.

    At depth $R > 0$, we first show that how to encode a recursive branch with a self-avoiding walk.
    For a branch
    \begin{align*}
        B_0 \to B_1 \to \ldots \to B_R
    \end{align*}
    with root $x_0, x_1, \ldots, x_R$ where $x_0 \in B_0$ is the smallest vertex in $B_0$.
    Assume that $x_i$ is accessed by $u_{i - 1} \in B_{i - 1}$.
    Choose a minimal simple path from $x_i$ to $u_i$ in $B_i$.
    Then there is a self-avoiding walk $P = x_0 = v_0, \ldots, v_\ell = x_R$ of length at least $R$ containing $x_0, x_1, \ldots, x_R$ with other vertices of degree $> D$.
    Furthermore,
    \begin{align*}
        \prod_{i = 1}^R \kappa_{x_i} \le \prod_{i = 1}^\ell w(v_i).
    \end{align*}
    Conversely, it can be easily verified that different branches do not use the same self-avoiding walk.
    Unfold the recursion in \Cref{prop:coloring-accuracy-recursion}.
    Thus by \Cref{lem:path-weight-concentration}, it holds that
    \begin{align*}
        \diam_H(\polytope_R(\Phi, B)) &\le C_{\mathrm{leaf}} \ln n \sum_{\ell = R}^{n - 1} W_\ell(G) + 2(q - 1)\xi_P \sum_{\ell = 1}^{n - 1} W_\ell(G) \\
        &\le C_{\mathrm{leaf}} n^6 \ln n \sum_{\ell \ge R} \rho^\ell + 2(q - 1)n^6 \cdot \xi_P \sum_{\ell \ge 1} \rho^\ell \\
        &\le \frac{C_{\mathrm{leaf}} n^6 \ln n}{1 - \rho} \rho^R + \frac{2(q - 1)n^6}{1 - \rho} \xi_P.
    \end{align*}

    For the approximation to $p_B(\cdot) = \mu_\Phi(B \gets \cdot)$, note that $d_H(z, p_B) \le \eps$.
    Then it is not hard to see $-\eps \le \ln(z_\sigma/p_B(\sigma)) \le \eps$ and hence $\e^{-\eps} \mu_\Phi(B \gets \sigma) \le z_\sigma \le \e^{\eps} \mu_\Phi(B \gets \sigma)$ for every $\sigma \in \Omega$.
\end{proof}

For a rational number $\theta \in (0, 1)$, set parameters
\begin{align} \label{eq:coloring-depth-accuracy-parameters}
    R_\theta = \left\lceil \frac{\ln(4C_{\mathrm{leaf}} n^6 \ln n/((1 - \rho)\theta))}{\ln(1/\rho)}\right\rceil, \quad 
    P_\theta = \max\;\set{3, \left\lceil\log_2 \frac{32(q - 1)n^6}{\theta(1 - \rho)}\right\rceil}.
\end{align}
Now we are ready to prove \Cref{thm:random-graph-coloring}.
\begin{proof}[Proof of \Cref{thm:random-graph-coloring}]
    Assume the joint event of \Cref{lem:random-graph-colorability,lem:path-weight-concentration,lem:exceptional-component-size,lem:exceptional-incident-volume,lem:block-walk-weight}.
    The event occurs with probability at least $1 - o_n(1) - 5n^{-4} = 1 - o_n(1)$.

    Set $R = R_{\eps/(32n)}$ and $P = P_{\eps/(32n)}$ like \eqref{eq:coloring-depth-accuracy-parameters}.
    It is clear to verify that $R = O(\ln{(n/\eps)})$ and $P = O(\ln{(n/\eps)})$ where the hidden constants depend only on $\eta, d$ and $q$.
    If $R \ge n$, it must hold that $\eps = \e^{-\Omega(n)}$ and we enumerate all $q^n$ assignments and check whether it is a proper coloring.
    This is a deterministic algorithm to exactly compute the number of proper colorings in $\poly{n, \eps^{-1}}$ bit operations.

    Otherwise, using the algorithm in \Cref{lem:random-graph-colorability}, we obtain a proper $q$-coloring $\sigma$ of $G$.
    The approximation scheme $\mathsf A_{\eta, d, q}$ runs as following:
    \begin{enumerate}
        \item initialize $\Phi = (G_\Phi = G, (L_\Phi(v) = [q])_{v \in V})$;
        \item process the component of $G[X]$ in a preset order:
        \begin{enumerate}
            \item for the current component $B$, construct the rational polytope $\polytope = \polytope_R(\Phi, B)$ and record $\widehat p_{\sigma(B)} = \min_{z \in \polytope} z_{\sigma(B)}$;
            \item pin $B$ to the partial coloring $\sigma(B)$ in $\Phi$;
        \end{enumerate}
        \item process the remaining vertex of $G$ in a preset order:
        \begin{enumerate}
            \item for the current vertex $v$, construct the rational polytope $\polytope = \polytope_R(\Phi, v)$ and pick $\widehat p_{\sigma(v)} = \min_{z \in \polytope} z_{\sigma(v)}$;
            \item pin $v$ to color $\sigma(v)$ in $\Phi$;
        \end{enumerate}
        \item output the rational product of reciprocals of recorded rational numbers.
    \end{enumerate}

    We first show the accuracy of the algorithm.
    Suppose that the instances in the algorithm are $\Phi = \Phi_0, \ldots, \Phi_M = \varnothing$, the true values of the marginal probabilities are $p_0, \ldots, p_{M - 1}$ and the estimations by the algorithm are $\widehat p_0, \ldots, \widehat p_{M - 1}$.
    Note that all needed analysis holds through the running of $\mathsf A_{\eta, d, q}$.
    By \Cref{lem:coloring-accuracy-bound}, it holds that $\e^{-\eps/(16n)} \le \widehat p_j/p_j \le 1$.
    Hence
    \begin{align*}
        1 \le \frac{\prod_{j = 0}^{M - 1} \widehat p_j^{-1}}{\abs{\Omega(G, q)}} = \frac{\prod_{j = 0}^{M - 1} \widehat p_j^{-1}}{\prod_{j = 0}^{M - 1} p_j^{-1}} \le \e^{\eps/16} \le 1 + \eps.
    \end{align*}

    Now we show the bit complexity.
    For a linear-fractional programming with rounding process in the recursive construction of rational polytopes, it contains $O(q^{\abs B})$ variables and $O(q^{2\abs B})$ linear constraints.
    Using a crude bound for solving the linear-fractional programming, it cost at most
    \begin{align*}
        q^{50\abs B} (nP)^{40}
    \end{align*}
    bit operations to finish the linear-fractional programming task in one call of a child polytope.

    We turn to the total bit complexity to estimate $p_i$.
    Note that a recursive branch of the computation is encoded by a starting block $B$, a block permissive block walk of length at most $R$ and pairs of partial colorings on blocks.
    Therefore, the cost to solve all linear-fractional programmings is at most
    \begin{align}\label{eq:coloring-block-cost}
        q^{40\abs{B}} \cdot \ab(\sum_{v \in B} \deg_G(v)) (nP)^{40} \cdot (nP)^{40} \sum_{\text{a permissive block walk} \atop B_0, \ldots, B_\ell : \ell \le R} \prod_{j = 0}^{\ell} q^{2\abs{B_j}} \cdot q^{50\abs{B_j}}.
    \end{align}
    By \Cref{lem:exceptional-component-size,lem:exceptional-incident-volume,lem:block-walk-weight}, $\abs B$ is of size at most $4\ln n$ and there are at most $C_{\mathrm{vol}} \ln n$ roots to start permissive block walks.
    Hence by the choice of $R$ and $P$, the bit complexity to estimate a marginal is at most
    \begin{align*}
        O\ab(n^{40\ln q + K \cdot C_{\mathrm{vol}}} (nP)^{80} \sum_{\ell = 0}^R \mathcal W_\ell(G)) = \ab(\frac{n + \bit(\eps)}{\eps})^C
    \end{align*}
    where $C$ depends only on $\eta, d$ and $q$.
    Then we conclude the desired bit complexity of $\mathsf A_{\eta, d, q}$
\end{proof}

\section*{Acknowledgement}
We thank Guoliang Qiu and Chihao Zhang for helpful discussion on counting proper colorings of random graphs.
We also thank Lutz Warnke for pointing out appropriate references for Lagrange inversion.

\bibliographystyle{alpha}
\bibliography{refs}

\appendix
\section{Proof of Contractions} \label{sec:contraction-proof}

\subsection{Birkhoff contraction of Hilbert distance}
Now we show \Cref{lem:Birknoff-contraction}.
By \Cref{lem:TV-distance-by-Hilbert-distance}, for two distinct probability vectors $p, q$,
\begin{align*}
    \TV{p}{q} \le \tanh(\delta/4)
\end{align*}
where $\delta = d_H(p, q)$.
Define $r_k = p(k) - q(k)$.
For any real vector $h$, set $m = \min_k h_k$ and $M = \max_k h_k$.
It holds that $\sum_{r_k > 0} r_k = -\sum_{r_k < 0} r_k = \TV{p}{q}$.
Therefore,
\begin{align} \label{eq:inner-bound}
    \sum_k r_k h_k \le M \sum_{r_k > 0} r_k + m \sum_{r_k < 0} r_k = (M - m) \TV{p}{q}.
\end{align}
Since \eqref{eq:inner-bound} also holds for $-h$, it holds that
\begin{align} \label{eq:inner-absolute-bound}
    \abs{\sum_k (p(k) - q(k)) h_k} \le \tanh{(\delta/4)} \ab(\max_k h_k - \min_k h_k).
\end{align}
Set $h_k = \ln(x(k)/y(k))$ and $z(t)_k = y(k) \e^{t h_k}$.
For every $a$, define
$$
    F_a(t) = \ln (Az(t))_a, \quad p^{a, t}(k) = \frac{A(a, k) z(t)_k}{(Az(t))_a}.
$$
It holds that $p^{a, t}$ is a positive probability vector and $F_a'(t) = \sum_k p^{a, t}(k) h_k$.
Note that for $a, b$,
$$
    \frac{\max_i p^{a, t}(i)/p^{b, t}(i)}{\min_j p^{a, t}(j)/p^{b, t}(j)} = \max_{i, j} \frac{A(a, i) A(b, j)}{A(a, j) A(b, i)} \le \e^{\Delta_B(A)}.
$$
Then applying \eqref{eq:inner-absolute-bound} to $p^{a, t}$ and $p^{b, t}$ and integrating from $0$ to $1$, it holds that
\begin{align*}
    \abs{(F_a(1) - F_a(0)) - (F_b(1) - F_b(0))} &\le \int_0^1 \abs{F_a'(t) - F_b'(t)} \d{t} \\
    &\le \tanh(\Delta_B(A)/4) \ab(\max_k h_k - \min_k h_k) \\
    &= c_B(A) d_H(x, y).
\end{align*}
Observe that $F_a(1) - F_a(0) = \ln\frac{(Ax)_a}{(Ay)_a}$.
Thus we conclude the lemma.

\subsection{Contraction of Hilbert distance for list colorings}
In this part, we prove \Cref{lem:coloring-Hilbert-contraction}.
When $p = p'$, the upper bound is trivial.

Assume that $p \neq p'$.
Let $p(t) = (1 - t)p' + tp$ for $t \in [0, 1]$.
It holds that $p(t)$ is a positive probability vector on $\Omega$ and for every $c \in \Omega$, $p_c(t) \le 1/s$.
Let $d_c = p_c - p'_c$ and $x_c(t) = d_c / p_c(t)$.
Then it holds that $\sum_c p_c(t) x_c(t) = 0$ and therefore $\min_c x_c(t) \le 0 \le \max_c x_c(t)$.
Let
\begin{align*}
    D_{a, b}(t) \defeq \frac{\d}{\d{t}}\ab(\ln\frac{1 - p_a(t)}{1 - p_b(t)}) = -\frac{p_a(t)}{1 - p_a(t)} x_a(t) + \frac{p_b(t)}{1 - p_b(t)} x_b(t).
\end{align*}
Put $m(t) = \min_c x_c(t)$ and $M(t) = \max_c x_c(t)$.
Moreover, set
\begin{align*}
    \alpha = \frac{p_a(t)}{1 - p_a(t)}, \quad \beta = \frac{p_b(t)}{1 - p_b(t)}.
\end{align*}
It is clear that $\alpha, \beta \in [0, 1/(s - 1)]$ and therefore $\alpha x_a(t), \beta x_b(t) \in [m(t)/(s - 1), M(t)/(s - 1)]$ since $m(t) \le 0 \le M(t)$.
Then for every $a, b \in \Omega$ and $t \in [0, 1]$,
\begin{align*}
    \abs{D_{a, b}(t)} \le \frac{M(t) - m(t)}{s - 1}.
\end{align*}
Write $R_c \defeq p_c / p'_c$ for every $c \in \Omega$.
It holds that
\begin{align*}
    x_c(t) = \frac{d_c}{p_c(t)} = \frac{p_c - p'_c}{(1 - t)p'_c + t p_c} = \frac{R_c - 1}{t R_c + (1 - t)}.
\end{align*}
With the observation that $x_c(t)$ is increasing with $R_c$, we have
\begin{align*}
    \abs{\ln\frac{1 - p_a}{1 - p'_a} - \ln\frac{1 - p_b}{1 - p'_b}} &\le \frac{1}{s - 1} \int_0^1 (M(t) - m(t)) \d{t} \\
    &= \frac{1}{s - 1} \int_0^1 \ab(\frac{\max_c R_c - 1}{t \max_c R_c + (1 - t)} - \frac{\min_c R_c - 1}{t \min_c R_c + (1 - t)}) \d{t} \\
    &= \frac{1}{s - 1} \ab(\ln \max_c R_c - \ln \min_c R_c) \\
    &= \frac{d_H(p, p')}{s - 1}.
\end{align*}
where the last equality comes directly from the definition of $d_H(\cdot, \cdot)$.
\section{Parameter Choices for Random Graph Colorings}
\label{sec:coloring-parameter-choice}

We verify that all parameter estimates used in
\Cref{sec:random-graph-coloring} hold for a common threshold
$d_0(\eta)$, uniformly over $q \ge (2+\eta)d$.

\begin{lemma}[Parameter choice]\label{lem:coloring-parameter-choice}
For every fixed $\eta\in(0,1)$, there exists $d_0=d_0(\eta)\ge8$
such that, for every $d\ge d_0$ and every integer
$q\ge(2+\eta)d$, the parameters defined in
\eqref{eq:coloring-degree-parameters},
\eqref{eq:coloring-tail-parameters}, and
\eqref{eq:coloring-block-parameters} satisfy
\eqref{eq:coloring-threshold-estimates},
\eqref{eq:coloring-path-parameter-estimate},
\eqref{eq:coloring-component-parameter-estimate}, and
\eqref{eq:coloring-block-parameter-estimate}.
\end{lemma}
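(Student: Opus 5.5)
We write $q=(2+\theta)d$ with $\theta\ge\eta$, so that $\delta=\theta/6\ge\eta/6$. The plan is to verify each of the four estimates separately and to reduce it to an inequality that holds for all $d\ge d_0$. Each bound will be monotone in $\theta$ or in $q$, so a single $d_0(\eta)$ works uniformly over all $q\ge(2+\eta)d$. At the end we take $d_0(\eta)$ to be the maximum of the finitely many thresholds obtained below, together with $8$.

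\textbf{Estimate \eqref{eq:coloring-threshold-estimates}.} Since $D\ge (d+q)/3-1=d+\theta d/3-1$, the inequality $D-3>(1+\delta)d$ follows once $d+\theta d/3-4>d+\theta d/6$. This reduces to $\theta d>24$, which holds for $d>24/\eta$. For the second inequality, $D\le (d+q)/3<q-1$ is equivalent to $2q>d+3$, and this is immediate from $q\ge 2d$ and $d\ge 8$.

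\textbf{Estimate \eqref{eq:coloring-path-parameter-estimate}.} From $D\le(d+q)/3$ we get
\[
q-D-1\ge \frac{2q-d}{3}-1=\frac{(3+2\theta)d-3}{3},
\]
and hence
\[
d\gamma\le \frac{3d}{(3+2\theta)d-3}\le\frac{3d}{(3+2\eta)d-3}.
\]
As $d\to\infty$ the right-hand side tends to $1-\frac{2\eta}{3+2\eta}$. This is strictly smaller than $\lambda_0=1-\frac{\eta/2}{3+\eta}$, because $\frac{2\eta}{3+2\eta}>\frac{\eta}{2(3+\eta)}$. Fix the gap $g(\eta)=\frac{2\eta}{3+2\eta}-\frac{\eta}{2(3+\eta)}>0$. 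We choose $d$ large enough that the $-3$ correction costs at most $g/2$. Next, $h$ is increasing on $(0,\infty)$, so $r\le \e^{-d\,h(\eta/6)}$. Therefore $d\sqrt r\le d\,\e^{-d\,h(\eta/6)/2}$, which tends to $0$ and is below $g/2$ for large $d$. Adding the two pieces gives $d(\gamma+\sqrt r)\le\lambda_0$. The inequality $\lambda_0<\rho<1$ is immediate from the definitions.

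\textbf{Estimate \eqref{eq:coloring-component-parameter-estimate}.} Using the same bound on $r$, we have $\e d r^{1/4}\le \e d\,\e^{-d\,h(\eta/6)/4}$. This tends to $0$ as $d\to\infty$, so it is at most $1/4$ for large $d$.

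\textbf{Estimate \eqref{eq:coloring-block-parameter-estimate}.} We expect this to be the main obstacle, because $y=4dq^{200}\e^{-J/2}$ involves $q$, which is unbounded. The plan is to eliminate $d$ using $d\le q/(2+\eta)\le q/2$. Then
\[
J/2\ge q\Bigl(\tfrac{\ln 2}{2}-\tfrac{1}{2(2+\eta)}\Bigr)-\tfrac{\ln2}{2}\ge c_0q-1,
\]
where $c_0=\frac{\ln2}{2}-\frac14>0.09$. This gives $\e y\le 2\e^2 q^{201}\e^{-c_0 q}$. The right-hand side is eventually decreasing in $q$ and tends to $0$, so it is at most $1/4$ for all $q\ge q_0$, where $q_0$ is an absolute constant (large, since the exponent $201$ must be beaten). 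Because $q\ge 2d$, it suffices to have $d_0\ge q_0/2$, which makes the bound uniform in $q$.
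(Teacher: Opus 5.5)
Your proposal is correct and follows the paper's proof: you reduce each estimate to an $\eta$-dependent largeness condition on $d$, using $\delta\ge\eta/6$, monotonicity of $h$, and $q\ge 2d$, and take $d_0$ to be the maximum of these thresholds. The differences are cosmetic. The paper absorbs the $-1$ to get $d\gamma\le 3/(3+\eta)$ exactly, where you use a gap argument. For \eqref{eq:coloring-block-parameter-estimate}, the paper reduces to the extreme case $q=(2+\eta)d$ via monotonicity of $u\mapsto u^K\e^{-u\ln 2/2}$, whereas you eliminate $d$ via $d\le q/2$ and obtain an absolute threshold in $q$ from $2\ln 2>1$.
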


\begin{proof}
Recall that $K=200$ is an absolute constant.
Set $j_\eta=(2+\eta)\ln2-1>0$.
Choose $d_0\ge8$ so that, for every $d\ge d_0$,
\begin{equation}\label{eq:coloring-d-setting}
\begin{aligned}
    \eta d &\ge24,\\
    d\e^{-dh(\eta/6)/2} &\le \frac{\eta}{2(3+\eta)},\\
    4\e d\e^{-dh(\eta/6)/4} &\le1,\\
    d &\ge \frac{2K}{(2+\eta)\ln2},\\
    j_\eta d-\ln2
    &\ge 2\ln(16\e)+2(K+1)\ln d+2K\ln(2+\eta).
\end{aligned}
\end{equation}
Such a choice exists because $h(\eta/6)>0$ and $j_\eta>0$:
the exponential terms dominate the polynomial factors,
and the linear term in the last inequality dominates $\ln d$.
All these conditions depend only on $\eta$ and the fixed constant $K$.

First, $\delta\ge\eta/6$ and $\delta d\ge4$.
Since $(d+q)/3=(1+2\delta)d$, we have
\[
    D-3>(1+2\delta)d-4\ge(1+\delta)d.
\]
Moreover,
\[
    q-D-1
    \ge \frac{2q-d}{3}-1
    \ge \frac{(3+2\eta)d}{3}-1
    \ge \frac{(3+\eta)d}{3}.
\]
In particular, $D<q-1$ and
\[
    d\gamma\le\frac{3}{3+\eta}.
\]
This proves \eqref{eq:coloring-threshold-estimates}.

Next, $h'(x)=\ln(1+x)>0$ for $x>0$, so
$r\le\e^{-dh(\eta/6)}$.
Consequently,
\[
    d(\gamma+\sqrt r)
    \le \frac{3}{3+\eta}+\frac{\eta}{2(3+\eta)}
    =\lambda_0<\rho<1,
\]
and
\[
    \e d r^{1/4}
    \le \e d\e^{-dh(\eta/6)/4}
    \le\frac14.
\]
These are \eqref{eq:coloring-path-parameter-estimate}
and \eqref{eq:coloring-component-parameter-estimate}.

Finally, put $q_0=(2+\eta)d$.
By \eqref{eq:coloring-d-setting}, $q_0\ge2K/\ln2$.
The function $u\mapsto u^K\e^{-u\ln2/2}$ is nonincreasing
for $u\ge q_0$, since its logarithmic derivative is
$K/u-\ln2/2\le0$.
It follows that
\begin{align*}
    \e y
    &=4\e d\,\e^{(d+\ln2)/2}q^K\e^{-q\ln2/2}\\
    &\le4\e d\,((2+\eta)d)^K
          \e^{-(j_\eta d-\ln2)/2}\\
    &\le\frac14,
\end{align*}
where the last step uses the final inequality in
\eqref{eq:coloring-d-setting}.
This proves \eqref{eq:coloring-block-parameter-estimate}
with a threshold $d_0$ independent of $q$.
\end{proof}

\end{document}